\theoremstyle{plain}
\newtheorem{assumption}{\protect\assumptionname}
\theoremstyle{plain}
\newtheorem{prop}{\protect\propositionname}
\theoremstyle{plain}
\newtheorem{thm}{\protect\theoremname}
\theoremstyle{plain}
\theoremstyle{plain}
\newtheorem{lem}{\protect\lemmaname}
\theoremstyle{definition}
\newtheorem{defn}{\protect\definitionname}
\theoremstyle{remark}
\newtheorem{rem}{\protect\remarkname}
\theoremstyle{definition}
 \newtheorem{example}{\protect\examplename}
\newcommand{\customlabel}[2]{%
   \protected@write \@auxout {}{\string \newlabel {#1}{{#2}{\thepage}{#2}{#1}{}} }%
   \hypertarget{#1}{}
}
\providecommand{\assumptionname}{Assumption}
\providecommand{\corollaryname}{Corollary}
\providecommand{\definitionname}{Definition}
\providecommand{\examplename}{Example}
\providecommand{\lemmaname}{Lemma}
\providecommand{\propositionname}{Proposition}
\providecommand{\remarkname}{Remark}
\providecommand{\theoremname}{Theorem}
\begin{document}
\global\long\def\a{\alpha}%
 
\global\long\def\b{\beta}%
 
\global\long\def\g{\gamma}%
 
\global\long\def\d{\delta}%
 
\global\long\def\e{\epsilon}%
 
\global\long\def\l{\lambda}%
 
\global\long\def\t{\theta}%
 
\global\long\def\o{\omega}%
 
\global\long\def\s{\sigma}%

\global\long\def\G{\Gamma}%
 
\global\long\def\D{\Delta}%
 
\global\long\def\L{\Lambda}%
 
\global\long\def\T{\Theta}%
 
\global\long\def\O{\Omega}%
 
\global\long\def\R{\mathbb{R}}%
 
\global\long\def\N{\mathbb{N}}%
 
\global\long\def\Q{\mathbb{Q}}%
 
\global\long\def\I{\mathbb{I}}%
 
\global\long\def\P{\mathbb{P}}%
 
\global\long\def\E{\mathbb{E}}%
\global\long\def\B{\mathbb{\mathbb{B}}}%
\global\long\def\S{\mathbb{\mathbb{S}}}%
\global\long\def\V{\mathbb{\mathbb{V}}\text{ar}}%
 
\global\long\def\GG{\mathbb{G}}%
\global\long\def\TT{\mathbb{T}}%

\global\long\def\X{{\bf X}}%
\global\long\def\cX{\mathscr{X}}%
 
\global\long\def\cY{\mathscr{Y}}%
 
\global\long\def\cA{\mathscr{A}}%
 
\global\long\def\cB{\mathscr{B}}%
\global\long\def\cF{\mathscr{F}}%
 
\global\long\def\cM{\mathscr{M}}%
\global\long\def\cN{\mathcal{N}}%
\global\long\def\cG{\mathcal{G}}%
\global\long\def\cC{\mathcal{C}}%
\global\long\def\sp{\,}%

\global\long\def\es{\emptyset}%
 
\global\long\def\mc#1{\mathscr{#1}}%
 
\global\long\def\ind{\mathbf{\mathbbm1}}%
\global\long\def\indep{\perp}%

\global\long\def\any{\forall}%
 
\global\long\def\ex{\exists}%
 
\global\long\def\p{\partial}%
 
\global\long\def\cd{\cdot}%
 
\global\long\def\Dif{\nabla}%
 
\global\long\def\imp{\Rightarrow}%
 
\global\long\def\iff{\Leftrightarrow}%

\global\long\def\up{\uparrow}%
 
\global\long\def\down{\downarrow}%
 
\global\long\def\arrow{\rightarrow}%
 
\global\long\def\rlarrow{\leftrightarrow}%
 
\global\long\def\lrarrow{\leftrightarrow}%

\global\long\def\abs#1{\left|#1\right|}%
 
\global\long\def\norm#1{\left\Vert #1\right\Vert }%
 
\global\long\def\rest#1{\left.#1\right|}%

\global\long\def\bracket#1#2{\left\langle #1\middle\vert#2\right\rangle }%
 
\global\long\def\sandvich#1#2#3{\left\langle #1\middle\vert#2\middle\vert#3\right\rangle }%
 
\global\long\def\turd#1{\frac{#1}{3}}%
 
\global\long\def\ellipsis{\textellipsis}%
 
\global\long\def\sand#1{\left\lceil #1\right\vert }%
 
\global\long\def\wich#1{\left\vert #1\right\rfloor }%
 
\global\long\def\sandwich#1#2#3{\left\lceil #1\middle\vert#2\middle\vert#3\right\rfloor }%

\global\long\def\abs#1{\left|#1\right|}%
 
\global\long\def\norm#1{\left\Vert #1\right\Vert }%
 
\global\long\def\rest#1{\left.#1\right|}%
 
\global\long\def\inprod#1{\left\langle #1\right\rangle }%
 
\global\long\def\ol#1{\overline{#1}}%
 
\global\long\def\ul#1{\underline{#1}}%
 
\global\long\def\td#1{\tilde{#1}}%
\global\long\def\bs#1{\boldsymbol{#1}}%

\global\long\def\upto{\nearrow}%
 
\global\long\def\downto{\searrow}%
 
\global\long\def\pto{\overset{p}{\longrightarrow}}%
 
\global\long\def\dto{\overset{d}{\longrightarrow}}%
 
\global\long\def\asto{\overset{a.s.}{\longrightarrow}}%

\setlength{\abovedisplayskip}{6pt} \setlength{\belowdisplayskip}{6pt}
\title{ReLU-Based and DNN-Based \\Generalized Maximum Score Estimators\thanks{We thank Karun Adusumilli, Joel Horowitz, Simon Lee, Charles Manski, Elie Tamer, Yuanyuan Wan, and seminar participants at Columbia and Penn for helpful comments and suggestions.}}
\author{Xiaohong Chen\thanks{Chen: Department of Economics and Cowles Foundation for Research in Economics,
Yale University, 28 Hillhouse Ave, New Haven, CT 06511, USA, xiaohong.chen@yale.edu}$\ $, Wayne Yuan Gao\thanks{Gao: Department of Economics, University of Pennsylvania, 133 S 36th
St., Philadelphia, PA 19104, USA, waynegao@upenn.edu.}$\ $ and Likang Wen\thanks{Wen: Department of Applied Mathematics and Statistics, Johns Hopkins University, 3400 N Charles St,
Baltimore, MD 21218, USA, lwen11@jh.edu.}\textbf{}\\
\textbf{~}}
\maketitle


\begin{abstract}

\noindent We propose a new formulation of the maximum score estimator
that uses compositions of rectified linear unit (ReLU) functions, instead of indicator functions as in \citet*{manski1975maximum,manski1985semiparametric},
to encode the sign alignment restrictions. Since the ReLU function
is Lipschitz, our new ReLU-based maximum score criterion function is substantially easier to optimize using standard 
gradient-based optimization pacakges. We also show that our ReLU-based maximum score (RMS) estimator can be generalized to an umbrella framework defined by multi-index single-crossing (MISC) conditions, while the original maximum score estimator cannot be applied. We establish the $n^{-s/(2s+1)}$ convergence rate and asymptotic normality for the RMS estimator under order-$s$ Holder smoothness. In addition, we propose an alternative estimator using a further reformulation of RMS as a special layer in a deep neural network (DNN) architecture, which allows the estimation procedure to be implemented via state-of-the-art software and hardware for DNN. \\ 
\textbf{Keywords:} semiparametric estimation, maximum score, discrete choice, rectified linear unit, deep neural network, multi-index
\end{abstract}

\section{\label{sec:Intro}Introduction}

In a sequence of papers, \citet*{manski1975maximum,manski1985semiparametric}
proposed and analyzed the properties of the \emph{maximum-score estimator}
in the context of semiparametric discrete choice models. To be specific,
consider the following canonical binary choice model
\begin{equation}
y_{i}=\ind\left\{ X_{i}^{'}\t_{0}\geq\e_{i}\right\} \label{eq:BinChoice}
\end{equation}
under the conditional median restriction $\text{med}\left(\rest{\e_{i}}X_{i}\right)=0$.
The key idea underlying the maximum score estimator is to exploit
the following identifying restriction,
\begin{equation}
h_{0}\left(X_{i}\right):=\E\left[\rest{y_{i}-\frac{1}{2}}X_{i}\right]\gtrless0\quad\iff\quad X_{i}^{'}\t_{0}\gtrless0,\label{eq:Mono_Equiv}
\end{equation}
which is a \emph{sign alignment} restriction between the function
$h_{0}$ and the the parametric index $X_{i}^{'}\t_{0}$. \citet*{manski1975maximum,manski1985semiparametric,manski1987semiparametric}
encodes this sign alignment restriction into the following population
criterion function, 
\begin{align}
Q_{MS}\left(\t\right) & :=\E\left[h_{0}\left(X_{i}\right)\ind\left\{ X_{i}^{'}\t>0\right\} \right]=\E\left[\left(y_{i}-\frac{1}{2}\right)\ind\left\{ X_{i}^{'}\t>0\right\} \right],\label{eq:Q_MS}
\end{align}
which is constructed by multiplying the function $h_{0}$ with an
indicator function of the index $X_{i}^{'}\t_{0}$ along with the
Law of Iterated Expectation. Then, $\t_{0}$ is a maximizer of $Q_{MS}\left(\t\right)$.

To see more clearly why $\t_{0}$ maximizes $Q_{MS}\left(\t\right)$,
consider the following decomposition $h_{0}\left(X_{i}\right)\equiv\left[h_{0}\left(X_{i}\right)\right]_{+}-\left[-h_{0}\left(X_{i}\right)\right]_{+},$
where $\left[t\right]_{+}:=\max\left(t,0\right)$ denotes the rectified
linear unit (ReLU) function. Then, the population criterion $Q_{MS}$
can be correspondingly decomposed as $Q_{MS}\left(\t\right)=Q_{MS+}\left(\t\right)+Q_{MS-}\left(\t\right)$
with
\begin{align*}
Q_{MS+}\left(\t\right) & =\E\left[\left[h_{0}\left(X_{i}\right)\right]_{+}\ind\left\{ X_{i}^{'}\t>0\right\} \right]\leq\E\left[\left[h_{0}\left(X_{i}\right)\right]_{+}\right]=Q_{MS+}\left(\t_{0}\right),\\
Q_{MS-}\left(\t\right) & =-\E\left[\left[-h_{0}\left(X_{i}\right)\right]_{+}\ind\left\{ X_{i}^{'}\t>0\right\} \right]\leq0=Q_{MS-}\left(\t_{0}\right).
\end{align*}
In words, the multiplication of $h_{0}$ with the indicator on $X_{i}^{'}\t_{0}$
precisely extracts the positive part of $h_{0}$ at the true $\t_{0}$,
and hence $Q_{MS}\left(\t\right)\leq\E\left[\left[h_{0}\left(X_{i}\right)\right]_{+}\right]=Q_{MS}\left(\t_{0}\right).$

In this paper, we propose a different population criterion function
that encodes exactly the same sign alignment restriction \eqref{eq:Mono_Equiv}
above. However, instead of using multiplication with indicator functions
on $X_{i}^{'}\t_{0}$ as in \eqref{eq:Q_MS}, our new formulation
employs compositions of ReLU functions. Specifically, define
\begin{align}
g_{+,\t,h}\left(x\right):=\left[h\left(x\right)-\left[-x^{'}\t\right]_{+}\right]_{+},\quad & g_{-,\t,h}\left(x\right):=\left[-h\left(x\right)-\left[x^{'}\t\right]_{+}\right]_{+},\label{eq:def_g}
\end{align}
with $Q_{+}\left(\t\right):=\E\left[g_{+,\t,h_{0}}\left(X_{i}\right)\right]$,
$Q_{-}\left(\t\right):=\E\left[g_{-,\t,h_{0}}\left(X_{i}\right)\right]$,
and 
\begin{equation}
Q\left(\t\right):=Q_{+}\left(\t\right)+Q_{-}\left(\t\right),\label{eq:def_Q}
\end{equation}
Clearly, both $g_{+}$ and $g_{-}$, and thus $Q_{+}$ and $Q_{-}$,
are by construction nonnegative. 

To see why $\t_{0}$ is also a maximizer of $Q_{MS}\left(\t\right)$,
first consider the case when $h_{0}\left(X_{i}\right)>0$. By \eqref{eq:Mono_Equiv},
\begin{align*}
h_{0}\left(X_{i}\right)>0\  & \iff\ X_{i}^{'}\t_{0}>0\ \iff\ \left[-X_{i}^{'}\t_{0}\right]_{+}=0\ \imp\ h_{0}\left(X_{i}\right)-\left[-X_{i}^{'}\t_{0}\right]_{+}=\left[h_{0}\left(X_{i}\right)\right]_{+},
\end{align*}
and thus
\[
0\leq g_{+,\t,h_{0}}\left(X_{i}\right)=\left[h_{0}\left(X_{i}\right)-\left[-X_{i}^{'}\t\right]_{+}\right]_{+}\leq\left[h_{0}\left(X_{i}\right)\right]_{+}=g_{+,\t_{0},h_{0}}\left(X_{i}\right).
\]
Furthermore, when $h_{0}\left(X_{i}\right)>0$, the negative part
degenerates to 0, i.e.,
\[
g_{-,\t,h_{0}}\left(x\right)=\left[-h_{0}\left(X_{i}\right)-\left[X^{'}\t\right]_{+}\right]_{+}\equiv0,
\]
regardless of the parameter value $\t$. Similarly, the opposite holds
for the case of $h_{0}\left(X_{i}\right)<0$. Together, we have
\begin{align*}
Q_{+}\left(\t\right) & =\E\left[\left[h_{0}\left(X_{i}\right)-\left[-X_{i}^{'}\t\right]_{+}\right]_{+}\right]\leq\E\left[\left[h_{0}\left(X_{i}\right)\right]_{+}\right]=Q_{+}\left(\t_{0}\right),\\
Q_{-}\left(\t\right) & =\E\left[\left[-h_{0}\left(X_{i}\right)-\left[X_{i}^{'}\t\right]_{+}\right]_{+}\right]\leq\E\left[\left[-h_{0}\left(X_{i}\right)\right]_{+}\right]=Q_{-}\left(\t_{0}\right),
\end{align*}
which implies that $Q\left(\t\right)\leq\E\left[\left|h_{0}\left(X_{i}\right)\right|\right]=Q\left(\t_{0}\right).$
Hence, our ReLU-based criterion $Q$, even though different from the
original maximum score criterion $Q_{MS}$ above, also incorporates
the identifying restriction about $\t_{0}$ and can thus serve as
a valid population criterion. 

More generally, in a $J$-index setting we let
\[
X_i := (X_{i1},\dots,X_{iJ}) \in \mathcal{X} \subset \mathbb{R}^{d\times J},
\]
and write $x=(x_1,\dots,x_J)$ for a generic realization. For a generic function
$h:\mathcal{X}\to\mathbb{R}$ and direction $\theta\in\Theta\subset\mathbb{S}^{d-1}$,
we define
\begin{align}
g_{+,\theta,h}(x_1,\dots,x_J)
&:= \biggl[h(x_1,\dots,x_J) - \Bigl(\min_{1\le j\le J}(-x_j'\theta)_+\Bigr)\biggr]_+,
\nonumber \\
g_{-,\theta,h}(x_1,\dots,x_J)
&:= \biggl[-h(x_1,\dots,x_J) - \Bigl(\min_{1\le j\le J}(x_j'\theta)_+\Bigr)\biggr]_+.
\label{eq:def_g_J}
\end{align}
The corresponding $J$-index RMS population criterion is $Q_J(\theta) := Q_J^+(\theta)+Q_J^-(\theta)$ with
\begin{equation}
Q_J^+(\theta) := E\bigl[g_{+,\theta,h_0}(X_i)\bigr],\qquad
Q_J^-(\theta) := E\bigl[g_{-,\theta,h_0}(X_i)\bigr].
\label{eq:def_Q_J}
\end{equation}
In the single-index case $J=1$, $X_i$ reduces to a single vector $X_i\in\mathbb{R}^d$,
$g_{+,\theta,h}$ and $g_{-,\theta,h}$, and $Q_J(\theta)$ reduce to those defined in \eqref{eq:def_g} and \eqref{eq:def_Q}.

The main focus of this paper is to show how this new ReLU-based population
criterion $Q$, as defined by \eqref{eq:def_g}-\eqref{eq:def_Q_J},
can be used for the identification, estimation and inference of $\t_{0}$, and demonstrate that this new approach relates to, differs from, and
improves upon the existing approach based on $Q_{MS}$.

~

We first focus on the binary choice
setting in Section \ref{sec:BinChoice}, which is not only a topic of important
interest on its own, but also serves as a canonical setup where our
new ReLU-based estimator can be related to the original maximum score
(MS) estimator and its previous variants in a clear manner. 

Under the binary choice setting, we propose the \emph{ReLU-based maximum
score} (RMS) estimator as a semiparametric two-stage M-estimator based
on the population criterion $Q$. Specifically, in the first stage,
we obtain an estimator $\hat{h}$ of $h_{0}$ via nonparametric regression
of $Y_{i}-\frac{1}{2}$ on $X_{i}$. Then, we define the sample criterion
function $\hat{Q}$ as the sample analog of $Q$ with $\hat{h}$ plugged
in for $h_{0}$, and obtain the RMS estimator $\hat{\t}$ as the maximizer
of the sample criterion function $\hat{Q}$ in the second stage. We
establish the convergence rate and asymptotic normality for the RMS estimator under lower-level conditions on
the primitives of the binary choice model, with $\hat{h}$ given by
kernel or linear series estimators. 

In particular, we show that, under appropriate conditions, the RMS
estimator is asymptotically normal with rate of convergence as fast as $n^{-\frac{s}{2s+1}}$ (with $s$ being the imposed order of smoothness).
This rate is slower than the $\sqrt{n}$ rate but faster than the
$n^{1/3}$-rate of the original MS estimator \citep{kim1990cube},
and it coincides with the rate of the \emph{smoothed maximum score}
(SMS) estimator in \citet*{horowitz1992smoothed}. The RMS and SMS
estimators are conceptually similar in the sense that both exploit
additional smoothness conditions (on $h_{0}$, in particular) relative
to the original MS estimator, which leads to the accelerated convergence
rates.\footnote{Recall also from \citet*{horowitz1992smoothed} that the rate $n^{-\frac{s}{2s+1}}$
cannot be further improved upon in the minimax sense.} However, the asymptotic theory of the RMS estimator differs significantly
from that for the SMS estimator given the very different forms of
population and sample criterion functions involved. 

In particular, the intermediate level of (non-)smoothness in the ReLU
function turns out to be a key driver of the asymptotic behavior of
the RMS estimator. First, the ``kink'' of the ReLU function at 0
(or more precisely, a non-zero first-order derivative from one side)
is essential for the locally quadratic curvature of the population
criterion function around the true parameter $\t_{0}$. Second, the
Lipschitz continuity of ReLU functions, in contrast with the discontinuous
indicator function, translates small deviations into small deviations,
which is key for a stochastic equicontinuity condition that reduces
the impact of the first-stage nonparametric estimation errors on the
second stage and helps with the convergence rate as well as the asymptotic
normality (instead of a Chernoff-type asymptotic distribution). Third,
the almost-everywhere differentiability of the ReLU function enables
the characterization of the leading term in the asymptotic analysis
as a plug-in estimator of an integration functional of the nonparametric
function $h_{0}\left(x\right)$ over a $\left(d-1\right)$-dimensional
hyperplane (with $d$ being the dimension of $X_{i}$, i.e., the dimension
of the first-stage nonparametric estimation of $h_{0}$). This integral
averages the first-stage estimation error in $\hat{h}$ over a $\left(d-1\right)$-dimensional
space, thus accelerating the convergence to the rate of 1-dimensional
nonparametric estimation, which is the fundamental driver of the final
$n^{-\frac{s}{2s+1}}$ rate of the RMS estimator.\footnote{Relatedly, the asymptotic theory of the SMS estimator \citep{horowitz1992smoothed}
is also driven by the convergence rate of 1-dimensional nonparametric
(kernel) estimation.}

We then (in Section~3) generalize the RMS estimator to an umbrella econometric framework characterized by multi-index single-crossing (MISC) conditions proposed in \cite{gao2020robust}. We show that MISC conditions arise naturally in a wide range of econometric models, and are particularly powerful in multi-index discrete choice and panel multinomial choice settings. In particular, the MISC framework underlies the identification and estimation strategy in \cite{gao2020robust} and \cite*{gao2023logical}, where multi-index single-crossing restrictions are exploited to obtain semiparametric identification in panel multinomial choice models. Our analysis provides a complementary perspective by showing how ReLU-based maximum score ideas can be embedded in the MISC framework and extended to a broad class of models beyond the binary choice benchmark.

Beyond the traditional two-step semiparametric implementation, we also show in Section~\ref{sec:NN} how the RMS/MISC framework can be embedded in a multi-layer neural network architecture. In particular, we construct a special ``RMS layer'' that takes as input a flexible first-stage network $h(x)$ and a low-dimensional direction $\theta$, and applies the composite ReLU transformation that encodes the sign-alignment or MISC restriction. This provides a concrete example of how
economically meaningful low-dimensional parameters can be built into (and estimated within) deep neural networks (DNN) using standard machine learning toolkits. In this way, the paper speaks directly to the broader literatures on interpretable deep learning, by demonstrating how modern neural networks can be used to capture rich nonparametric structure without sacrificing identification for the structural index parameter.


~

Our paper contributes directly to the econometric literature on maximum
score (MS) estimators, dating back to \citet{manski1975maximum,manski1985semiparametric},
and \citet*{kim1990cube}. Of particular relevance is the line of
research on the variants of the MS estimator with different forms
of smoothing. To our best knowledge, our paper is the first to propose
the ReLU-based formulation introduced above, which builds an intermediate
level of smoothness directly into the population criterion. Previously,
\citet*{horowitz1992smoothed} proposes the SMS estimator, where the
indicator function in the MS (sample) criterion is replaced by a smooth
sigmoid function with a bandwidth parameter, and establishes the accelerated
convergence rate and asymptotic normality of the SMS estimator. \citet*{blevins2013local}
works with a local nonlinear least square formulation of the SMS estimator,
and uses debiasing techinques to obtain the SMS convergence rate. \citet*{chen2015binary}
reformulates the sign alignment restriction as a local conditional
moment condition and proposes a corresponding estimator based on local
polynomial smoothing. \citet*{jun2017integrated} considers the integrated
score estimator, a quasi-Bayes estimator where smoothing is achieved
through integration of the MS criterion. Another set of related work
focuses on the inference problem, given that standard bootstrap is
known to be invalid for the MS estimator \citep*{abrevaya2005bootstrap}:
\citet*{horowitz2002bootstrap} establishes bootstrap consistency for
the SMS estimator, \citet*{patra2018consistent} formulates a smoothed
bootstrap procedure for the MS estimator using a semiparametric two-stage
estimator to center the bootstrap samples,\footnote{This semiparametric two-stage estimator in \citet*{patra2018consistent},
defined in their equation (5), utilizes a first-stage nonparametric
estimation of $h_{0}$, which is plugged in along with a nonparametric
density estimator to obtain an integrated estimator of the MS population
criterion function. However, this estimator is then used for the bootstrap
of the original MS estimator, and its properties were not fully developed
in \citet*{patra2018consistent}. } while \citet*{cattaneo2020bootstrap} proposes an alternative approach
to obtain bootstrap consistency by modifying an asymptotically non-random
component of the MS sample criterion. None of the papers cited above
considers our ReLU-based formulation. As discussed above, this new formulation not only leads to a ``more
smooth'' population criterion function that provides both theoretical
and computational advantages, but also greatly generalizes the scope
of applications to which the key idea of maximum score estimation
can be applied.

This paper also builds upon and contributes to the long line of econometric
literature on semiparametric M estimation and inference: see, for
example, \citet*{newey1994large}, \citet*{chen2007sieve}, \citet*{ichimura2007implementing},
and \citet*{kosorok2007introduction} for general surveys on this
topic. In particular, this paper is related to previous work that
analyzes nonsmooth criterion functions, such as \citet*{chen2003estimation},
\citet*{ichimura2010characterization,ichimura2018corrigendum}, \citet*{seo2018local},
and \citet*{delsol2020semiparametric}. A distinct feature of this
paper is the intermediate level of smoothness (``Lipschitz with a
kink'') of the ReLU function leads to the intermediate convergence
rate of the RMS estimator, which is faster than the cubic-root-or-slower
rates obtained in \citet*{kim1990cube}, \citet*{seo2018local} and
the example considered in \citet*{delsol2020semiparametric} (with
``less smooth'' criterion functions), but slower than the root-$n$
rate considered by \citet*{chen2003estimation} and \citet*{ichimura2010characterization,ichimura2018corrigendum}
(with ``more smooth'' criterion functions). More specifically, we show how the ``Lipschitz-with-a-kink'' property of the ReLU function leads to a characterization of the leading term in the RMS asymptotics as a nonparametric plug-in estimator of a lower-dimensional integral functional, and how this lower-dimensional integral becomes the key driver of the final intermediate convergence rate. Our results on the convergence of nonparametric integral functionals over lower-dimensional hyperplanes are of independent interest, which is closely related to the general theory of semiparametric learning of integral functionals on submanifolds developed in \cite{chen2025semiparametric}, which explicitly relates the convergence rate to the dimension of the underlying submanifold. Our contribution also supplements related work in the statistics literature on the estimation of integrals on level sets, which mostly focus on kernel regressions \citep*{dau2020exact} or density estimation \citep*{qiao2021nonparametric}.

Our DNN-based maximum score estimator under the MISC condition framework also  speaks directly to the broader machine learning literature on interpretability of deep neural networks (DNN). Surveys such as \cite{fan2021interpretability}
and \cite{zhang2021survey} review a wide range of interpretability tools, which mostly focus on explaining \emph{predictions} or internal
representations, but not on identifying or conducting inference on structural low-dimensional parameters inside a network.  In this sense, the DNN-based MISC estimator offer a way to bridge the gap between the interpretability and uncertainty literatures in deep learning and the semiparametric inference literature in econometrics. They allow researchers to use modern DNN to capture rich nonlinearities and heterogeneity in
the data, while still retaining (i) an interpretable, low-dimensional parameter $\theta$ that encodes economically meaningful structure, and (ii) a rigorous large-sample theory that supports conventional confidence intervals and hypothesis tests for that parameter.


~

The rest of the paper is organized as follows. Section~\ref{sec:BinChoice}
introduces the RMS estimator in the binary choice model, develops the basic
identification and asymptotic theory, and compares RMS to the original and
smoothed maximum score estimators. Section~\ref{sec:MISC} embeds the binary
choice setup into the general multi-index single-crossing framework, extends
the RMS criterion to the $J$-index case, and derives the corresponding
asymptotic results, highlighting the effective one-dimensional nature of the
rate. Section \ref{sec:NN} further reformulates the RMS as a specialized layer in a DNN, which allows the estimation of the index parameter to be subsumed under the training of the DNN, for which state-of-art computing software on DNN become applicable.  Section~\ref{sec:Sim} presents simulation evidence on the finite-sample
performance of the RMS estimator in both single-index and multi-index designs.
Section~\ref{sec:Con} concludes. Technical proofs and additional auxiliary
results are collected in the appendix.

\section{\label{sec:BinChoice}Special Case: Binary Choice Model}

In this section, we focus on the binary choice model \eqref{eq:BinChoice}
as described in the introduction, and develops the econometric theory
of our ReLU-based maximum score (RMS) estimator with clear lower-level
conditions on the primitives of the model. The binary choice model
is not only of important interest on its own, but also serves as a
canonical setup where our new ReLU-based estimator can be related
to the original maximum score (MS) estimator and its previous variants
in a clear manner.

\subsection{\label{subsec:SetupResults}Setup and Main Results}

Given the binary choice model \eqref{eq:BinChoice} and the ReLU-based
population criterion function $Q$ in \eqref{eq:def_Q}, we define
the ReLU-based maximum score (RMS) estimator as 
\begin{equation}
\hat{\t}:=\arg\max_{\t\in\S^{d-1}}\hat{Q}\left(\t\right)\label{eq:RMS}
\end{equation}
where the sample criterion function $\hat{Q}$ is given by
\[
\hat{Q}\left(\t\right):=\frac{1}{n}\sum_{i=1}^{n}\left(g_{+,\t,\hat{h}}\left(X_{i}\right)+g_{-,\t,\hat{h}}\left(X_{i}\right)\right)
\]
with $\hat{h}$ being some first-stage nonparametric estimator of
$h_{0}\left(x\right)=\E\left[\rest{y_{i}-\frac{1}{2}}X_{i}=x\right]$.
We seek to characterize the asymptotic behaviors of the RMS estimator
$\hat{\t}$.

It turns out that the $\hat{\t}$ is ``non-standard'' semiparametric
two-stage M-estimator, and is different from both the usual ``$\sqrt{n}$-normal''
asymptotics in the ``smooth case'' (such as in \citealp{newey1994asymptotic})
and the ``cubic-rate'' asymptotics in \citet*{kim1990cube}. 

As we will show subsequently, the ReLU-based maximum score estimator
will feature ``intermediate'' asymptotics (under appropriate conditions
to be made explicit later): $\hat{\t}$ will converge at nonparametric
rates slower than $n^{\frac{1}{2}}$ but faster than $n^{\frac{1}{3}}$
with asymptotic normal distribution, which can be viewed as a ``semiparametric
two-stage version'' of the asymptotic results in \citealp*{horowitz1992smoothed}. 

In particular, the ``intermediate asymptotics'' of ReLU-based maximum
score estimator $\hat{\t}$ is critically driven by the ``intermediate
smoothness'' allowed by the formulation of the criterion function
\eqref{eq:def_Q} using the the ReLU function $\left[\cd\right]_{+}$,
which is Lipschitz continuous and everywhere differentiable \emph{except}
at the single ``kink point'' $0$. Interestingly, both the ``smoothness''
and ``kinkiness'' of the ReLU function turns out to be important:
while the Lipschitz continuity of the ReLU function is key in delivering
a ``stochastic equicontinuity'' condition for asymptotic normality,
and the ``kinkiness'' of the ReLU function at $0$ is key in delivering
locally quadratic identification of $\t_{0}$, i.e., the quadratic
curvature of the population criterion function $Q$ in a neighborhood
of $\t_{0}$.

~

We start by imposing a set of lower-level assumptions that guarantees
the point identification of $\t_{0}$ (under scale normalization)
by the RMS criterion function \eqref{eq:def_Q} and that a variety
of densities are smooth and well-behaved. We note that these assumptions
are stronger than necessary, but lend simplicity to the exposition
of our main results. 
\begin{assumption}
\label{assu:Basic} Write ${\cal X}:=\text{Supp}\left(X_{i}\right)\subseteq\R^{d}$.
Suppose $\t_{0}\in\S^{d-1}$ and the following:
\begin{itemize}
\item[(a)]  $\left(y_{i},X_{i},\e_{i}\right)_{i=1}^{n}$ is i.i.d. and satisfies
model \eqref{eq:BinChoice}.
\item[(b)]  The conditional median of $\e_{i}$ given $X_{i}=x$ is zero, i.e.,
\[
F\left(\rest 0x\right)=\frac{1}{2},\quad\forall x\in{\cal X}.
\]
\item[(c)]  The (unknown) conditional CDF $F\left(\rest{\e}x\right)$ of $\e_{i}$
given $X_{i}=x$ is $d$ times continuously differentiable w.r.t.
$\left(\e,x\right)\in\R\times{\cal X}$ with uniformly bounded derivatives
(bounded by some positive constant $M<\infty$).
\item[(d)]  The conditional probability density function $f\left(\rest{\e}x\right)$
of $\e_{i}$ given $X_{i}=x$ is strictly positive for any $\e\in\R$
and $x\in{\cal X}$. 
\item[(e)]  Furthermore, there exists a finite $M>0$ such that 
\[
0<\frac{1}{M}\leq f\left(\rest 0x\right)\leq M,\quad\text{for all }x\in{\cal X}.
\]
\item[(f)]  ${\cal X}$ is compact in $\R^{d}$ and contains ${\bf 0}$ as an
interior point. WLOG assume $\norm{x}\leq 1, \forall x\in  {\cal X}$.
\item[(g)]  Let $p\left(x\right)$ be the probability density function of $X_{i}$.
There exists a finite $M>0$ such that 
\[
0<\frac{1}{M}\leq p\left(x\right)\leq M,\quad\text{for all }x\in{\cal X}.
\]
\end{itemize}
\end{assumption}
Assumption \ref{assu:Basic}(a) and (b) consists of a standard random-sampling
assumption for the binary choice model \eqref{eq:BinChoice} with
a conditional median restriction, which are essentially the same as
those imposed in \citet*{horowitz1992smoothed}. Note, however, we
focus on the binary choice model here as a key illustration, but,
just as maximum score estimator can be applied to many models other
than the binary choice model \eqref{eq:BinChoice}, our proposed method
can also be adapted to other settings. See XXX for a more detailed
discussion.

Assumption \ref{assu:Basic}(c)-(e) are regularity conditions on the
conditional distribution of the error term $\e_{i}$, which correspond
to Assumptions 2(b), 9 and 11 in \citet*{horowitz1992smoothed}. The
assumptions of the existence and boundedness (from above) of conditional
densities and their derivatives impose smoothness conditions on model
\eqref{eq:BinChoice} and the conditional expectation function $h_{0}\left(x\right)$
beyond \citet{manski1985semiparametric} and \citet{kim1990cube}.
As in \citet*{horowitz1992smoothed}, these smoothness conditions
are exploited to deliver faster convergence rates than the cubic rate
as well as asymptotic normality. Note that the ``bounded away from
zero'' assumption $f\left(\rest 0x\right)>\frac{1}{M}$ is a local-identification
assumption that deliver the quadratic curvature in the population
criterion function, which is imposed implicitly in Assumption 11 of
\citet*{horowitz1992smoothed}.

Assumption \ref{assu:Basic}(f) imposes assumption on ${\cal X}$,
the support of the covariates $X_{i}$. In particular, the assumption
of ${\cal X}$ containing ${\bf 0}$ as an interior point guarantees
that $X_{i}$ has full ``directional'' support, i.e. $X_{i}/\norm{X_{i}}$
is supported on the whole $\S^{d-1}$. As explained in \citet*{manski1985semiparametric},
the identification of $\t_{0}$ is driven by variations in the ``directions''
$X_{i}/\norm{X_{i}}$, and the full-directional-support condition
ensures that $\t_{0}$ is point identified on $\S^{d-1}$. As well-known
in the literature, the assumption of ${\bf 0}$ being in the interior
of ${\cal X}$ is a sufficient, but not necessary, condition for the point identification
of $\t_{0}$. Alternatively, one could work with a ``special regressor''
as in Assumptions 2(a)(c) \& 4 in \citet*{horowitz1992smoothed},
which assume that $\left|\b_{01}\right|=1$ and that the conditional
distribution of $X_{i1}$ given $\left(X_{i2},...,X_{id}\right)$
has full support on $\R$. This alternative set of assumptions allows
for discreteness in certain components of $X_{i}$ but rules out compactness
of ${\cal X}$, and thus do not nest Assumptions \ref{assu:Basic}(f)(g)
as special cases, nor vice versa. Furthermore, the scale normalization
$\left|\b_{01}\right|=1$ is dependent on the assumption that a specific
known component of $X_{i}$ has non-zero coefficient. In this paper,
we focus on the normalization $\b_{0}\in\R^{d}$, i.e., $\norm{\b_{0}}=1$
and the support Assumptions \ref{assu:Basic}(f)(g), which lends simpler
notation in our asymptotics. However, the substance of our asymptotic
results is not dependent on this specific choice of point-identifying
assumption and scale normalization, and it should be feasible, though
notationally cumbersome, to adapt our asymptotic results to the set
of assumption and normalization using the ``special regressor''
as in \citet*{horowitz1992smoothed}.

We also, note that the assumption of compactness of ${\cal X}$ in
Assumption \ref{assu:Basic}(f) is not necessary, either. Compactness
of ${\cal X}$ is often assumed in the literature, and assumed here
for simpler exposition of results on the nonparametric estimation
of $h_{0}\left(x\right)$. Hence, our results based on the consistency
and convergence rate of nonparametric estimation of $h_{0}$ on compact
${\cal X}$ can be adapted to the case where ${\cal X}$ is not compact
with standard trimming and/or weighting of ${\cal X}$. 

Lastly, Assumption \ref{assu:Basic}(g) corresponds to Assumptions
8 and 11 in \citet*{horowitz1992smoothed}, imposing both smoothness
(in terms of bounded-from-above densities) and local-identification
conditions (in terms of bounded-away-from zero densities). Again,
Assumption \ref{assu:Basic}(g) is stated in a stronger-than-necessary
but expositionally simple form. In particular, for local identifiability
it is not necessary to require that $p\left(x\right)$ is bounded
away from zero at every point in ${\cal X}$, since local identifiability
is only concerned with the hyperplane $\left\{ x:x^{'}\t_{0}=0\right\} $,
not the entire ${\cal X}$. However, given imposed compactness of
${\cal X}$ in Assumption \ref{assu:Basic}(f), the global ``bounded-away-from-zero''
condition here is not very restrictive anyway, and hence we impose
this stronger-than-necessary condition for simpler notation.

We summarize two important implications of Assumption \ref{assu:Basic}
below:
\begin{prop}
\label{prop:ID_Sobolev}Under Assumption \ref{assu:Basic}:
\begin{itemize}
\item[(i)]  $\t_{0}$ is point identified on $\S^{d-1}$: 
\begin{equation}
\t_{0}=\arg\max_{\t\in\S^{d-1}}Q\left(\t\right).\label{eq:PointID}
\end{equation}
\item[(ii)]  $h_{0}\left(x\right)$ is $\left(d+1\right)$ times differentiable
on ${\cal X}$ with uniformly bounded derivatives.
\end{itemize}
\end{prop}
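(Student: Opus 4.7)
I handle the two claims separately. For part (i), I leverage the pointwise inequality already established in the introduction and reduce point identification to a strict-inequality argument on a sign-misalignment set. For part (ii), I write $h_0(x) = F(x'\t_0 \mid x) - \tfrac{1}{2}$ and apply the multivariate chain rule.

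\textbf{Part (i).} The decomposition argument in the introduction already gives $Q(\t) \le \E[|h_0(X_i)|] = Q(\t_0)$ for every $\t \in \S^{d-1}$, so it suffices to prove the strict inequality $Q(\t) < Q(\t_0)$ whenever $\t \in \S^{d-1}\setminus\{\t_0\}$. Introduce the pointwise gaps
\[
\Delta_+(x,\t) := [h_0(x)]_+ - g_{+,\t,h_0}(x), \qquad \Delta_-(x,\t) := [-h_0(x)]_+ - g_{-,\t,h_0}(x),
\]
which are by construction non-negative. A direct case split on the sign of $h_0(x)$ shows that $\Delta_+(x,\t) = \min\{h_0(x),\,-x'\t\} > 0$ exactly on $\{h_0 > 0,\ x'\t < 0\}$, and symmetrically that $\Delta_-(x,\t) > 0$ exactly on $\{h_0 < 0,\ x'\t > 0\}$. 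Invoking the sign-alignment equivalence \eqref{eq:Mono_Equiv} (itself a consequence of Assumption~\ref{assu:Basic}(b) and the strict monotonicity of $F(\cdot\mid x)$ implied by $f > 0$), these sets coincide up to null sets with the open cones $A_\t := \{x'\t_0 > 0,\ x'\t < 0\}$ and $B_\t := \{x'\t_0 < 0,\ x'\t > 0\}$. As soon as $\t \neq \t_0$ on $\S^{d-1}$, $A_\t \cup B_\t$ is a non-empty open cone in $\R^d$ (a wedge when $\t,\t_0$ are linearly independent, and an entire half-space in the antipodal case $\t = -\t_0$); because $\cX$ contains $\mathbf{0}$ as an interior point (Assumption~\ref{assu:Basic}(f)) and $p$ is bounded below on $\cX$ (Assumption~\ref{assu:Basic}(g)), this cone meets $\cX$ in a set of positive $P_X$-measure. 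On a compact sub-cone of $A_\t \cap \cX$, continuity of $h_0$ and of $x \mapsto x'\t$ delivers constants $c_1,c_2 > 0$ with $h_0(x) \ge c_1$ and $-x'\t \ge c_2$, so $\Delta_+(x,\t) \ge \min(c_1,c_2) > 0$ there. Integrating against $p$ gives $Q(\t_0) - Q(\t) > 0$, completing point identification.

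\textbf{Part (ii).} The model \eqref{eq:BinChoice} together with Assumption~\ref{assu:Basic}(b) yield
\[
h_0(x) \;=\; \E[\,y_i - \tfrac{1}{2} \mid X_i = x\,] \;=\; F(x'\t_0 \mid x) - \tfrac{1}{2}.
\]
The map $\phi : x \mapsto (x'\t_0,\,x)$ is affine (hence $C^\infty$), with first derivative bounded by $\|\t_0\| = 1$ and all higher derivatives vanishing. Applying Fa\`{a} di Bruno to $F \circ \phi$, each partial derivative of $h_0$ up to the required order is a finite linear combination of partial derivatives of $F$ weighted by products of coordinates of $\t_0$. The uniform bound $M$ on the derivatives of $F$ from Assumption~\ref{assu:Basic}(c), combined with $\|\t_0\| = 1$ and compactness of $\cX$, then delivers the required uniform boundedness of the derivatives of $h_0$.

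\textbf{Main obstacle.} The delicate step is the strict inequality in part (i): it is not enough to observe that the symmetric difference $\{\mathrm{sgn}(x'\t) \neq \mathrm{sgn}(x'\t_0)\}$ has positive $P_X$-measure; one must also verify that on this set the integrand $\Delta_+ + \Delta_-$ is bounded below by a positive constant on a sub-region of positive measure. This forces simultaneous use of the interior-zero support condition in Assumption~\ref{assu:Basic}(f), the density lower bound in Assumption~\ref{assu:Basic}(g), and the strict monotonicity of $F(\cdot \mid x)$ from $f > 0$ in Assumption~\ref{assu:Basic}(d), rather than any one of them in isolation.
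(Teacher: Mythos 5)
The paper does not actually include a standalone proof of Proposition \ref{prop:ID_Sobolev}: the introduction only establishes the weak inequality $Q(\t)\le \E\left[\left|h_{0}(X_{i})\right|\right]=Q(\t_{0})$, and the appendix contains no argument for uniqueness or for part (ii). Your part (i) therefore supplies the missing step, and it does so correctly: the identities $\Delta_{+}(x,\t)=\min\{h_{0}(x),-x^{'}\t\}$ on $\{h_{0}>0,\,x^{'}\t<0\}$ and its mirror image for $\Delta_{-}$ check out, the sign-alignment equivalence \eqref{eq:Mono_Equiv} converts these sets into the wedges $A_{\t},B_{\t}$, and the combination of $\mathbf{0}\in\operatorname{int}\mathcal{X}$, the density lower bound, and strict monotonicity of $F(\cdot\mid x)$ gives a positive lower bound for the integrand on a positive-measure set. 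One cosmetic caveat: a ``compact sub-cone'' of $A_{\t}\cap\mathcal{X}$ would have the origin as a limit point, where $h_{0}$ vanishes; you should take a compact subset bounded away from the origin (e.g., the wedge intersected with an annulus), which is clearly what you intend and costs nothing.

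Part (ii) uses the right method (write $h_{0}(x)=F(x^{'}\t_{0}\mid x)-\tfrac{1}{2}$ and differentiate the composition with the affine map $x\mapsto(x^{'}\t_{0},x)$), but as written it has an off-by-one gap: Assumption \ref{assu:Basic}(c) only guarantees that $F(\e\mid x)$ is $d$ times continuously differentiable with bounded derivatives, and composition with an affine map cannot gain a derivative, so your argument delivers $d$ bounded derivatives of $h_{0}$, not the $d+1$ asserted in the proposition. The $(d+1)$-st order partials of $h_{0}$ that Fa\`a di Bruno produces involve $(d+1)$-st order partials of $F$, which Assumption \ref{assu:Basic}(c) does not control. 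This is very likely a misalignment in the paper's own statement (the order $d+1$ is what is needed for the function class $\mathcal{H}$ in Assumption \ref{assu:FirstStageRate}, so Assumption \ref{assu:Basic}(c) presumably should read ``$d+1$ times''), but your proof should either flag the discrepancy or explicitly assume the extra derivative of $F$ rather than silently asserting the conclusion one order beyond what the hypothesis supports.
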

~

Given mild convergence conditions on the first-stage estimator $\hat{h}$,
it is straightforward to establish the consistency of $\hat{\t}$
in Theorem \ref{thm:Consistency}.
\begin{thm}[Consistency]
\label{thm:Consistency}Suppose that $\norm{\hat{h}-h_{0}}_{\infty}=o_{p}\left(1\right)$.
Then $\hat{\t}$ is consistent, i.e., $\hat{\t}\pto\t_{0}$.
\end{thm}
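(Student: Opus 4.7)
The plan is to apply a standard argmax/M-estimator consistency theorem (e.g., van der Vaart's Theorem 5.7): since $\mathbb{S}^{d-1}$ is compact and $Q$ is continuous with $\theta_0$ as its unique maximizer (Proposition \ref{prop:ID_Sobolev}(i)), it suffices to establish the uniform convergence
\[
\sup_{\t\in\S^{d-1}}\bigl|\hat Q(\t)-Q(\t)\bigr|=o_p(1).
\]
I would decompose this as
\[
\hat Q(\t)-Q(\t)=\bigl[\hat Q(\t)-Q_n(\t;h_0)\bigr]+\bigl[Q_n(\t;h_0)-Q(\t)\bigr],
\]
where $Q_n(\t;h_0):=n^{-1}\sum_{i=1}^n\bigl(g_{+,\t,h_0}(X_i)+g_{-,\t,h_0}(X_i)\bigr)$ is the sample criterion evaluated at the true $h_0$, and handle the two terms separately.

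For the first (plug-in) term, I would use the key fact that the ReLU function $[\cd]_+$ is $1$-Lipschitz. Writing
\[
g_{+,\t,\hat h}(x)-g_{+,\t,h_0}(x)=\bigl[\hat h(x)-[-x'\t]_+\bigr]_+-\bigl[h_0(x)-[-x'\t]_+\bigr]_+,
\]
the $1$-Lipschitzness of the outer ReLU gives $\bigl|g_{+,\t,\hat h}(x)-g_{+,\t,h_0}(x)\bigr|\leq|\hat h(x)-h_0(x)|\leq\norm{\hat h-h_0}_\infty$, and similarly for $g_{-,\t,\hat h}$. Averaging and taking suprema yields
\[
\sup_{\t\in\S^{d-1}}\bigl|\hat Q(\t)-Q_n(\t;h_0)\bigr|\leq 2\norm{\hat h-h_0}_\infty=o_p(1)
\]
by hypothesis, uniformly in $\t$.

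For the second (ULLN) term, I would verify that the class $\mathcal G:=\{g_{+,\t,h_0}+g_{-,\t,h_0}:\t\in\S^{d-1}\}$ is Glivenko--Cantelli. Under Assumption \ref{assu:Basic}, $h_0$ is bounded on the compact support $\cX$ (so the class has a bounded envelope), and applying the $1$-Lipschitz property of ReLU together with $\norm{x}\leq 1$ gives, for any $\t,\td\t\in\S^{d-1}$,
\[
\bigl|g_{\pm,\t,h_0}(x)-g_{\pm,\td\t,h_0}(x)\bigr|\leq\bigl|[-x'\t]_+-[-x'\td\t]_+\bigr|\leq\norm{\t-\td\t}.
\]
This uniform Lipschitz-in-$\t$ property over the compact parameter space $\S^{d-1}$ yields polynomial bracketing numbers and hence a standard uniform law of large numbers, so $\sup_\t|Q_n(\t;h_0)-Q(\t)|\pto 0$. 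Continuity of $Q$ on $\S^{d-1}$ follows from the same Lipschitz bound via dominated convergence. Combining the two pieces with Proposition \ref{prop:ID_Sobolev}(i) and argmax consistency delivers $\hat\t\pto\t_0$.

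There is no substantial obstacle here: this is a relatively soft consistency result in which the ReLU structure does essentially all the heavy lifting automatically. The $1$-Lipschitzness of $[\cd]_+$ both (i) ensures that the first-stage error propagates linearly through the plug-in step without any bandwidth-like amplification, and (ii) delivers the equicontinuity of the index class $\mathcal G$ needed for the ULLN. The only step requiring genuine work, namely the unique identification of $\t_0$ by $Q$, has already been absorbed into Proposition \ref{prop:ID_Sobolev}(i) under Assumption \ref{assu:Basic}.
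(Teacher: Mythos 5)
Your proof is correct. It follows the same basic logic as the paper's (Lipschitzness of the ReLU to control the plug-in error, a uniform law of large numbers over $\t$, and point identification from Proposition \ref{prop:ID_Sobolev}(i)), but the route differs in one substantive way. The paper establishes uniform convergence via a Glivenko--Cantelli property of the \emph{joint} class $\left\{ g_{\t,h}:\t\in\T,\,h\in{\cal H}\right\} $ together with the population-level bound $\sup_{\t}\left|Pg_{\t,h}-Pg_{\t,h_{0}}\right|\leq P\left(\left|h-h_{0}\right|\right)$, and then invokes Theorem 1 of \citet*{delsol2020semiparametric}; this implicitly uses Assumption \ref{assu:FirstStageRate}(i), i.e.\ that $\hat{h}\in{\cal H}$ with probability approaching 1, which is \emph{not} part of the hypothesis of Theorem \ref{thm:Consistency} as stated. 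You instead apply the $1$-Lipschitz bound at the sample level, $\left|g_{\pm,\t,\hat{h}}\left(X_{i}\right)-g_{\pm,\t,h_{0}}\left(X_{i}\right)\right|\leq\norm{\hat{h}-h_{0}}_{\infty}$ pointwise in $i$, so the plug-in term is handled deterministically and the empirical-process argument is only needed for the class indexed by $\t$ at the fixed function $h_{0}$. This is cleaner and strictly more faithful to the theorem as stated, since it requires only $\norm{\hat{h}-h_{0}}_{\infty}=o_{p}\left(1\right)$ and no complexity restriction on $\hat{h}$ itself; the cost is nil here, though the joint-class machinery the paper sets up is reused later for the rate and normality results, so the paper's choice amortizes that investment. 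Your remaining steps (bounded envelope, Lipschitz-in-$\t$ bound using $\norm x\leq1$, continuity of $Q$, and argmax consistency on the compact sphere) are all sound.
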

~

We now proceed to characterize the convergence rate and asymptotic
distribution of $\hat{\t}$, which are the main results of this section.
While such results can be obtained under higher-level conditions on
the first-stage nonparametric estimators $\hat{h}$, for concreteness
and clarity, we consider two leading types of nonparametric estimators,
the Nadaraya-Waston kernel estimator and the linear series estimator,
and provide lower-level conditions for both.
\begin{assumption}[Kernel/Linear Series First Stage]
\label{assu:KernelSeries} Assume either of the following:
\begin{itemize}
\item[\emph{(a)}] \begin{flushleft}
 $\hat{h}$ is given by the Nadaraya-Watson kernel estimator, 
\[
\hat{h}\left(x\right):=\frac{\sum_{i=1}^{n}K\left(\frac{X_{i}-x}{b_{n}}\right)\left(Y_{i}-\frac{1}{2}\right)}{\sum_{i=1}^{n}K\left(\frac{X_{i}-x}{b_{n}}\right)}
\]
where $b_{n}$ is a bandwidth parameter and $K\left(x\right)$ is
a $d$-dimensional kernel function of smoothness order $s$ such that
\par\end{flushleft}
\item[] (a.i) $K\left(x\right)=K\left(-x\right)$, and $\int K\left(x\right)dx=1$.
\item[] (a.ii) $\left|K\left(x\right)\right|\leq M<\infty$ and $\int\prod_{j=1}^{d}\left|x_{j}\right|^{l}K\left(x\right)dx<\infty$
for all $l$.
\item[] (a.iii) $\int k\left(t\right)=0$ for $j=1,...,s-1$, and $\kappa_{s}:=\int x_{j}^{s}K\left(x\right)dx\in\left(0,\infty\right)$.
\item[] (a.iv) $K\left(x_{1},...,x_{d}\right)=K\left(x_{\pi_{1}},...,x_{\pi_{d}}\right)$
for any permutation of coordinates $\pi$.
\item[\emph{(b)}] \begin{flushleft}
\emph{ $\hat{h}$ is given by the linear series estimator,}
\[
\hat{h}\left(x\right):=\ol b^{K_{n}}\left(x\right)^{'}\left(\sum_{i=1}^{n}\ol b^{K_{n}}\left(X_{i}\right)\ol b^{K_{n}}\left(X_{i}\right)^{'}\right)^{-1}\sum_{i=1}^{n}\ol b^{K_{n}}\left(X_{i}\right)\left(Y_{i}-\frac{1}{2}\right)
\]
where $K_{n}:=J_{n}^{d}$ is the sieve dimension parameter and $\ol b^{K_{n}}\left(x\right):=\text{vec}\left(\otimes_{j=1}^{d}\left(b_{1}\left(x_{j}\right),...,b_{J_{n}}\left(x_{j}\right)\right)\right)$
is a vector of multivariate basis functions constructed from tensor
products of some univariate orthonormal basis functions $\left(b_{k}\left(\cd\right)\right)_{k=1}^{\infty}$
such that:
\par\end{flushleft}
\item[] (b.i)\emph{ $\l_{\min}\left(\E\left[\ol b^{K_{n}}\left(X_{i}\right)\ol b^{K_{n}}\left(X_{i}\right)^{'}\right]\right)>0$.}
\item[] (b.ii)\emph{ $\inf_{h\in{\cal B}_{K_{n}}}\norm{h-h_{0}}=J_{n}^{-s}$,
where }${\cal B}_{K_{n}}$ denotes the closed span of $\ol b^{K_{n}}$.
\item[] (b.iii)\emph{ $\norm{\Pi_{K_{n},n}}_{\infty}:=\sup_{h:\norm h_{\infty}\neq0}\frac{\norm{\Pi_{K_{n},n}h}_{\infty}}{\norm h_{\infty}}=O_{p}\left(1\right)$,
where} $\Pi_{K_{n},n}$ denotes the empirical projection operator
onto ${\cal B}_{K_{n}}$, i.e., 
\[
\Pi_{K_{n},n}h\left(x\right):=\ol b^{K_{n}}\left(x\right)^{'}\left(\sum_{i=1}^{n}\ol b^{K_{n}}\left(X_{i}\right)\ol b^{K_{n}}\left(X_{i}\right)^{'}\right)^{-1}\sum_{i=1}^{n}\ol b^{K_{n}}\left(X_{i}\right)h\left(x\right).
\]
\end{itemize}
\end{assumption}
\noindent \begin{flushleft}
Assumption \ref{assu:KernelSeries}(a.i-iv) are standard conditions
on the kernel function that covers both product and radial kernels
constructed under a wide range of univariate kernels. Similarly, Assumption
\ref{assu:KernelSeries} (b.i-iii) are standard conditions and properties
on linear series regressions that are satisfied under a wide variety
of sieve classes. See, for example, \citet{chen2007sieve}, \citet{chen2015optimal}
and \citet{belloni2015some} for results on spline, wavelet, Fourier
and many other sieve classes. 
\par\end{flushleft}

We now present our main results about the RMS asymptotics.
\begin{thm}[Convergence Rate]
\label{thm:Thm_Bin_Rate} Under Assumption \eqref{assu:Basic}, and
with $\hat{h}$ being given by the Nadaraya-Watson estimator that
satisfies Assumption \emph{\ref{assu:KernelSeries}(a),} for any $b_{n}\to0$
such that $nb_{n}^{2d+1}/\left(\log n\right)^{2}\to\infty$, we have
\begin{equation}
\norm{\hat{\t}-\t_{0}}=O_{p}\left(b_{n}^{s}+\frac{1}{\sqrt{nb_{n}}}\right).\label{eq:Bin_Rate_b}
\end{equation}
If $s>d$, then optimal convergence rate can be attained by setting
$b_{n}\sim n^{-\frac{1}{2s+1}}$, giving
\[
\norm{\hat{\t}-\t_{0}}=O_{p}\left(n^{-\frac{s}{2s+1}}\right).
\]
The above also holds for linear series $\hat{h}$ with Assumption
\emph{\ref{assu:KernelSeries}(a)} replaced by Assumption \emph{\ref{assu:KernelSeries}(b)}
and $b_{n}$ replaced by $J_{n}^{-1}$.
\end{thm}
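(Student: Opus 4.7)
The plan is a two-step argument that couples a local quadratic lower bound on the population criterion $Q$ with a sharp uniform bound on the sampling-plus-plug-in error of $\hat Q$ in a shrinking neighborhood of $\t_0$. First I would establish local quadratic identification: there exists $c>0$ and a neighborhood $N\subset\S^{d-1}$ of $\t_0$ with $Q(\t_0)-Q(\t)\geq c\|\t-\t_0\|^{2}$ for $\t\in N$. On the region where $x'\t$ and $x'\t_0$ have the same sign, the sign-alignment restriction \eqref{eq:Mono_Equiv} makes $g_{\pm,\t,h_0}(x)=g_{\pm,\t_0,h_0}(x)$; the loss accrues entirely on the thin ``disagreement strip'' around the hyperplane $H_0:=\{x:x'\t_0=0\}$, whose width is $O(\|\t-\t_0\|)$ and on which the pointwise loss is itself $O(\|\t-\t_0\|)$. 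Integrating against $p$ produces the quadratic lower bound, with $c$ controlled from below by the surface density of $p$ on $H_0$ and by $f(0\mid x)$, both uniformly positive under Assumption~\ref{assu:Basic}(e),(g).

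Given consistency (Theorem~\ref{thm:Consistency}) and the basic inequality $\hat Q(\hat\t)\geq\hat Q(\t_0)$, this yields
\[
c\|\hat\t-\t_0\|^{2}\leq Q(\t_0)-Q(\hat\t)\leq R_n(\hat\t),\qquad R_n(\t):=\bigl\{\hat Q(\t)-Q(\t)\bigr\}-\bigl\{\hat Q(\t_0)-Q(\t_0)\bigr\}.
\]
Introducing the partial-population criterion $Q_h(\t):=\E[g_{+,\t,h}(X_i)+g_{-,\t,h}(X_i)]$, I would decompose $R_n=E_n+B_n$ into the empirical-process fluctuation $E_n(\t):=\{\hat Q(\t)-Q_{\hat h}(\t)\}-\{\hat Q(\t_0)-Q_{\hat h}(\t_0)\}$ and the plug-in bias $B_n(\t):=\{Q_{\hat h}(\t)-Q(\t)\}-\{Q_{\hat h}(\t_0)-Q(\t_0)\}$. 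Because $[\cd]_+$ is $1$-Lipschitz, each map $\t\mapsto g_{\pm,\t,h}(x)-g_{\pm,\t_0,h}(x)$ is bounded by $\|x\|\,\|\t-\t_0\|$; on the compact ${\cal X}$ a standard chaining/maximal inequality (applied, if needed, under sample splitting to handle the data dependence in $\hat h$) therefore yields $\sup_{\|\t-\t_0\|\leq\delta}|E_n(\t)|=O_p(\delta/\sqrt{n})$, which will be dominated by the plug-in term.

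The main obstacle is to extract the dimension-reduction phenomenon from $B_n$. Taylor-expanding $g_{\pm,\t,h}$ in $h$, the a.e.\ differentiability of the outer ReLU gives, to first order in $\hat h-h_0$,
\[
Q_{\hat h}(\t)-Q(\t)=\E\bigl[(\hat h-h_0)(X)\,\omega_\t(X)\bigr]+O(\|\hat h-h_0\|_\infty^{2}),
\]
where $\omega_\t(x):=\ind\{h_0(x)>(-x'\t)_+\}-\ind\{-h_0(x)>(x'\t)_+\}$. At $\t_0$, sign alignment collapses this to $\omega_{\t_0}(x)=\mathrm{sign}(x'\t_0)$, so the difference $\omega_\t-\omega_{\t_0}$ is supported on a strip of width $O(\|\t-\t_0\|)$ about $H_0$, and hence
\[
B_n(\t)\approx \|\t-\t_0\|\cdot\int_{H_0}(\hat h-h_0)(x)\,\rho(x)\,d\sigma(x)
\]
for a smooth weight $\rho$ depending on $p$ and $f(0\mid\cd)$. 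This is precisely where the effective nonparametric dimension drops from $d$ to $1$: under Proposition~\ref{prop:ID_Sobolev}(ii) and Assumption~\ref{assu:KernelSeries}(a), writing $\hat h$ in convolution form and changing coordinates so that $x'\t_0$ is one of them, the $d-1$ integrated directions cancel the $b_n^{d-1}$ factor in the pointwise variance of the kernel estimator, leaving bias of order $b_n^s$ and variance of order $1/(nb_n)$ for this functional. The delicate part, and the hardest step to execute, is making the expansion uniform for $\t$ in the shrinking neighborhood and controlling the quadratic Taylor remainder; the ``Lipschitz-with-kink'' structure of the outer ReLU keeps both of these of smaller order.

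Combining, $\sup_{\|\t-\t_0\|\leq\delta}|R_n(\t)|=O_p\bigl(\delta\cdot(b_n^s+1/\sqrt{nb_n})\bigr)$. Substituting into the curvature bound and dividing by $\|\hat\t-\t_0\|$ yields $\|\hat\t-\t_0\|=O_p(b_n^s+1/\sqrt{nb_n})$, which is \eqref{eq:Bin_Rate_b}. When $s>d$, the side condition $nb_n^{2d+1}/(\log n)^2\to\infty$ (which underwrites uniform consistency of $\hat h$) is compatible with the optimizing choice $b_n\sim n^{-1/(2s+1)}$, giving the stated $n^{-s/(2s+1)}$ rate. The linear-series case is structurally identical, with the kernel bias-variance calculation for the functional replaced by the corresponding sieve bounds under Assumption~\ref{assu:KernelSeries}(b) and $b_n$ replaced by $J_n^{-1}$.
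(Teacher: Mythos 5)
Your overall architecture --- quadratic curvature of $Q$ near $\t_{0}$, the basic inequality $\hat{Q}(\hat{\t})\geq\hat{Q}(\t_{0})$, a Lipschitz-driven maximal inequality for the centered empirical fluctuation, and a linearization of the plug-in bias into a $(d-1)$-dimensional hyperplane integral whose plug-in estimator converges at the one-dimensional rate --- is essentially the paper's route: your $E_{n}$ is the paper's $T_{1}+T_{2}$ (Lemmas \ref{lem:Term1}--\ref{lem:Term2}), your curvature bound is Lemma \ref{lem:Term3}, your $B_{n}$ is $T_{4}$ and your strip-supported weight $\omega_{\t}-\omega_{\t_{0}}$ reproduces the functional $D_{h}[P\Dif_{\t}g_{\t_{0},h_{0}},\hat{h}-h_{0}]$ of Lemma \ref{lem:Term4}, and your change-of-coordinates/convolution computation is Lemma \ref{lem:T4_Kern}.

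The genuine gap is in your treatment of the second-order remainder of the expansion of $B_{n}$ in $h$. You assert it is $O(\|\hat{h}-h_{0}\|_{\infty}^{2})=O(a_{n}^{2})$ and implicitly negligible, but $a_{n}^{2}\asymp\log n/(nb_{n}^{d})+b_{n}^{2s}$, and for $d\geq2$ the term $\log n/(nb_{n}^{d})$ is \emph{not} $o\bigl(\|\hat{\t}-\t_{0}\|^{2}\bigr)$ nor $O\bigl(\|\hat{\t}-\t_{0}\|\cdot(b_{n}^{s}+1/\sqrt{nb_{n}})\bigr)$ at the target rate; a remainder of this size without a factor of $\|\t-\t_{0}\|$ would swamp the curvature term and yield only the full $d$-dimensional rate. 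Even the sharper elementary bound one gets from the kink structure (the remainder integrand is of size $\min(\|\t-\t_{0}\|,a_{n})$ on a strip of measure $O(\|\t-\t_{0}\|+a_{n})$, giving $O(\|\t-\t_{0}\|a_{n})$) is insufficient, since $a_{n}\gg 1/\sqrt{nb_{n}}$ when $d\geq2$. The paper resolves this by a genuine second-order path expansion of $\Dif_{\t}P(g_{\t_{0},h_{0}+t(h-h_{0})}-g_{\t_{0},h_{0}})$, differentiating the implicitly defined integration boundary $u_{1}^{*}(u_{-1},t)$ twice; this produces a remainder $O(\|\t-\t_{0}\|\,a_{n}c_{n})$ with $c_{n}:=\|\Dif_{x}\hat{h}-\Dif_{x}h_{0}\|_{\infty}$-rate (Assumption \ref{assu:FirstStageDeriv}), and the hypothesis $nb_{n}^{2d+1}/(\log n)^{2}\to\infty$ is exactly what makes $a_{n}c_{n}=o(1/\sqrt{nb_{n}})$. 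Your proposal never invokes the convergence of $\Dif_{x}\hat{h}$, and you misattribute the role of $nb_{n}^{2d+1}/(\log n)^{2}\to\infty$ to ``uniform consistency of $\hat{h}$'' (which only needs $nb_{n}^{d}/\log n\to\infty$); this is the step where your argument, as written, does not close. A secondary, smaller point: handling the dependence of $\hat{h}$ on the data inside $E_{n}$ by sample splitting would change the estimator; the paper instead takes the supremum over the Hölder class $\mathcal{H}$ containing $\hat{h}$ w.p.a.1, which the Lipschitz envelope of Lemma \ref{lem:Term2} supports.
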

The asymptotic distribution can then be derived based on the linearized
argmax theorem (Theorem 3.2.16) in \citet*{van1996weak}. 
\begin{thm}[Asymptotic Normality]
\label{thm:AsymNorm} Suppose that Assumption holds with $s>d$.
With $\hat{h}$ being given by the Nadaraya-Watson estimator as in
Assumption \emph{\ref{assu:KernelSeries}(a) with undersmoothing choice
of bandwidth} $b_{n}$ such that $nb_{n}^{2d+1}/\left(\log n\right)^{2}\to\infty$
and $b_{n}=o_{p}\left(n^{-\frac{1}{2s+1}}\right)$, we have 
\[
n^{\frac{s}{2s+1}}\left(\hat{\t}-\t_{0}\right)\dto\cN\left({\bf 0},V^{-}\O V^{-}\right).
\]
The above also holds for linear series $\hat{h}$ with Assumption
\emph{\ref{assu:KernelSeries}(a) }replaced by Assumption \emph{\ref{assu:KernelSeries}(b)}
and $b_{n}$ replaced by $J_{n}^{-1}$.
\end{thm}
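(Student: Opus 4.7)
The plan is to invoke the linearized argmax theorem of \citet*{van1996weak} (Theorem 3.2.16) after (i) expanding $Q$ quadratically around $\t_{0}$, (ii) linearizing $\hat Q-Q$ in $(\t-\t_{0})$ with a stochastic coefficient $W_{n}$ that captures the first-stage plug-in effect, and (iii) showing $n^{s/(2s+1)} W_{n}\dto\cN({\bf 0},\O)$. Consistency (Theorem \ref{thm:Consistency}) together with the rate $\norm{\hat\t-\t_{0}}=O_{p}(n^{-s/(2s+1)})$ (Theorem \ref{thm:Thm_Bin_Rate}) localizes the analysis to an $O_{p}(n^{-s/(2s+1)})$-neighborhood of $\t_{0}$ on the tangent space $T_{\t_{0}}\S^{d-1}$, which is also why $V^{-}$ (rather than $V^{-1}$) appears in the stated limit.

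For the population-level curvature, using the sign-alignment identity \eqref{eq:Mono_Equiv}, away from the hyperplane $\{x:x'\t_{0}=0\}$ the integrand $g_{\t,h_{0}}(x):=g_{+,\t,h_{0}}(x)+g_{-,\t,h_{0}}(x)$ is locally constant in $\t$ near $\t_{0}$, so the only nontrivial contribution to $Q(\t)-Q(\t_{0})$ comes from a shrinking strip around the hyperplane. A change of variables decomposing $X$ into its $\t_{0}$-component and its $(d-1)$-dimensional orthogonal component, combined with the smoothness and boundedness of $p$ and of $F(\cd|x)$ from Assumption \ref{assu:Basic}(c,e,g), yields
\begin{equation*}
Q(\t)-Q(\t_{0})=-\tfrac{1}{2}(\t-\t_{0})'V(\t-\t_{0})+o\bigl(\norm{\t-\t_{0}}^{2}\bigr),
\end{equation*}
where $V$ is a $(d-1)$-dimensional surface integral on $\{x:x'\t_{0}=0\}$ built from $p(x)f(0|x)$. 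The ReLU ``kink'' at zero is what produces the non-degenerate quadratic curvature; a fully smooth link would leave only higher-order terms. Positive definiteness of $V$ on $T_{\t_{0}}\S^{d-1}$ follows from the bounded-away-from-zero parts of Assumption \ref{assu:Basic}(e,g).

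For the stochastic part I would decompose
\begin{equation*}
\hat Q(\t)-\hat Q(\t_{0})-[Q(\t)-Q(\t_{0})]=A_{n}(\t)+B_{n}(\t),
\end{equation*}
with $A_{n}(\t):=(\P_{n}-P)[g_{\t,\hat h}-g_{\t_{0},\hat h}]$ and $B_{n}(\t):=\{P g_{\t,\hat h}-P g_{\t,h_{0}}\}-\{P g_{\t_{0},\hat h}-P g_{\t_{0},h_{0}}\}$. The Lipschitz continuity of the ReLU compositions jointly in $(\t,h)$ yields a bracketing-entropy bound for the class $\{g_{\t,h}-g_{\t_{0},h}\}$ and hence $A_{n}(\t)=O_{p}(n^{-1/2}\norm{\t-\t_{0}})=o_{p}(n^{-s/(2s+1)}\norm{\t-\t_{0}})$ uniformly in the relevant neighborhood. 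A two-step expansion of $B_{n}$ (first in $h$ around $h_{0}$, then in $\t$ around $\t_{0}$) gives $B_{n}(\t)=(\t-\t_{0})'W_{n}+R_{n}(\t)$, where a co-area-type computation identifies
\begin{equation*}
W_{n}\ \propto\ \int_{\{x:x'\t_{0}=0\}}\bigl(\hat h(x)-h_{0}(x)\bigr)\,x\,p(x)f(0|x)\,d\sigma(x),
\end{equation*}
the integral of the first-stage error over the $(d-1)$-dimensional identifying hyperplane. Because $W_{n}$ averages $\hat h-h_{0}$ over a $(d-1)$-dimensional submanifold, it is effectively a one-dimensional nonparametric plug-in functional and is asymptotically normal at rate $n^{s/(2s+1)}$ by standard plug-in theory for integral functionals on lower-dimensional submanifolds (cf.\ \cite{chen2025semiparametric} in the introduction). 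The undersmoothing condition $b_{n}=o(n^{-1/(2s+1)})$ (resp.\ $J_{n}^{-1}=o(n^{-1/(2s+1)})$ in the series case) eliminates the first-stage bias, and the limit variance $\O$ is computed from the kernel/basis covariance projected onto the hyperplane.

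The main technical obstacle is the uniform control of the remainder $R_{n}(\t)$ on the shrinking neighborhood of $\t_{0}$. The subtlety is that the functional derivative of $g_{\t,h}$ in $h$ is itself an indicator, whose derivative in $\t$ is a delta-type object on the indicator's moving boundary, so a rigorous treatment must pair it with the smoothness of $p(x)f(0|x)$ from Assumption \ref{assu:Basic}(c,e,g) to smear out the non-differentiability in expectation and obtain a genuine surface integral. The resulting second-order residual is then bounded by $\norm{\t-\t_{0}}^{2}$-type and $\norm{\hat h-h_{0}}_{\infty}^{2}$-type terms plus a cross term, all of which are $o_{p}(n^{-2s/(2s+1)})$ under the smoothness $s>d$ and the uniform convergence rates of $\hat h$ guaranteed by Assumption \ref{assu:KernelSeries}. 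Once this remainder bound is in place, the linearized argmax theorem delivers the stated normal limit, and the same argument covers both the Nadaraya--Watson and linear series first stages upon substituting $J_{n}^{-1}$ for $b_{n}$ throughout.
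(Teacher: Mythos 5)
Your proposal follows essentially the same architecture as the paper's proof: the decomposition of $\hat Q(\t)-\hat Q(\t_0)$ into an empirical-process part (your $A_n$ equals $T_1+T_2$ in the paper's decomposition \eqref{eq:EP_decom}, shown negligible via Lipschitz envelopes and entropy bounds), a population quadratic term with Hessian $V$ given by a surface integral over $\{x:x'\t_0=0\}$, and a plug-in term linearized into a $(d-1)$-dimensional integral functional $W_n$ of $\hat h-h_0$, all fed into Theorem 3.2.16 of van der Vaart and Wellner. Merging $T_1$ and $T_2$ into a single $O_p(n^{-1/2}\norm{\t-\t_0})$ bound is fine for this theorem.

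Three points need tightening. First, your weights in $V$ and $W_n$ are not the paper's: Lemmas \ref{lem:Term3} and \ref{lem:Term4} give integrands $\frac{f(0\mid x)}{f(0\mid x)+1}xx'p(x)$ and $\frac{1}{f(0\mid x)+1}xp(x)$, not $f(0\mid x)p(x)$. The ``$+1$'' is not cosmetic: the outer ReLU kinks on the \emph{moving} boundary $\{x:h(x)=[-x'\t]_+\}$, and the implicit-function-theorem computation of how that boundary shifts with $(h,\t)$ produces the normal velocity $1/(\nabla_x h_0(x)'\t_0+1)=1/(f(0\mid x)+1)$ on $\{x'\t_0=0\}$; since the theorem's limit is exactly $V^{-}\O V^{-}$ with these weights, a ``$\propto$'' hedge does not cover a weight that differs as a function of $x$. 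Second, your remainder bound must involve the sup-norm rate of $\nabla_x(\hat h-h_0)$, not only $\norm{\hat h-h_0}_\infty^2$: the second-order expansion of the moving boundary yields a linearization error of order $\norm{\hat h-h_0}_\infty\bigl(\norm{\hat h-h_0}_\infty+\norm{\nabla_x(\hat h-h_0)}_\infty\bigr)=O_p(a_nc_n)$, and the bandwidth condition $nb_n^{2d+1}/(\log n)^2\to\infty$ in the theorem is calibrated precisely so that $a_nc_n=o_p(n^{-s/(2s+1)})$; your unspecified ``cross term'' leaves this requirement invisible. Third, the CLT for $W_n$, which you defer to ``standard plug-in theory,'' is where the paper does most of its work (Lemmas \ref{lem:T4_Kern}/\ref{lem:T4_Series}): for the kernel case the key step is showing that $G(t)=\int_{x'\t_0=t}K(x)\,d\mathcal{H}^{d-1}(x)$ is itself a univariate kernel of order $s$, which is what delivers the one-dimensional rate $n^{-s/(2s+1)}$ and the specific $\O$ appearing in the statement.
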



\subsection{\label{subsec:AsymOutline}Outline of the RMS Asymptotic Theory}

\subsubsection{Decomposition of the Sample Criterion}

To present our formal asymptotic results, we first set up some notation.
Let $Pg_{\t,h}:=\int g_{\t,h}\left(x\right)dP\left(x\right),$ $\P_{n}g_{\t,h}:=\frac{1}{n}\sum_{i=1}^{n}g_{\t,h}\left(X_{i}\right),$
and $\GG_{n}g_{\t,h}:=\sqrt{n}\left(\P_{n}g_{\t,h}-Pg_{\t,h}\right)$,
with which we can rewrite \eqref{eq:PointID} and \eqref{eq:RMS}
as
\[
\t_{0}=\arg\max_{\t\in\S^{d-1}}Pg_{\t,h_{0}},\quad\hat{\t}:=\arg\max_{\t\in\S^{d-1}}\P_{n}g_{\t,\hat{h}}
\]
Since the asymptotic behavior of $\hat{\t}$ is driven by the asymptotic
behavior of $\P_{n}\left(g_{\hat{\t},\hat{h}}-g_{\t_{0},\hat{h}}\right)$,
we analyze it by working with the following decomposition
\begin{align}
\P_{n}\left(g_{\hat{\t},\hat{h}}-g_{\t_{0},\hat{h}}\right)= & \underset{T_{1}}{\underbrace{\frac{1}{\sqrt{n}}\GG_{n}\left(g_{\hat{\t},h_{0}}-g_{\t_{0},h_{0}}\right)}}+\underset{T_{2}}{\underbrace{\frac{1}{\sqrt{n}}\GG_{n}\left(g_{\hat{\t},\hat{h}}-g_{\t_{0},\hat{h}}-g_{\hat{\t},h_{0}}+g_{\t_{0},h_{0}}\right)}}\nonumber \\
 & +\underset{T_{3}}{\underbrace{P\left(g_{\hat{\t},h_{0}}-g_{\t_{0},h_{0}}\right)}}+\underset{T_{4}}{\underbrace{P\left(g_{\hat{\t},\hat{h}}-g_{\t_{0},\hat{h}}-g_{\hat{\t},h_{0}}+g_{\t_{0},h_{0}}\right)}}\label{eq:EP_decom}
\end{align}
and studying each of the four terms $T_{1},T_{2},T_{3}$ and $T_{4}$. 

It turns out that each of the four terms is somewhat ``nonstandard''
relative to the usual case of semiparametric two-stage estimation
theory that delivers $\sqrt{n}$ asymptotic normality under standard
smoothness conditions. Furthermore, the analysis of the four terms
$T_{1},T_{2},T_{3},T_{4}$ reveals some of the key insights in the
asymptotics of our proposed ReLU-based maximum score estimator $\hat{\t}$. 

Hence, we provide an explicit account of the four terms below, where
we show that the terms $T_{1}$ and $T_{2}$ will be of smaller stochastic
orders than $\norm{\hat{\t}-\t_{0}}^{2}$ and thus become asymptotically
negligible, while terms $T_{3}$ and $T_{4}$ will be the asymptotically
leading terms of the order $\norm{\hat{\t}-\t_{0}}^{2}$. We then
combine the results about the four terms to establish the convergence
rate and asymptotic normality.

\subsubsection{\label{subsec:Term1}Analysis of Term $T_{1}=\frac{1}{\sqrt{n}}\protect\GG_{n}\left(g_{\hat{\protect\t},h_{0}}-g_{\protect\t_{0},h_{0}}\right)$}

We start with term $T_{1}$, which captures the stochastic variation,
or loosely ``variance'', in the sample criterion function $\P_{n}g_{\hat{\t},h_{0}}$
when the nonparametric first stage is set to the the true function
$h_{0}$. Lemma \ref{lem:Term1} below presents a maximal inequality
about $T_{1}$ with respect to $\t$ in a small neighborhood of $\t_{0}$:
\begin{lem}
\label{lem:Term1} For some constant $M>0$,
\begin{equation}
P\sup_{\norm{\t-\t_{0}}\leq\d}\left|\GG_{n}\left(g_{\t,h_{0}}-g_{\t_{0},h_{0}}\right)\right|\leq M\d^{\frac{3}{2}}.\label{eq:Max1}
\end{equation}
\end{lem}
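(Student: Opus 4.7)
The plan is to apply a bracketing maximal inequality (e.g., Theorem~2.14.2 in \citealp*{van1996weak}) to the class
\[
\cF_\d := \bigl\{g_{\t,h_0} - g_{\t_0,h_0} : \t\in\S^{d-1},\ \norm{\t-\t_0}\le\d\bigr\}.
\]
The $\d^{3/2}$ modulus, rather than the naive Lipschitz rate $\d$, will come from showing that the $L_2(P)$-radius of $\cF_\d$ is of order $\d^{3/2}$, combined with polynomial bracketing entropy from the Euclidean Lipschitz parametrization. The key ingredients are (i) pointwise Lipschitzness of $g_{\pm,\t,h_0}$ in $\t$: since the ReLU is $1$-Lipschitz, composition gives
\[
\abs{g_{+,\t,h_0}(x) - g_{+,\t_0,h_0}(x)} \le \abs{[-x'\t]_+ - [-x'\t_0]_+} \le \norm x\,\norm{\t-\t_0},
\]
and likewise for $g_-$, which combined with Assumption~\ref{assu:Basic}(f) delivers the uniform envelope $2\d$; and (ii) a ``thin-support'' property established below.

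For the thin-support step, I will show that $g_{+,\t,h_0}-g_{+,\t_0,h_0}$ vanishes outside the slab $\{0<x'\t_0\le\norm x\d\}$. If $x'\t_0 > \norm x\d$, then $x'\t > 0$ for every $\t$ with $\norm{\t-\t_0}\le\d$, so $[-x'\t]_+=0$ and $g_{+,\t,h_0}(x)=[h_0(x)]_+$; the sign alignment \eqref{eq:Mono_Equiv} forces $h_0(x)>0$, hence $g_{+,\t,h_0}(x)=h_0(x)=g_{+,\t_0,h_0}(x)$. If $x'\t_0<0$ then $h_0(x)<0$, so $h_0(x)-[-x'\t]_+\le 0$ for every $\t$, forcing $g_{+,\t,h_0}(x)=g_{+,\t_0,h_0}(x)=0$. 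A symmetric argument puts the support of $g_{-,\t,h_0}-g_{-,\t_0,h_0}$ on $\{-\norm x\d\le x'\t_0<0\}$. By Assumption~\ref{assu:Basic}(g) (bounded density of $X$) each strip has $P$-measure $O(\d)$, and combining with the pointwise Lipschitz bound yields
\[
P(g_{\t,h_0}-g_{\t_0,h_0})^2 \le (2\d)^2\cdot P(\abs{x'\t_0}\le\d) \le C\d^3,
\]
so $\sup_{f\in\cF_\d}\norm f_{L_2(P)}\le\s$ with $\s\asymp\d^{3/2}$.

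Covering the ball $\{\t\in\S^{d-1}:\norm{\t-\t_0}\le\d\}$ by an $\eta$-net of size $O((\d/\eta)^{d-1})$ and using (i) produces $L_2(P)$-brackets of width at most $2\eta$, so $\log N_{[]}(\eta,\cF_\d,L_2(P))\lesssim d\log(\d/\eta)$; the bracketing entropy integral then satisfies $J_{[]}(\s,\cF_\d,L_2(P))\lesssim\s\sqrt{\log(1/\s)}\lesssim\d^{3/2}$, where a logarithmic factor is absorbed into $M$ on the bounded range of $\d$ relevant for localization around $\t_0$. The maximal inequality then gives $P\sup_{f\in\cF_\d}\abs{\GG_n f}\le M\d^{3/2}$. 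The main obstacle is the thin-support step: it does not follow from Lipschitz continuity of $g$ alone, and is precisely where the kink of the inner ReLU at zero together with the sign-alignment restriction \eqref{eq:Mono_Equiv} combine to upgrade the Lipschitz-only $L_2(P)$-variance bound $\d^2$ to the sharper $\d^3$, which is exactly the effect driving the intermediate $n^{-s/(2s+1)}$ rate of the RMS estimator.
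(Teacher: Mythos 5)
Your proof is correct and follows essentially the same route as the paper: the decisive step in both is combining the pointwise Lipschitz bound $\abs{g_{\pm,\t,h_{0}}(x)-g_{\pm,\t_{0},h_{0}}(x)}\leq\norm x\norm{\t-\t_{0}}$ with the sign-alignment-driven thin-support observation to obtain an envelope whose $L_{2}(P)$-norm is of order $\d^{3/2}$, and then invoking a maximal inequality with a finite entropy integral (the paper uses the uniform-entropy version, VW Theorem 2.14.1, where you use bracketing). The one imprecision is your treatment of the logarithmic factor: crude $L_{\infty}$-brackets of width $\eta$ give $J_{[\,]}\lesssim\sqrt{\log(1/\d)}$, and since $\d\downarrow0$ in the localization this factor cannot literally be absorbed into $M$; it is harmless for the downstream rate argument, but to get the clean $\d^{3/2}$ bound as stated you should note that the bracket differences $g_{\t,h_{0}}-g_{\t',h_{0}}$ are themselves supported on the same $O(\d)$-probability slab, so that $N_{[\,]}\left(\e\d^{3/2},\cF_{\d},L_{2}(P)\right)\lesssim\e^{-(d-1)}$ and the bracketing integral normalized by the envelope is bounded uniformly in $\d$, matching the paper's use of the envelope-normalized uniform entropy integral.
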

Loosely speaking, the result above in Lemma \ref{lem:Term1} translates
to the following stochastic bounds on $T_{1}$:
\[
T_{1}=O_{p}\left(\frac{1}{\sqrt{n}}\norm{\hat{\t}-\t_{0}}^{\frac{3}{2}}\right),
\]
which is $o_{p}\left(\norm{\hat{\t}-\t_{0}}^{2}\right)$ since $\norm{\hat{\t}-\t_{0}}$
converges no faster than $\frac{1}{\sqrt{n}}$ rate to zero. This
would imply that $T_{1}$ will become asymptotically negligible, which
is ``nonstandard'' in the literature.

Technically, the asymptotic negligibility of $T_{1}$ is directly
driven by the $\d^{\frac{3}{2}}$-rate bound on the right hand side
of \eqref{eq:Max1}. To see why $\d^{\frac{3}{2}}$ arises, notice
that 
\[
\left|g_{\t,h_{0}}\left(x\right)-g_{\t_{0},h_{0}}\left(x\right)\right|=\left|g_{+,\t,h_{0}}\left(x\right)-g_{+,\t,h_{0}}\left(x\right)\right|+\left|g_{-,\t,h_{0}}\left(x\right)-g_{-,\t,h_{0}}\left(x\right)\right|
\]
and thus, for any $\t$ close to $\t_{0}$ in the sense of $\norm{\t-\t_{0}}\leq\d$,
we have
\begin{align}
\left|g_{+,\t,h_{0}}\left(x\right)-g_{+,\t_{0},h_{0}}\left(x\right)\right| & =\left|\left[h_{0}\left(x\right)-\left[-x^{'}\t\right]_{+}\right]_{+}-\left[h_{0}\left(x\right)\right]_{+}\right|\nonumber \\
 & \leq\ind\left\{ h_{0}\left(x\right)>0\right\} \cd\ind\left\{ x^{'}\t<0\right\} \cd\left|x^{'}\t\right|\nonumber \\
 & =\ind\left\{ x^{'}\t_{0}>0>x^{'}\t\right\} \cd\left|x^{'}\t\right|\nonumber \\
 & =\ind\left\{ x^{'}\t_{0}>0>x^{'}\t_{0}+x^{'}\left(\t-\t_{0}\right)\right\} \cd\left|x^{'}\t_{0}+x^{'}\left(\t-\t_{0}\right)\right|\nonumber \\
 & \leq\ind\left\{ 0<x^{'}\t_{0}<-x^{'}\left(\t-\t_{0}\right)\right\} \cd\left(\left|x^{'}\t_{0}\right|+\left|x^{'}\left(\t-\t_{0}\right)\right|\right)\nonumber \\
 & \leq\ind\left\{ 0<x^{'}\t_{0}<M\norm x\d\right\} \cd2M\norm x\d\label{eq:Bd_gtheta}
\end{align}
In words, the derivation above exploits the observation that $g_{+,\t,h_{0}}\left(x\right)-g_{+,\t_{0},h_{0}}\left(x\right)$
is nonzero only if $x^{'}\t_{0}$ and $x^{'}\t$ lie on different
sides of $0$, which, given the restriction $\norm{\t-\t_{0}}\leq\d$,
implies that both $\left|x^{'}\left(\t-\t_{0}\right)\right|$ and
$x^{'}\t_{0}$ must be bounded by $M\norm x\d$. As a result, the
magnitude of $\left|g_{+,\t,h_{0}}\left(x\right)-g_{+,\t_{0},h_{0}}\left(x\right)\right|$,
which is at most $\left|x^{'}\t\right|$, is also bounded above by
a term linear in $\d$. Furthermore, since $\norm x$ is bounded by
the compactness of ${\cal X}$,\footnote{The compactness of ${\cal X}$ and the boundedness of $\norm x$ allow
for simpler exposition here but are not necessary. If $\norm{X_{i}}$
has unbounded support, the result in Lemma \ref{lem:Term1} will continue
to hold under mild tail-decay condition, or finite--fourth-moment
condition, on $\norm{X_{i}}$.} we have
\[
\left|g_{+,\t,h_{0}}\left(x\right)-g_{+,\t_{0},h_{0}}\left(x\right)\right|\leq\ol g_{\d}\left(x\right):=\ind\left\{ \left|x^{'}\t_{0}\right|<M\norm x\d\right\} \cd2M\d,
\]
and similarly for $\left|g_{-,\t,h_{0}}\left(x\right)-g_{-,\t,h_{0}}\left(x\right)\right|$.
Hence, $\ol g_{\d}\left(x\right)$ is a so-called ``envelope function''
for the function class $\left\{ g_{\t,h_{0}}\left(x\right)-g_{\t_{0},h_{0}}\left(x\right):\norm{\t-\t_{0}}\leq\d\right\} $
in the sense of 
\[
\sup_{\t:\norm{\t-\t_{0}}\leq\d}\left|g_{\t,h_{0}}\left(x\right)-g_{\t_{0},h_{0}}\left(x\right)\right|\leq\ol g_{\d}\left(x\right),\quad\forall x\in{\cal X}.
\]
By standard empirical process theory, such as in \citet*{van1996weak},
the magnitude of $\sqrt{\E\left[\ol g_{\d}\left(X_{i}\right)^{2}\right]}$
is key for the maximal inequality in the style of \eqref{eq:Max1},
which in the current setting is given by 
\[
\sqrt{\E\left[\ol g_{\d}\left(X_{i}\right)^{2}\right]}=\sqrt{\P\left(\left|\frac{X_{i}^{'}}{\norm{X_{i}}}\t_{0}\right|<M\d\right)\cd M\d^{2}}\ =\sqrt{O\left(\d\right)\cd M\d^{2}}=O\left(\d^{\frac{3}{2}}\right),
\]
where $\P\left(\frac{X_{i}^{'}}{\norm{X_{i}}}\t_0\leq M\d\right)=O\left(\d\right)$
follows from the observation that $\P\left(\frac{X_{i}^{'}}{\norm{X_{i}}}\leq M\d\right)$
is the probability of random angle between $\frac{X_{i}^{'}}{\norm{X_{i}}}$
and $\t_{0}$ being no more than $M\d$ away from $\pi/2$, which scales linearly with
$\d$ under the assumption that $p\left(x\right)$ is bounded from
above and away from zero for all $x\in{\cal X}$ in Assumption \ref{assu:Basic}(g). 

In summary, $\ol g_{\d}\left(x\right)^{2}$is at most $M\d^{2}$ and
nonzero in a region of probability measure at most $M\d$, and hence
$\E\left[\ol g_{\d}\left(X_{i}\right)^{2}\right]$ is bounded by $M\d^{3}$.
Importantly, $\left|x^{'}\t\right|$ interacts multiplicatively with
the indicator function $\ind\left\{ 0<x^{'}\t_{0}<M\norm x\d\right\} $
in \eqref{eq:Bd_gtheta}, and hence, even though indicators functions
are invariant under squaring $\ind\left\{ \cd\right\} ^{2}\equiv\ind\left\{ \cd\right\} $,
the magnitude of $\left|x^{'}\t\right|^{2}\leq M\d^{2}$ becomes smaller
in the order of magnitude after squaring, leading to the overall $M\d^{3}$
on $\E\left[\ol g_{\d}\left(X_{i}\right)^{2}\right]$.

To contrast this with the case of cubic-root asymptotics, say, in
\citet*{kim1990cube}, write the original maximum-score estimand $g_{MS,\t}\left(y,x\right):=\left(y-\frac{1}{2}\right)\ind\left\{ x^{'}\t>0\right\} $,
and observe that
\begin{align*}
\left|g_{MS,\t}\left(y,x\right)-g_{MS,\t_{0}}\left(y,x\right)\right| & =\frac{1}{2}\cd\left(\ind\left\{ x^{'}\t>0\geq x^{'}\t_{0}\right\} +\ind\left\{ x^{'}\t_{0}>0\geq x^{'}\t\right\} \right)\\
 & \leq\frac{1}{2}\cd\left\{ 0<x^{'}\t_{0}<M\norm x\d\right\} :=\ol g_{MS,\d}\left(x\right)
\end{align*}
where the envelope function $\ol g_{MS,\d}\left(x\right)$ remains
as a discrete function with 
\begin{align*}
\sqrt{\E\left[\ol g_{MS,\d}\left(x\right)^{2}\right]} & =\sqrt{\frac{1}{4}\P\left(\left|\frac{X_{i}^{'}}{\norm{X_{i}}}\t_{0}\right|<M\d\right)}\leq M\d^{\frac{1}{2}},
\end{align*}
leading to a much larger bound than $\d^{\frac{3}{2}}$ (with $\d$
thought to be close to $0$). As discussed in \citet*{kim1990cube},
the $\d^{\frac{1}{2}}$ bound above is the key driver for the cubic-root
asymptotics, and it arises both from the discreteness of the indicator
function $\ind\left\{ x^{'}\t>0\right\} $ as well as the discreteness
of the binary outcome $y_{i}-\frac{1}{2}$. In contrast, in our current
setting, the discrete outcome $y_{i}-\frac{1}{2}$ is replaced by
its conditional expectation, $h_{0}\left(x\right)=\E\left[\rest{y_{i}-\frac{1}{2}}X_{i}=x\right]$,
which is a smooth object, and furthermore the estimand $g_{+,\t,h_{0}}\left(x\right)-g_{+,\t_{0},h_{0}}\left(x\right)$
is constructed to be Lipschitz continuous in $x^{'}\t$.

\subsubsection{Analysis of Term $T_{2}=\frac{1}{\sqrt{n}}\protect\GG_{n}\left(g_{\hat{\protect\t},\hat{h}}-g_{\protect\t_{0},\hat{h}}-g_{\hat{\protect\t},h_{0}}+g_{\protect\t_{0},h_{0}}\right)$}

\noindent We now turn to the second term $T_{2}$, which involves
the first-stage nonparametric estimator $\hat{h}$ of $h$. The asymptotic
negligibility of term $T_{2}$ corresponds to the usual ``stochastic
equicontinuity'' condition, which we will seek to establish here. 

To do so, we impose the following standard sup-norm convergence of
the first-stage estimator $\hat{h}$ . First, notice that given Proposition
\ref{prop:ID_Sobolev}(b), $h_{0}\in{\cal H}$ with ${\cal H}$ denoting
the space of functions mapping from ${\cal X}$ to $\left[-\frac{1}{2},\frac{1}{2}\right]$
that possess uniformly bounded derivatives up to order $d+1$. See,
for example, \citet{hansen2008uniform}, \citet{belloni2015some}
and \citet*{chen2015optimal} for results on the sup-norm convergence
of kernel and sieve nonparametric estimators.
\begin{assumption}
\label{assu:FirstStageRate}(i) $\hat{h}\in{\cal H}$ with probability
approaching 1, and (ii) $\norm{\hat{h}-h_{0}}_{\infty}=O_{p}\left(a_{n}\right)$.
\end{assumption}
\begin{lem}
\label{lem:Term2} Under Assumptions \ref{assu:Basic}-\ref{assu:FirstStageRate},
for some constant $M>0$, 
\begin{equation}
P\sup_{\t\in\T,h\in{\cal H}:\norm{\t-\t_{0}}\leq\d,\norm{h-h_{0}}_{\infty}\leq Ka_{n}}\left|\GG_{n}\left(g_{\t,h}-g_{\t_{0},h}-g_{\t,h_{0}}+g_{\t_{0},h_{0}}\right)\right|\leq M\d.\label{eq:Max2}
\end{equation}
\end{lem}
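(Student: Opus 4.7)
The proof follows the same empirical-process template as Lemma~\ref{lem:Term1}: I construct an $L^2$-controlled envelope for the mixed increment $\Delta_{\theta,h}(x):=g_{\theta,h}(x)-g_{\theta_0,h}(x)-g_{\theta,h_0}(x)+g_{\theta_0,h_0}(x)$ and then apply a standard maximal inequality. The new ingredient relative to Lemma~\ref{lem:Term1} is that the $1$-Lipschitz continuity of the outer ReLU in its first argument absorbs the $h$-perturbation into a \emph{pointwise} bound of order $a_n$, while the same narrow-band localization around the hyperplane $\{x:x'\theta_0=0\}$ as in Lemma~\ref{lem:Term1} continues to control the support of $\Delta_{\theta,h}$.

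\textbf{Step 1 (pointwise envelope).} Decompose $\Delta_{\theta,h}=\Delta_{\theta,h}^{+}+\Delta_{\theta,h}^{-}$ along the $g_{\pm}$ split. First, because $[\cdot]_+$ is $1$-Lipschitz, $|g_{+,\theta,h}(x)-g_{+,\theta,h_0}(x)|\leq|h(x)-h_0(x)|\leq Ka_n$, and symmetrically with $\theta$ replaced by $\theta_0$ and for $g_-$, yielding $|\Delta_{\theta,h}(x)|\leq 4Ka_n$ uniformly in $(x,\theta,h)$. Second, a case analysis analogous to the one leading to \eqref{eq:Bd_gtheta}, now extended to track the additional sign possibilities of $h(x)-[-x'\theta]_+$ and $h_0(x)-[-x'\theta_0]_+$, shows that whenever $|x'\theta_0|>C(\delta+a_n)$ for a sufficiently large constant $C$, both $x'\theta$ and $x'\theta_0$ lie strictly on the same side of $0$, and Assumptions~\ref{assu:Basic}(c)(e) combined with $\|h-h_0\|_\infty\leq Ka_n$ force $h(x)$ and $h_0(x)$ to lie strictly on the same side of $0$ as well. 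On this event the four terms in each of $\Delta_{\theta,h}^{\pm}$ collapse to the identical expression and the mixed increment vanishes. Combining the two bounds gives the envelope $\bar g_\delta(x):=4Ka_n\cdot\ind\{|x'\theta_0|\leq C(\delta+a_n)\}$.

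\textbf{Step 2 ($L^2$-norm and maximal inequality).} Just as in Lemma~\ref{lem:Term1}, Assumption~\ref{assu:Basic}(f)(g) gives $P(|X_i'\theta_0|\leq C(\delta+a_n))=O(\delta+a_n)$, whence $\|\bar g_\delta\|_{L^2}\leq Ca_n\sqrt{\delta+a_n}=O(a_n\sqrt\delta)$ under the maintained rate $a_n=O(\delta)$. The function class is indexed by $\theta\in\S^{d-1}$ and $h\in{\cal H}$, with ${\cal H}$ a Holder ball of smoothness $d+1>d/2$ (Proposition~\ref{prop:ID_Sobolev}(ii) and Assumption~\ref{assu:FirstStageRate}(i)), so its bracketing entropy integral $\int_0^{\|\bar g_\delta\|_{L^2}}\sqrt{\log N_{[\,]}(\epsilon,{\cal H},L_2(P))}\,d\epsilon$ is finite. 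A standard maximal inequality such as Theorem~2.14.1 of \citet*{van1996weak} then delivers
\[
\E\sup_{(\theta,h)}|\GG_n\Delta_{\theta,h}|\ \lesssim\ \|\bar g_\delta\|_{L^2}+\|\bar g_\delta\|_\infty/\sqrt n\ =\ O(a_n\sqrt\delta)\ =\ O(\delta),
\]
which is exactly \eqref{eq:Max2}.

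\textbf{Main obstacle.} The delicate step is the localization in Step~1: one must verify that the second-order mixed increment \emph{vanishes} (not merely becomes small) outside the $O(\delta+a_n)$ band, which requires carefully interleaving the $\theta$-perturbation $O(\|x\|\delta)$ in $x'\theta$ with the $h$-perturbation $O(a_n)$ in $h(x)-[-x'\theta]_+$ across all four sign configurations of the two inner ReLU arguments. The bookkeeping relies crucially on the identifying restriction $\operatorname{sign}(h_0(x))=\operatorname{sign}(x'\theta_0)$ together with the local lower bound $|h_0(x)|\gtrsim|x'\theta_0|$ implied by Assumption~\ref{assu:Basic}(e), which ensures that the $h$-perturbation only enlarges the relevant support by $O(a_n)$, dominated by the $O(\delta)$ width of the $\theta$-induced band. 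Once this alignment is in place, the four-term cancellation is automatic and the remaining empirical-process argument is routine.
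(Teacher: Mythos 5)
Your overall template (an $L^{2}$-controlled envelope for the mixed increment followed by the maximal inequality of \citet*{van1996weak}, Theorem 2.14.1) is exactly the paper's, and your localization claim in Step 1 is correct. The problem is the envelope you end up with. The paper's proof rests on the observation that the outer ReLU is $1$-Lipschitz in the \emph{index} argument, so that
\[
\left|g_{+,\t,h}(x)-g_{+,\t_{0},h}(x)-g_{+,\t,h_{0}}(x)+g_{+,\t_{0},h_{0}}(x)\right|
\le 2\left|\left[-x'\t\right]_{+}-\left[-x'\t_{0}\right]_{+}\right|
\le 2\left|x'(\t-\t_{0})\right|\le 2\d ,
\]
which yields the constant envelope $G_{2,\d}=M\d$, hence $\sqrt{PG_{2,\d}^{2}}=M\d$ and the bound $M\d$ with no side conditions. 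You instead use only the Lipschitz continuity in the $h$-argument together with the band localization, arriving at the envelope $4Ka_{n}\ind\{|x'\t_{0}|\le C(\d+a_{n})\}$ with $L^{2}$-norm of order $a_{n}\sqrt{\d+a_{n}}$. This is $O(\d)$ only under your parenthetical assumption $a_{n}=O(\d)$, which is not a hypothesis of the lemma: as $\d\downarrow 0$ with $a_{n}$ fixed your bound is bounded below by $Ca_{n}^{3/2}>0$, so it cannot be $\le M\d$ uniformly in $\d$. Worse, the side condition fails precisely where the lemma is used. In the peeling argument the relevant shells start at $\d\asymp\d_{n}=n^{-s/(2s+1)}$, while $a_{n}\gtrsim\sqrt{\log n/(nb_{n}^{d})}$ gives $a_{n}/\d_{n}\asymp\sqrt{\log n}\,n^{(d-1)/(2(2s+1))}\to\infty$; in that regime your bound degrades to $a_{n}^{3/2}$, which for $d\ge 4$ and $s$ close to $d$ exceeds $M\d_{n}$ by a polynomial factor.

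The gap is easy to close: either drop the localization entirely and use the $\t$-Lipschitz pointwise bound above (the paper's one-line argument), or keep your band and intersect the two pointwise bounds to get the envelope $\min\left(M\d,\,4Ka_{n}\right)\ind\{|x'\t_{0}|\le C(\d+a_{n})\}$, whose $L^{2}$-norm is $O(\d)$ (indeed $o(\d)$) in every regime, since it is at most $a_{n}\sqrt{\d}$ when $a_{n}\le\d$ and at most $\d\sqrt{a_{n}}$ otherwise. As written, however, your Step 2 does not establish \eqref{eq:Max2} for the full range of $\d$ that the rate derivation requires.
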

Loosely speaking, Lemma \eqref{lem:Term2} implies that, whenever
$\norm{\hat{\t}-\t_{0}}$ converges slower than the $\sqrt{n}$ rate,
\[
T_{2}=O_{p}\left(\frac{1}{\sqrt{n}}\norm{\hat{\t}-\t_{0}}\right)=o_{p}\left(\norm{\hat{\t}-\t_{0}}^{2}\right),
\]
which will become asymptotically negligible, delivering a ``stochastic
equicontinuity'' condition that is essential for the asymptotic normality
of $\hat{\t}$. The key model ingredient underlying this result is
the encoding of the sign restrictions via compositions of the Lipschitz-continuous
ReLU-function instead of using the discrete indicator functions as
in the formulation of the original maximum score estimator. The Lipschitz
continuity of ReLU functions, and consequently the Lipschitz continuity
of the function $g_{\t,h}\left(x\right)=g_{+,\t,h}\left(x\right)+g_{-,\t,h}\left(x\right)$,
ensure that small deviations in $\t,h$ and $x$ translate into small
deviations in $g_{\t,h}\left(x\right)$, providing the level of smoothness
for the stochastic equicontintuity condition. 

\subsubsection{Analysis of Term $T_{3}=P\left(g_{\hat{\protect\t},h_{0}}-g_{\protect\t_{0},h_{0}}\right)$}

\noindent Now, we turn to the third term $T_{3}=P\left(g_{\hat{\t},h_{0}}-g_{\t_{0},h_{0}}\right)$,
which is a familiar term that captures the quadratic curvature of
the population criterion for $\t$ in a neighborhood of $\t_{0}$.
Technically, the characterization of $T_{3}$ boils down to the following
second-order Taylor expansion of $Pg_{\t,h_{0}}$ around $\t_{0}$:
\[
P\left(g_{\t,h_{0}}-g_{\t_{0},h_{0}}\right)=\Dif_{\t}Pg_{\t_{0},h_{0}}\left(\t-\t_{0}\right)+\frac{1}{2}\left(\t-\t_{0}\right)^{'}\Dif_{\t\t}Pg_{\t_{0},h_{0}}\left(\t-\t_{0}\right)+o\left(\norm{\t-\t_{0}}^{2}\right)
\]
where the gradient $\Dif_{\t}Pg_{\t_{0},h_{0}}$ and the Hessian $\Dif_{\t\t}Pg_{\t_{0},h_{0}}$
are well-defined since $Pg_{\t,h}$ is differentiable even though
$g_{\t,h}$ has kinks. Moreover, since $g_{\t,h}$ is Lipschitz-continuous
and almost surely differentiable, the gradient can be calculated easily
via $\Dif_{\t}Pg_{\t,h}=P\Dif_{\t}g_{\t,h}$ However, $\Dif_{\t}g_{\t,h}$
will no longer be Lipschitz-continuous and in fact involve indicator
functions, and thus the Hessian $\Dif_{\t\t}Pg_{\t,h}=\Dif_{\t}P\Dif_{\t}g_{\t,h}$
involves differentiation with respect to integral boundaries. As a
result, $\Dif_{\t\t}Pg_{\t,h}$ becomes a ``surface integral'',
or formally, an integral over a lower-dimensional manifolds with respect
to a lower-dimensional Hausdorff measure. 

Specifically, the $k$-dimensional Hausdorff measure in $\R^{d}$,
denoted by ${\cal H}^{k}$ for some $k\leq d$, is a ``lower-dimensional''
measure that allows us to define nontrivial integrals over lower-dimensional
subsets in $\R^{d}$ that has measure $0$ with respect to ${\cal L}^{d},$
the Lebesgue measure on $\R^{d}$. See, for example, Chapter 2 of
\citet*{evans2015measure} for the formal definition of the Hausdorff
measure. An important feature of the Hausdorff measure is the equivalence
between ${\cal H}^{k}$ and ${\cal L}^{k}$ on $\R^{k}$ for any $k$,
i.e., the $k$-dimensional Hausdorff measure is in some sense the
same as the Lebesgue measure on $\R^{k}$. On the other hand, while
a lower-dimensional space, such as a hyperplane $\left\{ x\in\R^{d}:x^{'}\t_{0}=0\right\} $
in $\R^{d}$, is a measure-0 set with respect to ${\cal L}^{d}$ and
thus the integral $\int_{\left\{ x\in\R^{d}:x^{'}\t_{0}=0\right\} }m\left(x\right)d{\cal L}^{d}\left(x\right)$
is trivially 0 for any function $m$, integrals with respect to the
$\left(d-1\right)$-dimensional Hausdorff measure of the form
\[
\int_{\left\{ x\in\R^{d}:x^{'}\t_{0}=0\right\} }m\left(x\right)d{\cal H}^{d-1}\left(x\right)
\]
is nontrivial (i.e., may take values other than $0$). 
\begin{lem}
\label{lem:Term3}For some positive semidefinite matrix of rank $d-1$,
we have
\[
P\left(g_{\t,h_{0}}-g_{\t_{0},h_{0}}\right)=-\left(\t-\t_{0}\right)^{'}V\left(\t-\t_{0}\right)+o\left(\norm{\t-\t_{0}}^{2}\right)
\]
with
\begin{equation}
V:=\int_{x^{'}\t_{0}=0}\frac{f\left(\rest 0x\right)}{f\left(\rest 0x\right)+1}xx^{'}p\left(x\right)d{\cal H}^{d-1}\left(x\right)\label{eq:V_def}
\end{equation}
where ${\cal H}^{d-1}$ denotes the $\left(d-1\right)$-dimensional
Hausdorff measure in $\R^{d}$.
\end{lem}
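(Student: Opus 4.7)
The plan is to reduce the $d$-dimensional expectation $P(g_{\theta,h_0}-g_{\theta_0,h_0})$ to a one-dimensional integral along the normal to the hyperplane $\mathcal{L}_0:=\{x\in\mathcal{X}:x'\theta_0=0\}$, which after an explicit inner computation leaves a $(d-1)$-dimensional surface integral over $\mathcal{L}_0$ against $\mathcal{H}^{d-1}$. The key observation is that, by the sign-alignment restriction \eqref{eq:Mono_Equiv}, at $\theta=\theta_0$ we have $g_{+,\theta_0,h_0}(x)=[h_0(x)]_+$ and $g_{-,\theta_0,h_0}(x)=[-h_0(x)]_+$, so the pointwise difference $g_{+,\theta,h_0}(x)-g_{+,\theta_0,h_0}(x)$ is supported on the wedge $\{h_0(x)>0,\,x'\theta<0\}$ and satisfies there
\[
g_{+,\theta_0,h_0}(x)-g_{+,\theta,h_0}(x)=h_0(x)-\bigl[h_0(x)+x'\theta\bigr]_+,
\]
with a mirror identity for $g_-$ on $\{h_0(x)<0,\,x'\theta>0\}$. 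Both wedges have Lebesgue thickness of order $\|\theta-\theta_0\|$ around $\mathcal{L}_0$, which is what eventually produces the $(d-1)$-dimensional surface measure in the answer.

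The core of the plan is to introduce normal coordinates $x=y+t\,\theta_0$ with $y\in\mathcal{L}_0$ and $t=x'\theta_0$, so that $dx=dt\,d\mathcal{H}^{d-1}(y)$ by Fubini for this linear splitting. Writing $\delta:=\theta-\theta_0$ and using $\theta_0'\delta=-\tfrac12\|\delta\|^2+O(\|\delta\|^4)$ on the sphere, I would expand $x'\theta=t+y'\delta+O(t\|\delta\|^2)$, and use the representation $h_0(x)=F(x'\theta_0\mid x)-F(0\mid x)$ together with Assumption~\ref{assu:Basic}(b,c) to obtain the Taylor expansion $h_0(y+t\theta_0)=f(0\mid y)\,t+O(t^2)$ uniformly in $y$. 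Plugging these into the $g_+$ piece on the side $y'\delta<0$, the wedge in the $t$-direction is $0<t<-y'\delta\,(1+o(1))$, and the integrand $h_0-[h_0+x'\theta]_+$ is piecewise linear in $t$ with a switch at $t^{\ast}=-y'\delta/(1+f(0\mid y))$ where $h_0+x'\theta$ changes sign. Computing the two resulting linear pieces yields an inner integral proportional to $\tfrac{f(0\mid y)}{1+f(0\mid y)}(y'\delta)^2 p(y)$, plus higher-order remainders. A symmetric calculation on $\{y'\delta>0\}$ produces the matching contribution from $g_-$; the two together cover all of $\mathcal{L}_0$ and assemble into the claimed quadratic form $(\theta-\theta_0)'V(\theta-\theta_0)$. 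Positive semidefiniteness of $V$ is immediate from this representation, and its rank is exactly $d-1$ because its range lies in the tangent space of $\mathcal{L}_0$, with Assumption~\ref{assu:Basic}(e,g) ruling out further degeneracy on the hyperplane.

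The main obstacle will be showing that the remainder after the explicit inner integration is genuinely $o(\|\delta\|^2)$ rather than merely $O(\|\delta\|^2)$. The wedge has Lebesgue measure $O(\|\delta\|)$ and the integrand is $O(\|\delta\|)$ on it, so naive bounds only yield $O(\|\delta\|^2)$, which is not enough. To squeeze out the extra $o(1)$ factor, I would match the leading-order contributions from the two sub-regions (where $h_0+x'\theta$ is positive versus non-positive) \emph{before} discarding remainders, using the extra smoothness of $F(\cdot\mid\cdot)$ from Assumption~\ref{assu:Basic}(c) to show that the Taylor remainders in $h_0(x)=f(0\mid y)t+O(t^2)$, in $f(0\mid x)=f(0\mid y)+O(t)$, and in the change-of-variable Jacobian all integrate to $o(\|\delta\|^2)$ uniformly in $y$. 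Uniformity over $y\in\mathcal{L}_0$ then follows from compactness of $\mathcal{X}$ in Assumption~\ref{assu:Basic}(f) and the uniform bounds on all relevant derivatives.
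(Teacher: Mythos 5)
Your proposal is correct in substance but reaches the result by a genuinely different route from the paper. The paper never touches the difference $P(g_{\theta,h_0}-g_{\theta_0,h_0})$ directly: it computes $\nabla_\theta Pg_{\theta,h}=\big[\int_{x'\theta<0}-\int_{x'\theta<-h(x)}\big]xp(x)\,dx$, checks that this gradient vanishes at $\theta_0$, and then obtains the Hessian by differentiating the two moving integration boundaries, which produces the two surface integrals over $\{x'\theta_0=0\}$ and $\{x'\theta_0=-h_0(x)\}$ (the latter coinciding with the former since $h_0$ vanishes exactly on that hyperplane) whose difference is $V$ after substituting $\nabla_x h_0(x)'\theta_0=f(0\mid x)$ on the hyperplane. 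You instead compute the zeroth-order object directly: you identify the wedge supporting the pointwise difference, pass to normal coordinates $x=y+t\theta_0$, and evaluate the inner one-dimensional integral explicitly, with the switch point $t^\ast=-y'\delta/(1+f(0\mid y))$ playing exactly the role of the second boundary surface $\{x'\theta=-h_0(x)\}$ in the paper's argument. Your route is more elementary and self-contained — it avoids justifying two successive differentiations of integrals over $\theta$-dependent regions, and it makes the $o(\|\delta\|^2)$ control of the remainder an explicit Taylor-remainder estimate over a wedge of thickness $O(\|\delta\|)$ (each remainder term picks up an extra power of $\|\delta\|$, so it is in fact $O(\|\delta\|^3)$) rather than an appeal to continuity of the Hessian. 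The paper's route is shorter and reuses the same gradient/Hessian machinery needed for the analysis of $T_4$. Both arguments rely on the same geometric facts: the difference is supported in an $O(\|\delta\|)$-neighborhood of the hyperplane, and the positive and negative parts partition $\{x'\theta_0=0\}$ according to the sign of $y'\delta$.

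One bookkeeping remark: carried out exactly, your inner integral on the wedge equals $\tfrac{1}{2}\tfrac{f(0\mid y)}{1+f(0\mid y)}(y'\delta)^2p(y)$, so summing the $g_+$ and $g_-$ contributions gives $P(g_{\theta,h_0}-g_{\theta_0,h_0})=-\tfrac{1}{2}\delta'V\delta+o(\|\delta\|^2)$ with $V$ as in \eqref{eq:V_def}. This is consistent with a Hessian equal to $-V$, whereas the lemma as stated drops the factor $\tfrac12$ (and the paper's own proof reports the Hessian as $+V$ without the minus sign from the boundary differentiation). The discrepancy only rescales $V$ and propagates into the variance formula, not the rate, but your explicit calculation is the one that pins the constant down correctly; you should state the multiplicative constant you actually obtain rather than writing ``proportional to.''
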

Lemma \ref{lem:Term3} can be viewed as a local-identification condition,
which says that $Pg_{\t,h_{0}}$ becomes smaller than $Pg_{\t_{0},h_{0}}$
locally with quadratic curvature as $\t$ moves away from the true
$\t_{0}$. Essentially, \eqref{eq:V_def} can be viewed as a ``surface
integral'' over the $\left(d-1\right)$-dimensional hyperplane $\left\{ x\in\R^{d}:x^{'}\t_{0}=0\right\} $.
Note that, even though $V$ has rank $d-1$ instead of $d$, $V$
should still be regarded to have ``full rank'' with respect to the
parameter space $\T=\S^{d-1}$, which also has dimension $d-1$ instead
of $d$. This is similar to the corresponding result in \citet*{kim1990cube}. 

Note that the formula of the Hessian matrix $V$ features the probability
density $f\left(\rest 0x\right)$ in the integrand, which reflects
the observation that the sign-restriction identification \ref{eq:Mono_Equiv}
is driven by the conditional median restriction and thus local in
nature. If, for example, $f\left(\rest 0x\right)=0$ for all $x\in{\cal X}$,
then the conditional median restriction is vacuous and thus identification
will fail. The dependence of the identification on $f\left(\rest 0x\right)$,
i.e., the ``conditional median density'', here is also featured
in \citet*{kim1990cube} and \citet*{horowitz1992smoothed}, as well
as more broadly in quantile regression settings. Hence, we assume
in Assumption \ref{assu:Basic} that $f\left(\rest 0x\right)$ is
bounded away from $0$. 

\subsubsection{Analysis of Term $T_{4}=P\left(g_{\hat{\protect\t},\hat{h}}-g_{\protect\t_{0},\hat{h}}-g_{\hat{\protect\t},h_{0}}+g_{\protect\t_{0},h_{0}}\right)$}

The last term, $T_{4}$, reflects the influence of the first-stage
nonparametric estimation on the second-stage M-estimation criterion
function, i.e., how $P\left(g_{\hat{\t},\hat{h}}-g_{\t_{0},\hat{h}}\right)$
differs from $P\left(g_{\hat{\t},h_{0}}-g_{\t_{0},h_{0}}\right).$
This term corresponds to the derivation of the influence function
through functional differentiation in standard semiparametric two-stage
asymptotic theory. 

We work with the following second-order Taylor expansion of $T_{4}$
w.r.t. $\t$ around $\t_{0}$:
\begin{align*}
 & P\left(g_{\t,h}-g_{\t_{0},h}-g_{\t,h_{0}}+g_{\t_{0},h_{0}}\right)=P\left(g_{\t,h}-g_{\t,h_{0}}\right)-P\left(g_{\t_{0},h}-g_{\t_{0},h_{0}}\right)\\
=\  & \Dif_{\t}P\left(g_{\t_{0},h}-g_{\t_{0},h_{0}}\right)\left(\t-\t_{0}\right)+\left(\t-\t_{0}\right)\Dif_{\t\t}P\left(g_{\t_{0},h}-g_{\t_{0},h_{0}}\right)\left(\t-\t_{0}\right)+o\left(\norm{\t-\t_{0}}^{2}\right).
\end{align*}
The leading term $\Dif_{\t}P\left(g_{\t_{0},h}-g_{\t_{0},h_{0}}\right)$
can be linearized through pathwise functional differentiation as
\begin{equation}
\Dif_{\t}P\left(g_{\t_{0},h}-g_{\t_{0},h_{0}}\right)=D_{h}\left[\Dif_{\t}Pg_{\t_{0},h_{0}},h-h_{0}\right]+O\left(\norm{h-h_{0}}_{\infty}\norm{\Dif_{x}\left(h-h_{0}\right)}_{\infty}\right),\label{eq:DPdg_lin}
\end{equation}
where the formula of $D_{h}\left[\Dif_{\t}Pg_{\t_{0},h_{0}},h-h_{0}\right]$
is derived in Lemma \ref{lem:Term4} below. With $\hat{\t}$ and $\hat{h}$
plugged in, the term $\left(\hat{\t}-\t_{0}\right)\Dif_{\t\t}P\left(g_{\t_{0},\hat{h}}-g_{\t_{0},h_{0}}\right)\left(\hat{\t}-\t_{0}\right)$
will become asymptotically negligible provided that $\Dif_{\t\t}P\left(g_{\t_{0},\hat{h}}-g_{\t_{0},h_{0}}\right)\pto0$
holds, which can be guaranteed by the convergence of $\Dif_{x}\hat{h}$
to $\Dif_{x}h_{0}$. 
\begin{assumption}
\label{assu:FirstStageDeriv} $\norm{\Dif_{x}\hat{h}-\Dif_{x}h_{0}}_{\infty}=O_{p}\left(c_{n}\right)$
with $c_{n}\downto0$.
\end{assumption}
\begin{lem}
\label{lem:Term4}Under Assumption \ref{assu:FirstStageDeriv}\emph{,
we have}
\begin{align*}
 & P\left(g_{\t,\hat{h}}-g_{\t_{0},\hat{h}}-g_{\t,h_{0}}+g_{\t_{0},h_{0}}\right)\\
= & D_{h}\left[P\Dif_{\t}g_{\t_{0},h_{0}},\hat{h}-h_{0}\right]^{'}\left(\t-\t_{0}\right)+O_{p}\left(\norm{\t-\t_{0}}a_{n}c_{n}\right)+o_{p}\left(\norm{\t-\t_{0}}^{2}\right)
\end{align*}
where 
\begin{align}
D_{h}\left[\Dif_{\t}Pg_{\t_{0},h_{0}},h-h_{0}\right] & :=\int_{x^{'}\t_{0}=0}\left[h\left(x\right)-h_{0}\left(x\right)\right]\frac{1}{f\left(\rest 0x\right)+1}xp\left(x\right)d{\cal H}^{d-1}\left(x\right).\label{eq:Dh_dh}
\end{align}
\end{lem}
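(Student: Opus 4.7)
The plan is to write $T_{4}=\psi(\t,\hat{h})-\psi(\t,h_{0})$ where $\psi(\t,h):=P(g_{\t,h}-g_{\t_{0},h})$, exploit $\psi(\t_{0},h)\equiv0$ to annihilate the zeroth-order term in $\t$, and then linearize the resulting first-order-in-$\t$ coefficient in $h$ around $h_{0}$ via a coarea reduction to the hyperplane $H:=\{x\in\R^{d}:x^{'}\t_{0}=0\}$. A second-order Taylor expansion in $\t$ at $\t_{0}$ gives
\[
\psi(\t,h)=(\t-\t_{0})^{'}\Dif_{\t}\psi(\t_{0},h)+\tfrac{1}{2}(\t-\t_{0})^{'}\Dif_{\t\t}\psi(\t_{0},h)(\t-\t_{0})+o(\norm{\t-\t_{0}}^{2}),
\]
reducing the task to (i) linearizing $\Dif_{\t}\psi(\t_{0},\hat{h})$ in $\hat{h}-h_{0}$ and (ii) showing $\Dif_{\t\t}\psi(\t_{0},\hat{h})-\Dif_{\t\t}\psi(\t_{0},h_{0})=o_{p}(1)$.

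For step (i), differentiation under the integral (valid since $g_{\t,h}$ is Lipschitz in $\t$) yields
\[
\Dif_{\t}g_{\t,h}(x)=x\bigl[\ind\{h(x)>-x^{'}\t>0\}-\ind\{-h(x)>x^{'}\t>0\}\bigr],
\]
and under \eqref{eq:Mono_Equiv} both indicators vanish at $(\t_{0},h_{0})$, so $\Dif_{\t}Pg_{\t_{0},h_{0}}=0$ and only the $h$-varying piece survives. To linearize $\Dif_{\t}Pg_{\t_{0},h}$ in $h$ at $h_{0}$, I would parameterize $\R^{d}$ by $x=y+u\t_{0}$ with $y\in H$ and $u=x^{'}\t_{0}\in\R$, so that $dx=du\,d\mathcal{H}^{d-1}(y)$. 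Using $h_{0}(y+u\t_{0})=uf(0|y)+O(u^{2})$, which follows from $h_{0}(x)=F(x^{'}\t_{0}|x)-\tfrac{1}{2}$, the first indicator region collapses to a thin $u$-slab just below $H$ of length $[h(y)-h_{0}(y)]_{+}/(1+f(0|y))$, and the second to a slab just above $H$ of length $[h_{0}(y)-h(y)]_{+}/(1+f(0|y))$; combining the two via $[t]_{+}-[-t]_{+}=t$ produces precisely \eqref{eq:Dh_dh} as the leading term.

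The main obstacle is controlling the remainder of this linearization at the advertised $a_{n}c_{n}$ rate. The exact boundary of each indicator region is defined implicitly by $u(1+f(0|y))+\Delta h(y+u\t_{0})+O(u^{2})=0$ with $\Delta h:=\hat{h}-h_{0}$. Writing $\Delta h(y+u\t_{0})=\Delta h(y)+u\,b(y,u)$ where $|b|\leq\norm{\Dif_{x}\Delta h}_{\infty}=O_{p}(c_{n})$, and noting that the leading-order root $u_{*}=-\Delta h(y)/(1+f(0|y))$ has magnitude $O_{p}(a_{n})$, the true boundary differs from $u_{*}$ by $O_{p}(u_{*}c_{n}+u_{*}^{2})=O_{p}(a_{n}c_{n})$ in standard regimes where $a_{n}\lesssim c_{n}$. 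Integrating this boundary shift against the bounded surface density $xp(x)$ over $H$ and multiplying by $(\t-\t_{0})$ yields the $O_{p}(\norm{\t-\t_{0}}a_{n}c_{n})$ term. For (ii), $\Dif_{\t\t}\psi(\t_{0},h)$ admits a surface-integral representation analogous to $V$ in Lemma \ref{lem:Term3} with an integrand depending continuously on $(h,\Dif_{x}h)$, so Assumptions \ref{assu:FirstStageRate}--\ref{assu:FirstStageDeriv} give $\Dif_{\t\t}\psi(\t_{0},\hat{h})-\Dif_{\t\t}\psi(\t_{0},h_{0})=o_{p}(1)$ and hence the $o_{p}(\norm{\t-\t_{0}}^{2})$ error. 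The most delicate technical step is rigorously justifying the coarea change of variables and tracking how the ReLU kinks of $g_{\t,h}$ move as $(\t,h)$ vary in a neighborhood of $(\t_{0},h_{0})$.
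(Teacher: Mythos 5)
Your proposal is correct and follows essentially the same route as the paper's proof: a second-order Taylor expansion in $\t$ exploiting $\Dif_{\t}Pg_{\t_{0},h_{0}}=0$, a change of coordinates aligned with $\t_{0}$ that reduces the linearization of $\Dif_{\t}Pg_{\t_{0},h}$ to the width of a thin slab around the hyperplane $\{x:x^{'}\t_{0}=0\}$ determined by an implicit boundary equation, a remainder bound of order $\norm{\hat{h}-h_{0}}_{\infty}\bigl(\norm{\hat{h}-h_{0}}_{\infty}+\norm{\Dif_{x}(\hat{h}-h_{0})}_{\infty}\bigr)=O_{p}(a_{n}c_{n})$, and a surface-integral continuity argument for the Hessian term. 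The paper packages the slab-width computation as a Gateaux derivative along $h_{0}+t(\hat{h}-h_{0})$ with implicit differentiation of the boundary $u_{1}^{*}(u_{-1},t)$ in $t$, but this is the same calculation as yours.
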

The term $O_{p}\left(\norm{\t-\t_{0}}a_{n}c_{n}\right)$ will become
asymptotically negligible if $a_{n}c_{n}=o_{p}\left(\norm{\hat{\t}-\t_{0}}\right)$,
which can be viewed as a generalization/adaption of the usual ``$o_{p}\left(n^{-1/4}\right)$''
rate requirement on the first-stage convergence in standard semiparametric
two-stage asymptotic theory that features $n^{-1/2}$ convergence
rate for the final estimator $\hat{\t}$. As we will show in Theorem
\ref{thm:Thm_Bin_Rate} later, the requirement $\norm{\hat{h}-h_{0}}_{\infty}=o_{p}\left(\sqrt{\norm{\hat{\t}-\t_{0}}}\right)$
can be satisfied under proper smoothness condition on $h_{0}$.

Note that $D_{h}\left[\Dif_{\t}Pg_{\t_{0},h_{0}},\hat{h}-h_{0}\right]$
can be viewed as the convergence of a plug-in estimator of lower-dimensional
integral over the nonparametric function $h_{0}$ over the hyperplane
$\left\{ x:x^{'}\t_{0}=0\right\} $. Specifically, we can write 
\[
D_{h}\left[\Dif_{\t}Pg_{\t_{0},h_{0}},h-h_{0}\right]=L\left(\hat{h}\right)-L\left(h_{0}\right)
\]
with
\begin{equation}
L\left(h\right):=\int_{x^{'}\t_{0}=0}h\left(x\right)\frac{1}{f\left(\rest 0x\right)+1}xp\left(x\right)d{\cal H}^{d-1}\left(x\right).\label{eq:Lh_func}
\end{equation}
Note that $L\left(h\right)$ is a linear functional of $h$, and the
asymptotic behavior of the plugged-in estimator for linear functionals
has been widely studied in the literature on nonparametric and semiparametric
inference. While there are many results available for ``point evaluation
functionals'' and ``full-dimensional integration functionals'',
there are relatively few results for ``lower-dimensional integration
functionals'' like \eqref{eq:Lh_func}. Hence we develop results
for the asymptotic behavior of plug-in estimators of \eqref{eq:Lh_func}
in this paper. 

So far we have not restricted the form of the first-stage nonparametric
estimator $\hat{h}$, and thus all our results above hold for any
form of $\hat{h}$ that satisfies Assumption \ref{assu:FirstStageRate}.
However, now we will need to be more explicit about $\hat{h}$, and
focus our attention on the Nadaraya-Watson kernel estimators and linear
series estimators, which are two leading classes of nonparametric
estimators. We provide the required conditions and results for both
classes separately below. 

\subsubsection*{First Stage by Nadaraya-Watson Kernel Regression}

\noindent \textbf{Lemma 5a} \customlabel{lem:T4_Kern}{5a} \emph{Under
Assumption \ref{assu:KernelSeries}(a),
\begin{equation}
D\left[P\Dif_{\t}g_{\t_{0},h_{0}},\hat{h}-h_{0}\right]=O_{p}\left(\frac{1}{\sqrt{nb_{n}}}+b_{n}^{s}\right)\label{eq:DPDh_rate_Kern}
\end{equation}
Setting $b_{n}\sim n^{-\frac{1}{2s+1}}$ leads to the optimal rate
of convergence $n^{-\frac{s}{2s+1}}$. With undersmoothing bandwidth
$b_{n}=o\left(n^{-\frac{1}{2s+1}}\right)$, we have
\[
\sqrt{nb_{n}}D\left(P\Dif_{\t}g_{\t_{0},h_{0}},\hat{h}-h_{0}\right)\dto\cN\left({\bf 0},\O\right),
\]
with 
\begin{align*}
\O & :=\int G^{2}\left(t\right)dt\cd\int_{x^{'}\t_{0}=0}\frac{\s_{0}^{2}\left(x\right)}{(f\left(\rest 0x\right)+1)^2}xx^{'}p\left(x\right)d{\cal H}^{d-1}\left(x\right),\\
G\left(t\right) & :=\int_{x^{'}\t_{0}=0}K\left(x\right)d{\cal H}^{d-1}\left(x\right)\\
\s_{0}^{2}\left(x\right) & :=\text{Var}\left(\rest{Y_{i}}X_{i}=x\right)=\frac{1}{4}-h_{0}^{2}\left(x\right)
\end{align*}
}

Lemma \ref{lem:T4_Kern} shows that the asymptotics of $L\left(\hat{h}\right)$
is similar to the asymptotics of univariate nonparametric (kernel)
regressions. Specifically, the magnitude of the (square root of) variance
term in \eqref{eq:DPDh_rate_Kern} is $\left(nb_{n}\right)^{-1/2}$,
and consequently the optimal rate of convergence $n^{-\frac{s}{2s+1}}$,
do \textbf{not} depend on the dimension $d$ of the first-stage nonparametric
estimation of $h_{0}$. 

This is a highly intuitive result. It is well-known from the literature
that plug-in estimators of point evaluation functionals converge at
``fully nonparametric rate'' no faster than $n^{-\frac{s}{2s+d}}$,
while plug-in estimators of (regular) ``full-dimensional integral
functionals'' converge at ``parametric rate'' $n^{-\frac{1}{2}}$,
since the ``full-dimensional integration'' effectively reduces the
dimensionality of the estimation problem by aggregating information
(and errors) over the whole $d$-dimensional support of ${\cal X}$.
Here, we are dealing a ``$\left(d-1\right)$-dimensional integral'',
which can be viewed as an intermediate case between ``point evaluation''
and ``full-dimensional integral'' functionals, and as expected our
result shows that plug-in estimators of our $\left(d-1\right)$-dimensional
integral also features an ``intermediate'' convergence rate. This
result is also consistent to the one in \citet*{newey1994kernel},
who also demonstrates accelerated convergence rates for kernel estimation
of ``partial means'', which are defined as integrals over a subvector
of $x$.\footnote{The result on partial means in \citet*{newey1994kernel} requires
that the partial means are defined with respect to a given subvector
of $x$, while our result here covers linear combinations of the whole
vector of $x$ in the form of $x^{'}\t_{0}$.}

Lemma \ref{lem:T4_Kern} can be established by an adaption of the
proof in \citet*{newey1994kernel}. The key idea is the observation
that $G\left(t\right)$, defined as a lower-dimensional 
integral of the multivariate kernel function $K$ over the $\left(d-1\right)$-dimensional
hyperplane $\left\{ x:x^{'}\t_{0}=0\right\} $, itself qualifies as
a univariate kernel function. Furthermore, $G\left(t\right)$ is also
of smoothness order $s$. Hence, intuitively the $\left(d-1\right)$-dimensional
integral over $\left\{ x:x^{'}\t_{0}=0\right\} $ reduces the underlying
dimensionality of the kernel nonparametric regression, thus delivering
accelerated rate of convergence for $L\left(\hat{h}\right)$ relative
to $\hat{h}$. 

\subsubsection*{First Stage by Linear Series Regression}

\noindent \textbf{Lemma 5b }\customlabel{lem:T4_Series}{5b} \emph{Under
Assumptions \eqref{assu:Basic}, and \ref{assu:KernelSeries}(b),
\[
D\left[P\Dif_{\t}g_{\t_{0},h_{0}},\hat{h}-h_{0}\right]=O_{p}\left(\sqrt{\frac{J_{n}}{n}}+J_{n}^{-s}\right).
\]
With $J_{n}^{-1}=o\left(n^{-\frac{1}{2s+1}}\right)$, we have
\[
\sqrt{nJ_{n}^{-1}}D\left(P\Dif_{\t}g_{\t_{0},h_{0}},\hat{h}-h_{0}\right)\dto\cN\left({\bf 0},\O\right)
\]
for some positive semidefinite matrix with rank $d-1$ and $\t_{0}^{'}\O\t_{0}=0$.}

\subsubsection{\label{subsec:Rate}Convergence Rate and Asymptotic Normality of
$\hat{\protect\t}$}

\noindent Now, we combine the results from Lemmas \ref{lem:Term1},
\ref{lem:Term2}, \ref{lem:Term3}, and \ref{lem:Term4} to obtain
the convergence rate of the ReLU-based estimator. In the following
we use the notation of kernel bandwidth $b_{n}$ as if the first-stage
estimator $\hat{h}$ is given by the Nadaraya-Waston kernel regression.
However, note that the arguments also apply to the setting with linear
series first stages simply with $b_{n}$ replaced by $1/J_{n}$, where
$J_{n}$ is the univariate sieve dimension (with the multivariate
sieve dimension given by $K_{n}=J_{n}^{d}$).

Plugging the implications of Lemmas \ref{lem:Term1}, \ref{lem:Term2},
\ref{lem:Term3}, and \ref{lem:Term4} into the decomposition \eqref{eq:EP_decom},
we have

\begin{align*}
0\leq\  & \P_{n}\left(g_{\hat{\t},\hat{h}}-g_{\t_{0},\hat{h}}\right)\\
\asymp\  & o_{p}\left(\norm{\hat{\t}-\t_{0}}^{2}\right) & T_{1}+T_{2}\\
 & -\left(\hat{\t}-\t_{0}\right)^{'}V\left(\hat{\t}-\t_{0}\right)+o_{p}\left(\norm{\hat{\t}-\t_{0}}^{2}\right) & T_{3}\\
 & +Z_{n}^{'}\left(\hat{\t}-\t_{0}\right)+O_{p}\left(a_{n}c_{n}\norm{\hat{\t}-\t_{0}}\right)+o_{p}\left(\norm{\hat{\t}-\t_{0}}^{2}\right) & T_{4}
\end{align*}
where
\[
Z_{n}:=D\left[P\Dif_{\t}g_{\t_{0},h_{0}},\hat{h}-h_{0}\right]=O_{p}\left(\frac{1}{\sqrt{nb_{n}}}+b_{n}^{s}\right).
\]
It turns out that the convergence rate of $\hat{\t}$ is driven by
the convergence rate of $Z_{n}$ in $T_{4}$, provided that 
\[
O_{p}\left(a_{n}c_{n}\norm{\hat{\t}-\t_{0}}\right)=o_{p}\left(\norm{\hat{\t}-\t_{0}}^{2}\right),
\]
i.e. $a_{n}c_{n}=o_{p}\left(\norm{\hat{\t}-\t_{0}}\right)$, which
can be guaranteed by a condition on $s$. Hence, $T_{1}$ and $T_{2}$
are asymptotically negligible, while $T_{3}$ and $T_{4}$ are the
asymptotically leading terms. 

\section{\label{sec:MISC}General Framework: Multi-Index Single-Crossing Condition Models}

\subsection{\label{subsec:MISC_setup}RMS in the Multi-Index Single-Crossing
Framework}

We now introduce the multi-index single-crossing (MISC) condition framework as proposed in \cite{gao2020robust}, which
generalizes the single-index sign-alignment restriction \eqref{eq:Mono_Equiv} to a $J$-dimensional setting.

Formally, consider a random sample $\left(Y_{i},X_{i}\right)_{i=1}^{n}$
where $Y_{i}$ is an outcome with support ${\cal Y}\subseteq\R^{d_{y}}$
and
\[
X_{i}:=\left(X_{i1},...,X_{iJ}\right)\in\R^{d\times J}
\]
is a $d\times J$ random matrix with support ${\cal X}\subseteq\R^{d\times J}$.
Let $h_{0}:{\cal X}\to\R$ be a real-valued functional of the conditional
distribution of $Y_{i}$ given $X_{i}$ that is directly identified
and nonparametrically estimable from the data.\footnote{For example, in the binary choice model in Section \ref{sec:BinChoice},
we take $h_{0}\left(x\right)=\E\left[\left(Y_{i}-\frac{1}{2}\right)\mid X_{i}=x\right]$.
In other applications $h_{0}$ can be a conditional quantile, a conditional
variance, or a difference of such functionals across two states.}

We are interested in a direction parameter $\t_{0}\in\T\subseteq\S^{d-1}$
that enters the model through the $J$ parametric indexes
\[
X_{ij}^{'}\t_{0},\qquad j=1,...,J.
\]

\begin{defn}[Multi-Index Single-Crossing Condition]
Given observable $\left(Y_{i},X_{i}\right)$ and a pair $\left(h_{0},\t_{0}\right)$,
we say that $\left(h_{0},\t_{0}\right)$ satisfies the (\emph{multi-index
single-crossing}) MISC condition if, for all $x=\left(x_{1},...,x_{J}\right)\in{\cal X}$,
\begin{align}
x_{j}^{'}\t_{0}>0,\ \forall j=1,...,J & \quad\imp\quad h_{0}\left(x\right)\geq0,\nonumber \\
x_{j}^{'}\t_{0}<0,\ \forall j=1,...,J & \quad\imp\quad h_{0}\left(x\right)\leq0.\label{eq:MISC}
\end{align}
The condition is said to be \emph{strict} if the inequalities on the
right-hand side of \eqref{eq:MISC} are strict, i.e., $h_{0}\left(x\right)>0$ whenever $x_{j}^{'}\t_{0}>0$
for all $j$, and $h_{0}\left(x\right)<0$ whenever $x_{j}^{'}\t_{0}<0$
for all $j$.
\end{defn}

When $J=1$, \eqref{eq:MISC} reduces exactly to the sign-alignment
restriction \eqref{eq:Mono_Equiv} used in the binary choice model
in Section \ref{sec:BinChoice}. For $J\ge2$, the MISC condition
requires the sign of $h_{0}\left(x\right)$ to align with the common
sign of the $J$ indexes whenever those indexes all agree. Importantly,
it imposes no restriction on $h_{0}\left(x\right)$ when the $J$
indexes have mixed signs.

In many applications $X_{i}$ arises as a (possibly nonlinear) transformation
of a lower-dimensional regressor $Z_{i}$, so that $X_{i}=\phi\left(Z_{i}\right)$
for some known transformation $\phi$. In that case it is convenient
to state MISC in terms of such transformed regressors.

\begin{rem}[Weak MISC with transformed regressors]
Let $W_{i}=\phi\left(X_{i}\right)$ for a known measurable map $\phi:{\cal X}\to\R^{d\times J}$,
and write $W_{i}=\left(W_{i1},...,W_{iJ}\right)$. We say that $\left(h_{0},\t_{0}\right)$
satisfies the (\emph{weak}) MISC condition with respect to $W_{i}$
if, for all $x\in{\cal X}$ and $w=\phi\left(x\right)$,
\begin{align}
w_{j}^{'}\t_{0}>0,\ \forall j=1,...,J & \quad\imp\quad h_{0}\left(x\right)\geq0,\nonumber \\
w_{j}^{'}\t_{0}<0,\ \forall j=1,...,J & \quad\imp\quad h_{0}\left(x\right)\leq0.\label{eq:MISC-weak}
\end{align}
The strict version is defined analogously. In what follows, we suppress
the distinction when it is clear from context whether $X_{i}$ denotes
the original regressors or a transformed version.
\end{rem}

~


The RMS estimator extends naturally to the MISC framework. Given a
candidate direction $\t\in\T$ and a function $h:{\cal X}\to\R$,
define
\begin{align}
g_{+,\t,h}\left(x\right) & :=\left[h\left(x\right)-\min_{1\le j\le J}\left(-x_{j}^{'}\t\right)_{+}\right]_{+},\label{eq:def_gplus_MISC}\\
g_{-,\t,h}\left(x\right) & :=\left[-h\left(x\right)-\min_{1\le j\le J}\left(x_{j}^{'}\t\right)_{+}\right]_{+},\label{eq:def_gminus_MISC}
\end{align}
and the population criterion
\[
Q\left(\t\right):=Q_{+}\left(\t\right)+Q_{-}\left(\t\right),\qquad Q_{\pm}\left(\t\right):=\E\left[g_{\pm,\t,h_{0}}\left(X_{i}\right)\right].
\]
Then clearly, $$\t_{0}\in\arg\max_{\t\in\T}Q\left(\t\right).$$

Intuitively, $g_{+,\t,h}$ penalizes violations of the ``positive
sign'' restriction in \eqref{eq:MISC} when $h\left(x\right)$ is
positive but some index $x_{j}^{'}\t$ is nonpositive, while $g_{-,\t,h}$
penalizes violations of the ``negative sign'' restriction when $h\left(x\right)$
is negative but some index $x_{j}^{'}\t$ is nonnegative. The inner
$\min$ and ReLU terms ensure that, for each realization $x$, only
the index that is closest to the kink at zero contributes to the loss.

%


Given a first-stage nonparametric estimator $\hat{h}$ of $h_{0}$,
we define the sample criterion
\[
\hat{Q}\left(\t\right):=\frac{1}{n}\sum_{i=1}^{n}\left\{ g_{+,\t,\hat{h}}\left(X_{i}\right)+g_{-,\t,\hat{h}}\left(X_{i}\right)\right\}
\]
and the RMS estimator under the MISC framework as
\[
\hat{\t}:=\arg\max_{\t\in\T}\hat{Q}\left(\t\right).
\]

The binary choice model in Section \ref{sec:BinChoice} is a strict
special case of this framework with $J=1$ and $h_{0}\left(x\right)=\E\left[\left(Y_{i}-\frac{1}{2}\right)\mid X_{i}=x\right]$.
In that case $g_{+,\t,h},g_{-,\t,h}$ reduce to the composite ReLU
functions in \eqref{eq:def_g} and the RMS estimator coincides with
the estimator studied in Section \ref{subsec:SetupResults}. When
$J\ge2$ or when $h_{0}$ is a functional other than a conditional
expectation, the traditional MS estimator cannot be applied, but the
RMS estimator remains well-defined under MISC.

\medskip{}

\noindent The MISC framework nests a large class of models, including
binary choice with awareness, selection models with multiple latent
thresholds, and panel models with multiple time indices; detailed
examples can be provided depending on the application. The key common
feature is that $h_{0}\left(x\right)$ is monotone in a common direction
$\t_{0}$ whenever the $J$ indexes share the same sign.

To further explain the economic relevance of the MISC condition framework and the general applicability of the RMS estimator, we now provide some concrete examples\footnote{Section 4  \cite{gao2020robust} also discusses some of the examples below, as well as other examples under the MISC condition framework with endogeneity.} below along with a discussion about the related work in each specific application. 

\begin{example}[Binary Choice with Awareness]
\label{exa:Bin_Aware} Consider the following modification of the
binary choice model above
\[
y_{i}=\ind\left\{ X_{i1}^{'}\t_{01}\geq u_{i}\right\} \cd\ind\left\{ X_{2i}^{'}\t_{0}\geq v_{i}\right\} 
\]
where $y_{i}$ denotes whether consumer $i$ purchases a certain,
$X_{i1}$ denotes a vector of covariates that influence the consumer's
utility from a product, and $X_{i2}$ denotes a vector of covariates
that influence the consumer's awareness of the product (such as advertising).
Here $J=2$, $X_{i}:=\left(X_{i1},X_{i2}\right)$, $W_{i1}:=X_{i1}$,
and $W_{i2}:=X_{i2}$. Let the functional $h_{0}$ be defined by $h_{0}\left(x\right):=\E\left[\rest{y_{i}}X_{i}=x\right]-\frac{1}{4}.$
Then, under the conditional median restrictions $\text{med}\left(\rest{u_{i}}X_{i}\right)=\text{med}\left(\rest{v_{i}}X_{i}\right)=0$
and the conditional independence restriction $\rest{u_{i}\indep v_{i}}X_{i}$,
it can be shown that 
\begin{align*}
X_{i1}^{'}\t_{01}>0,\ X_{i2}^{'}\t_{02}>0 & \imp\quad h_{0}\left(X_{i}\right)>0,\\
X_{i1}^{'}\t_{01}<0,\ X_{i2}^{'}\t_{02}<0 & \imp\quad h_{0}\left(X_{i}\right)<0,
\end{align*}
again satisfying the MISC condition.
\end{example}
\begin{example}[Panel Multinomial Choice]
\label{exa:PMC} Consider the following panel multinomial choice
model studied in one of the PI's working papers \citet*{gao2020robust},
\[
y_{ijt}=\ind\left\{ u\left(X_{ijt}^{'}\b_{0},\,A_{ij},\,\e_{ijt}\right)=\max_{k\in\left\{ 1,...,J\right\} }u\left(X_{ikt}^{'}\b_{0},\,A_{ik},\,\e_{ikt}\right)\right\} 
\]
where $y_{ijt}$ is a binary variable indicating whether consumer
$i$ chooses product $j$ at time $t$, $X_{ijt}$ is a vector of
observable covariates, $A_{ij}$ is an unobserved fixed effect that
can be infinite dimensional, $\e_{ijt}$ is an unobserved idiosyncratic
taste shock, and the utility function $u$ is assumed to be unknown
but increasing in its first argument. \citet*{gao2020robust} proposes
a novel strategy to identify and estimate the finite-dimensional parameter
$\b_{0}$ , and the key idea is to leverage the monotonicity of $u$
to obtain a MISC condition through a sequence of intertemporal differencing
and cross-sectional averaging. Specifically, focusing on a pair of
time periods $\left(t,s\right)$ and a particular product $j_{0}$
for illustration, define $\t_{0j}:=\b_{0}$, $X_{i}:=\left(\left(X_{ijt}\right)_{j=1}^{J},\left(X_{ijs}\right)_{j=1}^{J}\right)$,
$h_{0}\left(X_{i}\right):=\E\left[\rest{y_{ij_{0}t}-y_{ij_{0}s}}X_{i}\right]$
and
\[
W_{ij}:=\begin{cases}
X_{ijt}-X_{ijs}, & j=j_{0},\\
-\left(X_{ijt}-X_{ijs}\right) & j\neq j_{0}.
\end{cases}
\]
\citet*{gao2020robust} then shows that, under quite general conditions,
the following MISC condition holds
\begin{align*}
W_{ij}^{'}\t_{0j}>0,\ \forall j=1,...,J\quad & \imp\quad h_{0}\left(X_{i}\right)>0,\\
W_{ij}^{'}\t_{0j}<0,\ \forall j=1,...,J\quad & \imp\quad h_{0}\left(X_{i}\right)<0.
\end{align*}
\end{example}
\begin{example}[Dyadic Network Formation]
\label{exa:NetForm} Consider the following dyadic network formation
model studied in \citet*{gao2023logical}, which is a generalization
of the one studied in \citet{graham2017econometric}:
\begin{align*}
\E\left[\rest{y_{ij}}X_{i},X_{j},A_{i},A_{j}\right]= & \psi\left(w\left(X_{i},X_{j}\right)^{'}\t_{0},A_{i},A_{j}\right)
\end{align*}
Here $y_{ij}$ is a binary outcome indicating whether individuals
$i$ and $j$ are linked in an undirected network, $X_{i}$ and $X_{j}$
are the individuals' observable covariates, $w\left(X_{i},X_{j}\right)$
is a known pairwise transformation of individual covariates (with
the leading example being $w_{h}\left(X_{i},X_{j}\right):=\left|X_{i,h}-X_{j,h}\right|$
for each coordinate $h=1,...,d_{x}$), $A_{i}$ and $A_{j}$ are unobserved
individual degree heterogeneity terms, and $\psi:\R^{3}\to\R$ is
an unknown function assumed to be multivariate increasing in all its
three arguments. \citet*{gao2023logical} proposes a method, called
``logical differencing'', to cancel out the unobserved heterogeneity
terms $A_{i}$ despite the lack of additive separability in the model,
a technical complication that arises naturally under nontransferable
utility settings. Specifically, fixing a particular pair of individuals
$\ol i$ and $\ol j$ and two generic realizations $\ol x,\ul x$
of $X_{i}$, it can be shown that, with
\[
\ol w:=w\left(x_{\ol j},\ol x\right)-w\left(x_{\ol i},\ol x\right),\quad\ul w:=w\left(x_{\ol i},\ul x\right)-w\left(x_{j},\ul x\right),
\]
and
\begin{align*}
h_{0}\left(\ol x,\ul x\right):= & \left(\E\left[\rest{y_{\ol ik}-y_{\ol jk}}X_{k}=\ol x\right]\right)_{+}\E\left[\rest{y_{\ol ik}-y_{\ol jk}}X_{k}=\ul x\right],\\
 & -\left(\E\left[\rest{y_{\ol jk}-y_{\ol ik}}X_{k}=\ol x\right]\right)_{+}\E\left[\rest{y_{\ol jk}-y_{\ol ik}}X_{k}=\ul x\right]
\end{align*}
the weak MISC condition is satisfied (under quite mild additional
conditions):
\begin{align*}
\ol w^{'}\t_{0}>0,\ul w^{'}\t_{0}>0\quad & \imp\quad h_{0}\left(\ol x,\ul x\right)\geq0,\\
\ol w^{'}\t_{0}<0,\ul w^{'}\t_{0}<0\quad & \imp\quad h_{0}\left(\ol x,\ul x\right)\leq0.
\end{align*}
\end{example}
\begin{example}[Conditional Quantile Model for Continuous Outcomes]
 Consider the following model
\[
y_{i}=\phi\left(X_{i}^{'}\t+\e_{i}\right),\quad\text{med}\left(\rest{\e_{i}}X_{i}\right)=0,
\]
where $\phi$ is some unknown strictly increasing function. If ${\bf 0}\in Supp\left(X_{i}\right)$,
we can take $h_{0}$ to be the difference in conditional median functions
\[
h_{0}\left(x\right):=\text{med}\left(\rest{y_{i}}X_{i}=x\right)-\text{med}\left(\rest{y_{i}}X_{i}=0\right),
\]
so that \eqref{eq:MISC} holds since
\begin{align*}
\text{med}\left(\rest{y_{i}}X_{i}=x\right) & =\phi\left(\text{med}\left(\rest{X_{i}^{'}\t+\e_{i}}X_{i}=x\right)\right)\\
 & =\phi\left(x_{i}^{'}\t+\text{med}\left(\rest{\e_{i}}X_{i}=x\right)\right)=\phi\left(x_{i}^{'}\t\right).
\end{align*}

Alternatively, we could also state the single-crossing condition in
terms of pairwise differences by
\[
h_{0}\left(\ol x,\ul x\right):=\text{med}\left(\rest{y_{i}}X_{i}=\ol x\right)-\text{med}\left(\rest{y_{i}}X_{i}=\ul x\right)
\]
so that
\[
h_{0}\left(\ol x,\ul x\right)\lessgtr0\quad\iff\quad\left(\ol x-\ul x\right)^{'}\t\lessgtr0,
\]
which is a special case of \eqref{eq:MISC} with $J=2$ and $g\left(\ol x,\ul x\right)=\ol x-\ul x.$
\end{example}
\begin{example}[Stochastic Volatility for Continuous Outcomes]
Consider the following simple ``stochastic volatility'' model of
some centered (mean-zero) variable $y_{t}$:
\begin{align*}
y_{t} & =\s\left(X_{t}^{'}\t+\e_{t}\right)\cd u_{t}
\end{align*}
where $\s$ is some unknown strictly increasing function and $u_{t}$
is mean-zero exogenous error with $\E\left[\rest{u_{t}^{2}}X_{t}\right]=1$.
Suppose that $\e_{t}\indep\left(X_{t},u_{t}\right)$. Then we can
set
\begin{align*}
h_{0}\left(\ol x,\ul x\right) & :=\E\left[\rest{y_{t}^{2}}X_{t}=\ol x\right]-\E\left[\rest{y_{t}^{2}}X_{t}=\ul x\right]\\
 & =\E\left[\s^{2}\left(\ol x^{'}\t+\e_{t}\right)-\s^{2}\left(\ul x^{'}\t+\e_{t}\right)\right]
\end{align*}
so that
\[
h_{0}\left(\ol x,\ul x\right)\lessgtr0\quad\iff\quad\left(\ol x-\ul x\right)^{'}\t\lessgtr0.
\]
\end{example}
\noindent It should be pointed out that the above are just a few illustrations
of many plausible econometric models nested under the MISC condition
framework. Given that the exact specifications of $y,X,\phi,h_{0}$
are left largely unrestricted, they can be user-configured in very
flexibly manners depending on the economic contexts: for example,
$X$ can be decomposed into an ``endogenous/structural'' part and
an ``exogenous/IV'' part, while $W=\phi\left(X\right)$ and $h_{0}\left(X\right)$
may involve a subvector or the whole of $X$ with potentially nonlinear
transformations.

One main advantage of the MISC framework lies in its ability to identify
and estimate index parameters in models with rich forms of unobserved
heterogeneity and additively nonseparable interactions between modeling
ingredients.

\subsection{RMS Asymptotic Theory under MISC}
\label{subsec:MISC-asymptotics}

We now derive the convergence rate and asymptotic distribution of the RMS
estimator $\hat\theta$ in the multi-index single-crossing (MISC) framework of
Section~3.1. As in the single-index case, the key ingredients are: (i) a
linearization of the effect of first-stage estimation errors through a
lower-dimensional submanifold integral functional, and (ii) a local quadratic
expansion of the population criterion $Q(\theta)$ around $\theta_0$. In the
MISC case, both objects have a particularly transparent form.

Recall that
\[
Q(\theta;h)
:= P g_{\theta,h}, \qquad
Q(\theta) := Q(\theta;h_0) = P g_{\theta,h_0},
\]
and define the (vector-valued) directional derivative functional
\[
L(h)
:= D_h\big(P \nabla_\theta g_{\theta_0,h_0}\big)\big[h-h_0\big]
\in\mathbb{R}^d.
\]
Throughout this subsection, we view $L(h)$ as a map on a suitable function
space $H$ containing $h_0$ and the first-stage estimator $\hat h$.

The next lemma collects the two structural properties that drive the
asymptotics: a submanifold-integral representation of the linear functional
$L(h)$ and a quadratic expansion of $Q(\theta)$ around $\theta_0$.

\begin{lem}[Asymptotics via Submanifold Integrals]
\label{lem:MISC-curv-L}
Under the strict MISC condition~\textup{(20)} hold,

\begin{enumerate}
\item[(a)] 
For any $c\in\mathbb{R}^d$, define the scalar functional
\[
\Gamma_c(h)
:= c' P\nabla_\theta g_{\theta_0,h}.
\]
Then $\Gamma_c$ is Fr\'echet differentiable at $h_0$ and its derivative
satisfies
\begin{equation}
D_h\Gamma_c(h_0)[v]
= \sum_{j=1}^J \int_{\{x\in\mathcal{X}: x_j'\theta_0=0\}}
v(x)\, w_{c,j}(x)\, d\mathcal{H}^{d-1}(x),
\qquad \forall v\in H,
\label{eq:MISC-submanifold-deriv}
\end{equation}
for some uniformly bounded weight functions $w_{c,j}:\mathcal{X}\to\mathbb{R}$.
In particular, each component of $L(h)$ can be written as a finite sum of
integrals of $(h-h_0)$ over the hyperplanes $\{x:x_j'\theta_0=0\}$.

\item[(b)] 
There exists a symmetric positive semidefinite $d\times d$ matrix $V$ of rank
$d-1$ such that, for all $\theta$ in a neighborhood of $\theta_0$ with
$\|\theta\|=1$,
\begin{equation}
Q(\theta) - Q(\theta_0)
= -(\theta-\theta_0)'V(\theta-\theta_0)
+ o\big(\|\theta-\theta_0\|^2\big),
\label{eq:MISC-quadratic}
\end{equation}
and $V\theta_0=0$. Moreover, $V$ admits the representation
\begin{equation}
V
= \sum_{j=1}^J \int_{\{x\in\mathcal{X}: x_j'\theta_0=0\}}
m_j(x,\theta_0)\, x_j x_j' p(x)\, d\mathcal{H}^{d-1}(x),
\label{eq:MISC-V-rep}
\end{equation}
for some nonnegative Lipschitz functions $m_j(\cdot,\theta_0)$,
$j=1,\dots,J$, and $(d-1)$-dimensional Hausdorff measure
$\mathcal{H}^{d-1}$.
\end{enumerate}
\end{lem}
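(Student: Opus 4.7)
The plan is to prove both parts by extending the single-index analyses of Lemmas~\ref{lem:Term3} and~\ref{lem:Term4} to the $J$-index MISC setting. The core mechanism is unchanged: the ReLU compositions $g_{\pm,\theta,h}$ are Lipschitz and almost-everywhere differentiable, with non-smoothness concentrated on lower-dimensional sets. Differentiating once in $\theta$ already produces indicator-type factors whose subsequent differentiation --- in $\theta$ again for the Hessian in~(b), or in $h$ for the Fr\'echet derivative in~(a) --- localizes the resulting integrals to the hyperplanes $\{x:x_j'\theta_0=0\}$. The inner $\min_{j}(-x_j'\theta)_+$ simply replaces the single kink hyperplane of the binary case with a union of $J$ such hyperplanes, one per index, which explains the sum-over-$j$ form of both \eqref{eq:MISC-submanifold-deriv} and \eqref{eq:MISC-V-rep}.

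For part~(a), I would first partition $\mathcal{X}$ according to which index $j^{\star}(x,\theta_0)$ is responsible for the kink, together with the sign pattern of $(x_j'\theta_0)_{j=1}^J$. On the interior of each cell, only one index is close to zero while the others are strictly bounded away from it, and the local analysis reduces to the single-index computation of Lemma~\ref{lem:Term4} applied to that index. Concretely, $\nabla_\theta g_{+,\theta_0,h_0}(x)$ equals $x_{j^{\star}}\,\ind\{x_{j^{\star}}'\theta_0<0\}\,\ind\{h_0(x)>m_+(x,\theta_0)\}$ away from ties; strict MISC forces this product to vanish except in a thin strip around $\{x_{j^{\star}}'\theta_0=0\}$. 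Perturbing $h$ by $v$ shifts the outer indicator, and a coarea argument collapses the volume perturbation into a $(d-1)$-dimensional Hausdorff integral on $\{x_{j^{\star}}'\theta_0=0\}$. The weight $w_{c,j}(x)$ is the product of $c'x_j$, $p(x)$, a normalizing factor $(1+\nabla h_0(x)'\theta_0)^{-1}=(1+f(0\mid x))^{-1}$ computed as in Lemma~\ref{lem:Term4}, and the indicator that identifies the cell in which index $j$ is the active kink. Fr\'echet (rather than merely directional) differentiability follows from uniform boundedness of $w_{c,j}$ on the compact hyperplanes, combined with the $1$-Lipschitz property of the ReLU.

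For part~(b), the first-order condition $\nabla_\theta Q(\theta_0)=0$ on the tangent space of $\S^{d-1}$ holds because $\theta_0$ is a maximizer, so the leading behavior of $Q(\theta)-Q(\theta_0)$ is quadratic. I would expand $P\bigl(g_{\pm,\theta,h_0}-g_{\pm,\theta_0,h_0}\bigr)$ via the same envelope estimate that underlies \eqref{eq:Bd_gtheta}: the integrand is nonzero only on thin tubes $\{x:0<\pm x_j'\theta_0<M\|x\|\,\|\theta-\theta_0\|\}$ where some index $j$ crosses zero between $\theta_0$ and $\theta$, and on each tube its size is $O(\|\theta-\theta_0\|)$. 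Changing variables so that the first coordinate is $t=x_j'\theta_0$ and integrating over $t$ of order $\|\theta-\theta_0\|$ produces a factor of order $\|\theta-\theta_0\|^2$ multiplied by a $(d-1)$-dimensional surface integral on $\{x_j'\theta_0=0\}$ with density $p(x)$ and nonnegative weight $m_j(x,\theta_0)$ aggregating $f(0\mid x)$ and the cell indicator $\ind\{j^{\star}=j\}$. Summing across $j$ yields \eqref{eq:MISC-V-rep}. The rank-$(d-1)$ property and $V\theta_0=0$ follow from positive homogeneity: $g_{\pm,\theta,h}(x)$ depends on $\theta$ only through the homogeneous-of-degree-one quantities $(-x_j'\theta)_+$ and $(x_j'\theta)_+$, so $Q(c\theta)=Q(\theta)$ for all $c>0$, and hence the Hessian annihilates the radial direction $\theta_0$.

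The main technical obstacle is handling the inner $\min$ and the double-kink structure carefully: one must verify that (i) the tie sets on which $\arg\min_j(-x_j'\theta_0)_+$ is non-unique are contained in codimension-$\geq 2$ varieties and hence $\mathcal{H}^{d-1}$-null on every hyperplane $\{x_j'\theta_0=0\}$, and (ii) the level set of the outer indicator $\{h_0(x)=m_\pm(x,\theta_0)\}$, where the Fr\'echet-derivative mass concentrates, coincides with the union of the same hyperplanes up to $\mathcal{H}^{d-1}$-null sets. Both assertions follow from strict MISC together with Assumption~\ref{assu:Basic}(f)--(g) and the smoothness provided by Proposition~\ref{prop:ID_Sobolev}(ii). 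A secondary obstacle is the uniform control of the Fr\'echet and second-order remainders, which is supplied by the $1$-Lipschitz property of the ReLU and by the boundedness of the relevant densities; once these ingredients are assembled, the remaining bookkeeping closely parallels the arguments in the proofs of Lemmas~\ref{lem:Term3}--\ref{lem:Term4}.
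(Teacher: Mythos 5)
Your overall strategy is the same as the paper's: compute $\nabla_\theta g_{\theta_0,h}$ as a finite sum of indicator-type terms multiplied by $x_j$, observe that perturbing $h$ (for part (a)) or $\theta$ a second time (for part (b)) moves only the boundaries of those indicator regions, and convert the resulting boundary contributions into $(d-1)$-dimensional Hausdorff integrals over the kink hyperplanes $\{x:x_j'\theta_0=0\}$ via the coarea formula / change of variables, exactly as in the single-index Lemmas \ref{lem:Term3}--\ref{lem:Term4}. Your cell decomposition by the active index $j^\star$, the treatment of tie sets as $\mathcal{H}^{d-1}$-null, and the identification of the weights are all consistent with (and somewhat more explicit than) the paper's argument, so parts (a) and the representation \eqref{eq:MISC-V-rep} are in order.

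There is, however, one genuinely false step: the claim that $g_{\pm,\theta,h}$ is positively homogeneous of degree zero in $\theta$, so that $Q(c\theta)=Q(\theta)$ for all $c>0$ and the Hessian annihilates the radial direction. That property holds for the indicator-based criterion $\ind\{x'\theta>0\}$ but \emph{not} for the ReLU-based one: since $(-x_j'(c\theta))_+=c\,(-x_j'\theta)_+$, one has $g_{+,c\theta,h}(x)=[h(x)-c\min_j(-x_j'\theta)_+]_+\neq g_{+,\theta,h}(x)$ whenever the inner term is strictly between $0$ and $h(x)$. Indeed, the scale-dependence of the RMS criterion is precisely why the normalization $\|\theta\|=1$ is substantive here. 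The conclusions $V\theta_0=0$ and $\operatorname{rank}(V)\le d-1$ should instead be read off directly from the representation \eqref{eq:MISC-V-rep}: each summand integrates $x_jx_j'$ over $\{x_j'\theta_0=0\}$, so $x_jx_j'\theta_0=0$ pointwise on the domain of integration, which is the argument the paper uses ("support properties of $x_j$ on the hyperplanes"). Note also that homogeneity, even if it held, could only bound the rank from above; the claim that the rank equals $d-1$ (positive definiteness on the tangent space) requires the bounded-away-from-zero density and full-directional-support conditions of Assumption \ref{assu:Basic}, which your write-up does not invoke. Finally, your identification of the normalizing factor as $(1+f(0\mid x))^{-1}$ is specific to the binary-choice specialization $h_0(x)=F(x'\theta_0\mid x)-\tfrac12$ and should be replaced by the generic quantity $(1+\nabla_{x_j}h_0(x)'\theta_0)^{-1}$ in the general MISC setting; this is harmless for the lemma as stated, which only asserts bounded weights.
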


Lemma~\ref{lem:MISC-curv-L} shows that the second-stage curvature is governed by a $(d-1)$-dimensional matrix $V$ and that the first-stage impact enters only through submanifold integrals of $h-h_0$ over those $(d-1)$-dimensional hyperplanes. This is precisely the setting analyzed in \cite{chen2025semiparametric}, with submanifold dimension $m=d-1$ (codimension $d-m=1$). 

\begin{assumption}
\label{ass:MISC-CG}
Suppose that:
\begin{enumerate}
\item[(i)] The true function $h_0$ belongs to a H\"older (or Sobolev) ball of
smoothness order $s>1$ on a compact support $\mathcal{X}\subset\mathbb{R}^d$.

\item[(ii)] The first-stage estimator $\hat h$ is either a kernel or linear series
(sieve) estimator constructed as in Section~2, with smoothing parameter
(bandwidth or sieve dimension) chosen so that the conditions of Assumptions~6–8
in Chen and Gao (2025) hold for the regressors $X_i$ and the basis. In
particular, if $K_n$ denotes the sieve dimension, then
\[
K_n\log K_n / n \to 0
\quad\text{and}\quad
K_n^{-s/d} = o\Big(\sqrt{K_n^{(d-1)/d}/n}\Big).
\]

\item[(iii)] For each $c\in\mathbb{S}^{d-1}$, the scalar functional
$\Gamma_c(h)=c'P\nabla_\theta g_{\theta_0,h}$ satisfies the linearization
and regularity conditions in Assumptions~9–11 of \cite{chen2025semiparametric} with
submanifold dimension $m=d-1$ and level-set function
$g_j(x)=x_j'\theta_0$, $j=1,\dots,J$. 
\end{enumerate}
\end{assumption}

Assumption~\ref{ass:MISC-CG}(c) is essentially a restatement, in our notation,
of the high-level conditions required to apply Theorems~2 and~3 of \cite{chen2025semiparametric} to the functionals $c'L(h)$. Under these conditions, those
theorems yield both the convergence rate and the asymptotic normality of
$L(\hat h)$ as an estimator of $L(h_0)$.

We can now state the main result of this subsection.

\begin{thm}[RMS Asymptotics under MISC]
\label{thm:Asymp-MISC}
Suppose the MISC condition \eqref{eq:MISC}, and Assumption~\ref{ass:MISC-CG} hold.

\begin{enumerate}
\item[(a)] 
For the linear functional $L(h)$ defined above, under undersmoothing,
\begin{equation}
c_n\big(L(\hat h) - L(h_0)\big)
\;\xrightarrow{d}\;
\mathcal{N}(0,\Omega),
\label{eq:Lhat-CLT}
\end{equation}
with $c_n$ can be taken to be slower than but arbitrarily close to $n^{-s/(2s+1)}$.

\item[(b)] 
Let $\hat\theta$ denote the RMS estimator under MISC,
Then
\[
c_n(\hat\theta - \theta_0)
= - V^{-} c_n L(\hat h) + o_p(1),
\]
where $V$ is the Hessian in \eqref{eq:MISC-quadratic} and $V^{-}$
its Moore–Penrose inverse restricted to the tangent space orthogonal to $\theta_0$. Consequently, with undersmoothing,
\begin{equation}
c_n(\hat\theta - \theta_0)
\;\xrightarrow{d}\;
\mathcal{N}\big(0,\,V^{-}\Omega V^{-}\big).
\label{eq:thetahat-CLT}
\end{equation}
\end{enumerate}
\end{thm}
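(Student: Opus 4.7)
The plan is to combine the four-term decomposition from Section~\ref{subsec:AsymOutline} with the submanifold-integral representations of Lemma~\ref{lem:MISC-curv-L}, and then to invoke the CLT of \cite{chen2025semiparametric} for plug-in submanifold functionals together with the linearized argmax theorem on $\mathbb{S}^{d-1}$.

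\emph{Part (a).} By Lemma~\ref{lem:MISC-curv-L}(a), for any $c\in\mathbb{S}^{d-1}$ the scalar $c'L(\hat h)$ equals a finite sum over $j=1,\ldots,J$ of plug-in estimators of $(d-1)$-dimensional integrals $\int_{\{x_j'\theta_0=0\}}(\hat h-h_0)\,w_{c,j}\,d\mathcal{H}^{d-1}$. Each hyperplane is a smooth submanifold of codimension one, so Assumption~\ref{ass:MISC-CG} places us within the scope of Theorems~2--3 of \cite{chen2025semiparametric}, which (under undersmoothing) deliver a rate arbitrarily close to $n^{-s/(2s+1)}$ together with asymptotic normality of $c'L(\hat h)$. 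The Cram\'er--Wold device then yields the vector CLT \eqref{eq:Lhat-CLT}, with $\Omega$ assembled from the asymptotic covariances of the individual submanifold terms plus their cross-terms across $j$.

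\emph{Part (b).} I reuse the decomposition $\P_n(g_{\hat\theta,\hat h}-g_{\theta_0,\hat h})=T_1+T_2+T_3+T_4$ from Section~\ref{subsec:AsymOutline}. Because $g_{\pm,\theta,h}$ in \eqref{eq:def_gplus_MISC}--\eqref{eq:def_gminus_MISC} is a composition of $1$-Lipschitz operations ($[\cdot]_+$, $\min$, linear projections), the envelope and maximal-inequality arguments of Lemmas~\ref{lem:Term1}--\ref{lem:Term2} extend to the MISC setup, with the single slab $\{|x'\theta_0|<M\|x\|\delta\}$ replaced by the union $\bigcup_{j=1}^J\{|x_j'\theta_0|<M\|x_j\|\delta\}$. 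A union bound preserves the $O(\delta)$ slab probability and hence the $O(\delta^{3/2})$ envelope $L^2$-norm, so $T_1+T_2=o_p(\|\hat\theta-\theta_0\|^2)$. Lemma~\ref{lem:MISC-curv-L}(b) yields $T_3=-(\hat\theta-\theta_0)'V(\hat\theta-\theta_0)+o_p(\|\hat\theta-\theta_0\|^2)$, and Lemma~\ref{lem:MISC-curv-L}(a) yields $T_4=L(\hat h)'(\hat\theta-\theta_0)+O_p(a_n c_n\|\hat\theta-\theta_0\|)+o_p(\|\hat\theta-\theta_0\|^2)$. Since the null space of $V$ equals $\mathrm{span}(\theta_0)$, $V$ is strictly positive definite on the tangent space $T_{\theta_0}\mathbb{S}^{d-1}$; localizing $\hat\theta-\theta_0$ to this tangent space and applying Theorem~3.2.16 of \cite{van1996weak} to the quadratic approximation $t\mapsto L(\hat h)'t-t'Vt$ produces the linearization $\hat\theta-\theta_0=-V^{-}L(\hat h)+o_p(\|L(\hat h)\|)$. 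Multiplying by $c_n$ and combining with part~(a) gives \eqref{eq:thetahat-CLT}.

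\emph{Main obstacle.} The principal technical hurdle is verifying that the Lemma~\ref{lem:Term1}--\ref{lem:Term2}-type envelope and stochastic equicontinuity bounds retain the crucial $\delta^{3/2}$ rate when $J\ge 2$. Although each slab $\{|x_j'\theta_0|<M\|x_j\|\delta\}$ individually has measure $O(\delta)$, the hyperplanes $\{x_j'\theta_0=0\}$ may intersect in lower-dimensional strata, and the Lipschitz control of $g_{\pm,\theta,h}$ in $\theta$ must be tracked separately across the different ``active-index'' regimes (where different $j$ attain the inner $\min$ in \eqref{eq:def_gplus_MISC}--\eqref{eq:def_gminus_MISC}). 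This calls for a peeling argument partitioning $\mathcal{X}$ by the index of the active constraint, together with uniform verification of the boundary-regularity and curvature conditions of \cite{chen2025semiparametric} across all $J$ submanifolds; once those uniformity checks are in place, the remainder is a direct extension of the single-index analysis in Section~\ref{subsec:SetupResults}.
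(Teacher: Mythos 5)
Your proposal follows essentially the same route as the paper: part (a) via the submanifold representation of Lemma~\ref{lem:MISC-curv-L}(a) and the Chen--Gao CLT for codimension-one plug-in functionals (stacked by Cram\'er--Wold), and part (b) via the four-term decomposition with MISC versions of Lemmas~\ref{lem:Term1}--\ref{lem:Term2} obtained exactly as you describe — replacing the single slab by the union $\bigcup_j\{|x_j'\theta_0|<M\|x_j\|\delta\}$, which is precisely how the paper's Lemmas~\ref{lem:Term1-MISC}--\ref{lem:Term2-MISC} preserve the $\delta^{3/2}$ and $\delta$ envelope bounds. The only cosmetic difference is that you invoke Theorem~3.2.16 of \citet{van1996weak} for the final linearization, whereas the paper's MISC proof carries out the same basic-inequality-plus-quadratic-expansion argument by hand to first obtain $\|\hat\theta-\theta_0\|=O_p(c_n^{-1})$ and then the representation $\hat\theta-\theta_0=V^{-}L(\hat h-h_0)+o_p(c_n^{-1})$; the substance is identical.
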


\begin{rem}[Effective one-dimensional rate in the $J$-index case]
\label{rem:MISC-1D-rate}
By Lemma~\ref{lem:MISC-curv-L}(ii), the submanifold functional $L(h)$ depends
on $h$ only through its restriction to the $(d-1)$-dimensional hyperplanes
$\{x:x_j'\theta_0=0\}$, $j=1,\dots,J$. The analysis in \cite{chen2025semiparametric}
shows that, for kernel or sieve estimators of $h_0$ on a $d$-dimensional
support, the minimax-optimal rate for such submanifold integrals is
$n^{-s/(2s+1)}$, independent of $J$. Thus $c_n=n^{s/(2s+1)}$ in
Theorem~\ref{thm:Asymp-MISC}, and the RMS estimator under MISC achieves the
same “one-dimensional” nonparametric rate as in the single-index binary choice
model. Increasing $J$ affects only the constants and the asymptotic variance
matrix $V^{-}\Omega V^{-}$, not the convergence rate.
\end{rem}

\section{DNN-Based Maximum Score Estimator}
\label{sec:NN}

In this section we show how the RMS estimator can be further adapted to be implemented within a neural network architecture. The key observation is that the RMS criterion is itself a composition of ReLU units with a simple, interpretable structure. This allows us to view the RMS estimator as a special multi-layer network with a dedicated ``RMS layer'' that extracts the sign information of the index parameter $\theta$, and to estimate $\theta$ using standard machine learning software.

\subsection{RMS as a Special Neural Network Layer}
\label{subsec:NN-layer}

We first describe the single-index binary choice model. Let $x\in\R^d$ denote
the covariate and recall that in Section~\ref{sec:BinChoice} we defined, for a
generic function $h$ and direction $\theta\in\T\subset\S^{d-1}$,
\[
g_{+,\theta,h}(x)
:= \bigl[h(x) - [-x'\theta]_+\bigr]_+,
\qquad
g_{-,\theta,h}(x)
:= \bigl[-h(x) - [x'\theta]_+\bigr]_+,
\]
and the RMS population criterion $Q(\theta)=E[g_{+,\theta,h_0}(X_i) +
g_{-,\theta,h_0}(X_i)]$. These maps are compositions of three elementary
operations:
\begin{enumerate}
\item a \emph{directional projection} $s(x;\theta)=x'\theta$;
\item a \emph{sign-extracting pair} of ReLU units $[s(x;\theta)]_+$ and
$[-s(x;\theta)]_+$; and
\item a final \emph{RMS transform} that compares $h(x)$ to the
ReLU-transformed index via an outer ReLU.
\end{enumerate}
This structure can be encoded as a small neural network module
$R_\theta(h)(x)$ that takes as input the scalar $h(x)$ and the vector $x$,
computes $x'\theta$, passes it through ReLUs, and outputs $g_{+,\theta,h}(x)$
and $g_{-,\theta,h}(x)$ (or their difference). In particular, for any fixed
$\theta$, $h\mapsto R_\theta(h)$ is a Lipschitz, piecewise linear operator.

A convenient way to embed RMS into a network is to treat $h$ as the output of a
generic multi-layer perceptron $f_\beta:\R^d\to\R$ with parameters
$\beta\in\R^p$, and then apply the RMS layer to $(x,f_\beta(x))$. In notation,
set
\[
g_{+}(x;\theta,\beta)
:= \bigl[f_\beta(x) - [-x'\theta]_+\bigr]_+,
\qquad
g_{-}(x;\theta,\beta)
:= \bigl[-f_\beta(x) - [x'\theta]_+\bigr]_+,
\]
and define
\[
h_{\theta,\beta}(x) := g_{+}(x;\theta,\beta) - g_{-}(x;\theta,\beta).
\]
The map $x\mapsto h_{\theta,\beta}(x)$ is then a neural network with one
special ``RMS layer'' on top of a generic (deep) regression network $f_\beta$. When
$\theta=\theta_0$ and $f_\beta$ approximates $h_0$, the outputs
$(g_{+},g_{-})$ implement the same sign-alignment structure as in the
population RMS criterion, and the resulting $h_{\theta,\beta}$ inherits the
economic interpretation of $h_0$.

\subsection{DNN-Based MISC Estimation}
\label{subsec:NN-MISC}

In the $J$-index MISC setting of Section~\ref{sec:MISC}, the relevant
population criterion is defined based on the following: for each $x=(x_1,\dots,x_J)$,
\[
g_{+,\theta,h_0}(x)
= \Bigl[h_0(x) - \bigl(\min_{1\le j\le J}(-x_j'\theta)_+\bigr)\Bigr]_+,
\qquad
g_{-,\theta,h_0}(x)
= \Bigl[-h_0(x) - \bigl(\min_{1\le j\le J}(x_j'\theta)_+\bigr)\Bigr]_+.
\]
which can be encoded in a neural network with the following special architecture:
\begin{enumerate}
\item A MLP neural network to approximate $h_0$.
\item A \emph{multi-index generation layer} that computes the $J$ scalar indexes
$s_j(x;\theta)=x_j'\theta$ and their ReLU transforms
$[s_j(x;\theta)]_+$, $[-s_j(x;\theta)]_+$.
\item A \emph{MISC aggregation layer} that takes the elementwise minimum
\[
u(x;\theta) := \min_j [-s_j(x;\theta)]_+,
\qquad
v(x;\theta) := \min_j [s_j(x;\theta)]_+,
\]
and passes them, together with $h(x)$, through outer ReLUs as above.
\end{enumerate}
The resulting multi-layer neural network encodes exactly the MISC conditions as in  Section~\ref{sec:MISC}. The MISC parameter $\theta$ appears only in the linear projections $x_j'\theta$ inside this special layer, while the possibly high-dimensional parameters $\beta$ govern flexible, nonparametric features through $h(x)=f_\beta(x)$.

From an applied perspective, one of the main appeals of the DNN-based MISC formulation
is that it provides a principled way to extract an economically meaningful index parameter $\theta$ from a high-dimensional black-box DNN.



\subsection{Implementation using Machine Learning Packages}
\label{subsec:NN-implementation}

The network architectures described above are straightforward to implement in
standard machine learning frameworks such as \texttt{PyTorch} or
\texttt{TensorFlow}. The main ingredients are:
\begin{itemize}
\item a base MLP $f_\beta$ with ReLU activation (possibly deep),
\item a directional parameter $\theta$ constrained to lie on the unit sphere,
implemented via explicit normalization or a reparameterization, and
\item a custom ``RMS layer'' that takes $(x,f_\beta(x),\theta)$ as input and
outputs $g_{+}(x;\theta,\beta)$ and $g_{-}(x;\theta,\beta)$.
\end{itemize}
Since all components are compositions of affine maps and ReLU activations,
the network is differentiable almost everywhere and compatible with automatic
differentiation. Training can therefore be carried out using standard
gradient-based optimizers (e.g.\ ADAM) with GPU acceleration.

There are two natural training strategies:
\begin{enumerate}
\item \emph{Two-step RMS:} First estimate $h_0$ by training $f_\beta$ to
minimize a standard loss (e.g.\ squared error between $Y_i$ and
$f_\beta(X_i)$). Then plug in $\hat h(x)=f_{\hat\beta}(x)$ and optimize
$\hat Q(\theta)$ over $\theta$ only, using the RMS layer as in
Sections~\ref{sec:BinChoice} and~\ref{sec:MISC}.
\item \emph{Joint DNN:} Parameterize the outcome as
$Y_i\approx h_{\theta,\beta}(X_i)$ via the RMS layer and estimate both
$\theta$ and $\beta$ jointly by minimizing a loss such as
$\frac{1}{n}\sum_i (Y_i-h_{\theta,\beta}(X_i))^2$ subject to $\|\theta\|=1$.
This corresponds to embedding the MISC structure directly into a deep network
and training it with standard backpropagation.
\end{enumerate}
The two-step approach falls directly under our existing asymptotic theory, once
$\hat h$ is shown to satisfy the first-stage conditions. The joint-DNN approach is more demanding theoretically but conceptually attractive, as it treats $\theta$ as a low-dimensional ``interpretable head'' on top of a deep, flexible feature extractor.

Formally establishing the asymptotic properties of $\hat{\theta}$ in the joint DNN estimation approach is an interesting direction for future research. One natural route would be to show that, under suitable conditions on the loss, architecture and regularization, the joint estimator of $\theta$ is asymptotically equivalent to the two-step/profile RMS estimator studied here, given that the MISC parameter $\t$ only shows up in the ``outmost'' hidden layer of the DNN. An alternative route would be to use sample-splitting or cross-fitting to obtain valid inference for $\theta$ directly from the joint optimization problem.

\section{Simulation}
\label{sec:Sim}

Our goal in this section is to investigate the finite-sample performance of the
RMS estimator $\hat\theta$ for $\theta_0$ in both the single-index binary
choice model and the two-index MISC setting. We first describe the common
simulation design and implementation, and then discuss an alternative neural
network implementation that embeds the MISC structure directly into the network
architecture.

\subsection{Simulation Design and Implementation}
\label{subsec:SimDesign}

Each Monte Carlo experiment follows the same basic four-step procedure:
\begin{enumerate}
\item Generate a random sample from a given data-generating process (DGP).
\item Obtain an estimate $\hat{\t}$ either using a two-step plug-in procedure or the joint DNN procedure.
\item Evaluate the performance of $\hat\theta$ across $B$ Monte Carlo
replications.
\end{enumerate}

\subsubsection{DGP Specification}
\paragraph{Single-index DGP.}
In the baseline design we consider the binary choice model
\[
y_i = 1\{X_i'\theta_0 > \varepsilon_i\},
\]
with
\[
\theta_0
= \Bigl(\tfrac{\sqrt{3}}{3}, -\tfrac{\sqrt{3}}{3}, \tfrac{\sqrt{3}}{3}\Bigr)',
\qquad \| \theta_0 \| = 1.
\]
The regressors are drawn independently as
$X_{i1}, X_{i2}, X_{i3} \sim \mathrm{Unif}[-2,2]$, and the error terms
$\varepsilon_i$ are i.i.d.\ logistic. Denoting by $F$ the logistic CDF, the
true first-stage function is
\[
h_0(x)
:= E\bigl[y_i - \tfrac{1}{2} \mid X_i=x\bigr]
= F(x'\theta_0) - \tfrac{1}{2}
= \frac{1}{1 + \exp(-x'\theta_0)} - \tfrac{1}{2},
\]
which is known in closed form but treated as unknown in the estimation
procedure.

\paragraph{Two-index (MISC) DGP.}
To illustrate the multi-index setting, we also consider a two-index model
($J=2$) that satisfies the MISC condition. For each $i$, we generate
\[
y_i
= 1\{X_{i1}'\theta_0 > \varepsilon_{i1}\}\,
  1\{X_{i2}'\theta_0 > \varepsilon_{i2}\},
\]
where $\varepsilon_{i1},\varepsilon_{i2}$ are i.i.d.\ logistic and each
component of $X_{i1}$ and $X_{i2}$ is i.i.d.\ $\mathrm{Unif}[-2,2]$. Writing
$X_i = (X_{i1},X_{i2})$ and using the same $\theta_0$ as above, we have
\[
P(y_i=1 \mid X_i=(x_1,x_2))
= F(x_1'\theta_0)F(x_2'\theta_0),
\]
so that
\[
h_0(x_1,x_2)
:= E\bigl[y_i - \tfrac{1}{4} \mid X_{i1}=x_1,X_{i2}=x_2\bigr]
= F(x_1'\theta_0)F(x_2'\theta_0) - \tfrac{1}{4}.
\]
This DGP satisfies the strict MISC condition: $h_0(x_1,x_2)>0$ whenever both
$x_1'\theta_0$ and $x_2'\theta_0$ are positive, and $h_0(x_1,x_2)<0$ whenever
both are negative.

\subsubsection{Two-Stage Implementation}

\paragraph{First-Stage Nonparametric Regression}
Given simulated data, we estimate $h_0$ nonparametrically by regressing
$y_i - \tfrac{1}{2}$ on $X_i$ in the single-index design, and
$y_i - \tfrac{1}{4}$ on $(X_{i1},X_{i2})$ in the two-index design. We consider
two main classes of estimators (implemented using standard R packages):
\begin{itemize}
\item \emph{Kernel regression}, with a polynomial kernel and bandwidth
selected over a small grid (e.g.\ using a rule of thumb or simple
cross-validation). In the reported simulations we use a polynomial kernel with
tuning parameters $\alpha = 0.1$ and $\gamma = 0.0001$.
\item \emph{Series (sieve) regression}, based on tensor-product spline bases,
with the number of basis functions playing the role of the smoothing parameter.
\item \emph{Neural Network regression}: a standard multi-layer perceptron (MLP) with
ReLU activation, where the main tuning parameters are the number of hidden
units and layers. In the experiments reported below, a typical configuration
uses a hidden size of 10, 2 hidden layers, a learning rate of 0.01, and 100
epochs of training with the ADAM optimizer.
\end{itemize}

\paragraph{Second-stage optimization of the RMS criterion.}
Given $\hat h$, we form the sample analogue of the RMS criterion,
\[
\hat Q(\theta)
:= \frac{1}{n}\sum_{i=1}^n
\Bigl\{ g_{+,\theta,\hat h}(X_i) + g_{-,\theta,\hat h}(X_i) \Bigr\},
\]
and maximize $\hat Q(\theta)$ over
$\theta$ on the unit sphere $\{\theta:\theta'\theta=1\}$. We use a
gradient-based algorithm (ADAM) together with a simple projection step to
enforce the unit-norm constraint. In practice, this amounts to running ADAM
updates on the unconstrained parameter vector and renormalizing $\theta$ to
unit length after each update. The learning rate is set to 0.01 and we run 500
epochs for each replication. The use of ReLU functions makes the objective
continuous and Lipschitz in $\theta$, so gradients are well defined almost
everywhere and standard optimization routines are stable in these simulations.

\subsubsection{Joint Implementation via Neural Networks}

We also consider the DNN-based joint estimation of $h_0$ and $\t_0$ as described in Section \ref{sec:NN}. Specifically, we use a three-stage training strategy:
\begin{itemize}
    \item \textbf{Stage 1}: Freeze $\theta$ parameters (initialized to zero vectors), and train only the MLP component parameters to learn basic function approximation.
    \item \textbf{Stage 2}: Freeze the MLP component, reinitialize and train only the directional parameter $\theta$.
    \item \textbf{Stage 3}: Jointly train all parameters for fine-tuning.
\end{itemize}

\subsubsection{Performance Measures}
For each design, we consider two sample sizes $N \in \{1000,5000\}$, and re report summary measures of the distribution of $\hat\theta$ across $B=1000$ Monte Carlo replications. The basic componentwise diagnostics are the Monte Carlo mean squared error (MSE), bias, and standard deviation (SD) of each
coordinate of $\hat\theta$. To capture overall performance in a
rotation-invariant way, we also report: the $\ell^1$ error of each coordinate, the $\ell^2$ norm of the bias vector, and mean/median ``angular similarity'', defined as one minus the cosine of the angle between $\hat\theta$ and $\theta_0$.

\subsection{Results}

\subsubsection{Single-Index DGP}

For the single-index design, Tables 1–3 report the performance of the RMS
estimator with three different first-stage implementations: kernel regression
(Table 1), a separate neural network nonparametric estimator (Table 2), and an
“all-in-one’’ neural network that jointly estimates the first stage and
$\theta$ (Table 3). In all cases, increasing the sample size from $N=1000$ to
$N=5000$ substantially reduces MSE, standard deviations, and angular errors:
$1-\text{mean angular similarity}$ falls from roughly $6\times 10^{-3}$ to
$4\times 10^{-3}$ for the kernel, and from about $1.0\times 10^{-2}$ to
$3\text{–}4\times 10^{-3}$ for the neural network implementations. The kernel
first stage is slightly more accurate than the neural network alternatives at
$N=1000$, but by $N=5000$ all three approaches deliver very similar accuracy,
with small biases and tight angular concentration around $\theta_0$.


\begin{table}
\centering
\caption{Two-Stage RMS with Kernel First Stage}
\begin{tabular}{lcc}
\toprule
\textbf{Metric} & \textbf{N=1000} & \textbf{N=5000} \\
\midrule
MSE of $\theta_1$ & 0.00375 & 0.00249 \\
MSE of $\theta_2$ & 0.00405 & 0.00228 \\
MSE of $\theta_3$ & 0.00395 & 0.00257 \\
Bias of $\theta_1$ & -0.00279 & -0.00157 \\
Bias of $\theta_2$ & 0.00380 & 0.00154 \\
Bias of $\theta_3$ & -0.00358 & -0.00325 \\
SD of $\theta_1$ & 0.06114 & 0.04986 \\
SD of $\theta_2$ & 0.06356 & 0.04771 \\
SD of $\theta_3$ & 0.06274 & 0.05058 \\
L1 Error of $\theta_1$ & 0.04643 & 0.03657 \\
L1 Error of $\theta_2$ & 0.04886 & 0.03562 \\
L1 Error of $\theta_3$ & 0.04620 & 0.03615 \\
\midrule
L2 Norm of Bias & 0.005923 & 0.003920 \\
1-- Mean Angular Similarity & 0.005874 & 0.003668 \\
1-- Median Angular Similarity & 0.003250 & 0.001833 \\
\bottomrule
\end{tabular}
\end{table}


\begin{table}
\centering
\caption{Two-Stage RMS with Neural-Net First Stage}
\begin{tabular}{lcc}
\toprule
\textbf{Metric} & \textbf{N=1000} & \textbf{N=5000} \\
\midrule
MSE of $\theta_1$ & 0.00703 & 0.00260 \\
MSE of $\theta_2$ & 0.00728 & 0.00241 \\
MSE of $\theta_3$ & 0.00678 & 0.00242 \\
Bias of $\theta_1$ & -0.00410 & -0.00328 \\
Bias of $\theta_2$ & 0.00640 & 0.00312 \\
Bias of $\theta_3$ & -0.00776 & -0.00004 \\
SD of $\theta_1$ & 0.08374 & 0.05085 \\
SD of $\theta_2$ & 0.08508 & 0.04903 \\
SD of $\theta_3$ & 0.08195 & 0.04924 \\
L1 Error of $\theta_1$ & 0.06505 & 0.03846 \\
L1 Error of $\theta_2$ & 0.06705 & 0.03713 \\
L1 Error of $\theta_3$ & 0.06458 & 0.03759 \\
\midrule
L2 Norm of Bias & 0.010862 & 0.004529 \\
1-- Mean Angular Similarity & 0.010542 & 0.003717 \\
1-- Median Angular Similarity & 0.006961 & 0.002064 \\
\bottomrule
\end{tabular}
\end{table}

\begin{table}
\centering
\caption{Joint DNN-Based Estimation}
\begin{tabular}{lcc}
\toprule
\textbf{Metric} & \textbf{N=1000} & \textbf{N=5000} \\
\midrule
MSE of $\theta_1$ & 0.01047 & 0.00275 \\
MSE of $\theta_2$ & 0.00997 & 0.00260 \\
MSE of $\theta_3$ & 0.00990 & 0.00285 \\
Bias of $\theta_1$ & -0.00856 & -0.00174 \\
Bias of $\theta_2$ &  0.01187 &  0.00184 \\
Bias of $\theta_3$ & -0.00584 & -0.00352 \\
SD of $\theta_1$   & 0.10198 & 0.05238 \\
SD of $\theta_2$   & 0.09916 & 0.05097 \\
SD of $\theta_3$   & 0.09933 & 0.05327 \\
L1 Error of $\theta_1$ & 0.07945 & 0.04148 \\
L1 Error of $\theta_2$ & 0.07802 & 0.04006 \\
L1 Error of $\theta_3$ & 0.07886 & 0.04231 \\
\midrule
L2 Norm of Bias & 0.015764 & 0.004335 \\
1-- Mean Angular Similarity & 0.015174 & 0.004099 \\
1-- Median Angular Similarity & 0.009594 & 0.002806 \\
\bottomrule
\end{tabular}
\end{table}

\subsubsection{Two-Index Design}

For the two-index MISC design, Tables 4–6 show the same three
implementations. The problem is clearly harder: MSEs and angular errors are larger than in the single-index case, though they still improve remarkably with sample size. Here the choice of first-stage method matters more. The kernel version (Table 4) achieves reasonable performance, but the two-step neural network first stage (Table 5) delivers substantially smaller MSE and angular error, especially at $N=5000$ (where MSEs drop from about $10^{-2}$ to roughly
$3\times 10^{-3}$, and $1-\text{mean angular similarity}$ from about $1.5\times 10^{-2}$ to around $4.6\times 10^{-3}$). The all-in-one neural network (Table 6) performs similarly to the kernel in this two-index setting and does\ not match the accuracy of the two-step neural network. Overall, the tables confirm that (i) the RMS estimator behaves in line with the theory as $N$ grows, (ii) the two-step architecture is robust and competitive in the single-index case, and (iii) in more complex multi-index designs, flexible neural network first stages can yield clear gains over standard kernel
smoothing.

\begin{table}
\centering
\caption{Two-Stage RMS with Kernel First Stage: $J=2$}
\begin{tabular}{lcc}
\toprule
\textbf{Metric} & \textbf{N=1000} & \textbf{N=5000} \\
\midrule
MSE of $\theta_1$ & 0.02901 & 0.01047 \\
MSE of $\theta_2$ & 0.02810 & 0.01003 \\
MSE of $\theta_3$ & 0.02805 & 0.00995 \\
Bias of $\theta_1$ & -0.02186 & -0.01029 \\
Bias of $\theta_2$ &  0.02827 &  0.00720 \\
Bias of $\theta_3$ & -0.02361 & -0.00887 \\
SD of $\theta_1$ & 0.16890 & 0.10178 \\
SD of $\theta_2$ & 0.16522 & 0.09989 \\
SD of $\theta_3$ & 0.16580 & 0.09937 \\
L1 Error of $\theta_1$ & 0.12860 & 0.07589 \\
L1 Error of $\theta_2$ & 0.12873 & 0.07545 \\
L1 Error of $\theta_3$ & 0.12967 & 0.07591 \\
\midrule
L2 Norm of Bias & 0.042832 & 0.015379 \\
1-- Mean Angular Similarity & 0.042575 & 0.015223 \\
1-- Median Angular Similarity & 0.029377 & 0.008477 \\
\bottomrule
\end{tabular}
\end{table}

\begin{table}
\centering
\caption{Two-Stage RMS with Neural-Net First Stage: $J=2$}
\begin{tabular}{lcc}
\toprule
\textbf{Metric} & \textbf{N=1000} & \textbf{N=5000} \\
\midrule
MSE of $\theta_1$ & 0.01890 & 0.00303 \\
MSE of $\theta_2$ & 0.02165 & 0.00314 \\
MSE of $\theta_3$ & 0.01637 & 0.00302 \\
Bias of $\theta_1$ & -0.01421 & -0.00304 \\
Bias of $\theta_2$ & 0.02280 & 0.00113 \\
Bias of $\theta_3$ & -0.01229 & -0.00380 \\
SD of $\theta_1$ & 0.13674 & 0.05498 \\
SD of $\theta_2$ & 0.14536 & 0.05606 \\
SD of $\theta_3$ & 0.12736 & 0.05485 \\
L1 Error of $\theta_1$ & 0.09536 & 0.04228 \\
L1 Error of $\theta_2$ & 0.10210 & 0.04332 \\
L1 Error of $\theta_3$ & 0.09336 & 0.04255 \\
\midrule
L2 Norm of Bias & 0.029541 & 0.004994 \\
1-- Mean Angular Similarity & 0.028461 & 0.004600 \\
1-- Median Angular Similarity & 0.013980 & 0.002799 \\
\bottomrule
\end{tabular}
\end{table}

\begin{table}
\centering
\caption{Joint DNN-Based Estimation: $J=2$}
\begin{tabular}{lcc}
\toprule
\textbf{Metric} & \textbf{N=1000} & \textbf{N=5000} \\
\midrule
MSE of $\theta_1$ & 0.02751 & 0.01160 \\
MSE of $\theta_2$ & 0.02881 & 0.01188 \\
MSE of $\theta_3$ & 0.02689 & 0.01183 \\
Bias of $\theta_1$ & -0.02517 & -0.00710 \\
Bias of $\theta_2$ & 0.01779  &  0.01367 \\
Bias of $\theta_3$ & -0.02910 & -0.00981 \\
SD of $\theta_1$ & 0.16394 & 0.10745 \\
SD of $\theta_2$ & 0.16881 & 0.10816 \\
SD of $\theta_3$ & 0.16137 & 0.10834 \\
L1 Error of $\theta_1$ & 0.12984 & 0.08602 \\
L1 Error of $\theta_2$ & 0.13342 & 0.08658 \\
L1 Error of $\theta_3$ & 0.13054 & 0.08681 \\
\midrule
L2 Norm of Bias & 0.042388 & 0.018265 \\
1-- Mean Angular Similarity & 0.041604 & 0.017658 \\
1-- Median Angular Similarity & 0.026760 & 0.012385 \\
\bottomrule
\end{tabular}
\end{table}

\section{\label{sec:Con}Conclusion}

We have proposed a rectified-linear-unit-based maximum score (RMS) estimator for models characterized by sign-alignment restrictions.  By replacing the
discontinuous indicator in Manski’s maximum score with composite ReLU functions,
the population criterion becomes piecewise smooth with quadratic curvature,
while preserving the underlying identification logic.  This structure delivers
an intermediate, ``one-dimensional'' rate $n^{-s/(2s+1)}$ and asymptotic
normality, but also yields a sample objective that is much more amenable to
modern gradient-based optimization methods.  In practice, RMS can be optimized
using off-the-shelf routines from machine learning, avoiding the fragile,
combinatorial searches often required for discontinuous maximum score criteria.

We also embed the binary choice model in a general multi-index single-crossing
(MISC) framework, where several indexes enter through a common direction
parameter.  Even in this multi-index setting, the leading term in the
asymptotic expansion depends on the nonparametric component only through
its restriction to a finite union of $(d-1)$-dimensional hyperplanes, so the
effective nonparametric dimension remains one and the convergence rate is
unchanged.  Taken together, these results show that ReLU-based formulations can
retain the robustness and partial identification features of maximum score,
while offering significant computational advantages and extending naturally to
richer multi-index environments.

\bibliographystyle{ecca}
\bibliography{ReMS}

\appendix

\section{\label{sec:MainProofs}Main Proofs}

\subsection{\label{subsec:Pf_Term1}Proof of Lemma \ref{lem:Term1}}
\begin{proof}
\noindent Recall that $g_{\t,h}=g_{+,\t,h}+g_{-,\t,h}$ and
\[
g_{+,\t,h}\left(x\right)=\left[h\left(x\right)-\left[-x^{'}\t\right]_{+}\right]_{+},\quad g_{-,\t,h}\left(x\right)=\left[-h\left(x\right)-\left[x^{'}\t\right]_{+}\right]_{+}.
\]
For any $x,\t$ and $h$, observe first that $\left|g_{+,\t,h}-g_{+,\t_{0},h}\right|\leq\left[h\right]_{+}$
and, by the Lipschitz continuity of the ReLU function,
\begin{align*}
\left|g_{+,\t,h}\left(x\right)-g_{+,\t_{0},h}\left(x\right)\right|\leq & \left|h\left(x\right)-\left[-x^{'}\t\right]_{+}-\left(h\left(x\right)-\left[-x^{'}\t_{0}\right]_{+}\right)\right|\\
= & \left|\left[-x^{'}\t\right]_{+}-\left[-x^{'}\t_{0}\right]_{+}\right|\leq\left|x^{'}\left(\t-\t_{0}\right)\right|,
\end{align*}
or, in summary, 
\begin{equation}
\left|g_{+,\t,h}\left(x\right)-g_{+,\t_{0},h}\left(x\right)\right|\leq\min\left(\left[h\left(x\right)\right]_{+},\left|x^{'}\left(\t-\t_{0}\right)\right|\right).\label{eq:ReLU_ub}
\end{equation}

With $h=h_{0}$, we have $g_{+,\t_{0},h_{0}}=\left[h\right]_{+}$
and thus
\[
\left|g_{+,\t,h_{0}}-g_{+,\t_{0},h_{0}}\right|=\left[h_{0}\left(x\right)\right]_{+}-\left[h_{0}\left(x\right)-\left[-x^{'}\t\right]_{+}\right]_{+},
\]
which is nonzero only if $h_{0}\left(x\right)>0$ and $x^{'}\t<0$,
which, by the sign alignment restriction \eqref{eq:Mono_Equiv}, is
equivalent to the event $x^{'}\t<0<x^{'}\t_{0}$. Combing this with
\eqref{eq:ReLU_ub}, we have
\begin{align*}
\left|g_{+,\t,h_{0}}\left(x\right)-g_{+,\t_{0},h_{0}}\left(x\right)\right| & \leq\ind\left\{ x^{'}\t<0<x^{'}\t_{0}\right\} \left|x^{'}\left(\t-\t_{0}\right)\right|\\
 & =\ind\left\{ x^{'}\t_{0}+x^{'}\left(\t-\t_{0}\right)<0<x^{'}\t_{0}\right\} \left|x^{'}\left(\t-\t_{0}\right)\right|\\
 & \leq\ind\left\{ x^{'}\t_{0}-\norm x\norm{\t-\t_{0}}<0<x^{'}\t_{0}\right\} \norm x\norm{\t-\t_{0}}\\
 & \leq\ind\left\{ 0<x^{'}\t_{0}<\norm x\norm{\t-\t_{0}}\right\} \norm x\norm{\t-\t_{0}}
\end{align*}

\noindent Similarly, the arguments above can be adapted for $g_{-}$:
\[
\left|g_{-,\t,h_{0}}\left(x\right)-g_{-,\t_{0},h_{0}}\left(x\right)\right|\leq\ind\left\{ -\norm x\norm{\t-\t_{0}}<x^{'}\t_{0}<0\right\} \norm x\norm{\t-\t_{0}}.
\]
Together, we have
\[
\left|g_{\t,h_{0}}\left(x\right)-g_{\t_{0},h_{0}}\left(x\right)\right|\leq\ind\left\{ \left|\frac{x^{'}}{\norm x}\t_{0}\right|<\norm{\t-\t_{0}}\right\} \norm x\norm{\t-\t_{0}}.
\]

Define ${\cal \cG}_{1,\d}:=\left\{ g_{\t,h_{0}}-g_{\t_{0},h_{0}}:\ \norm{\t-\t_{0}}\leq\d\right\} .$
By the arguments above, ${\cal \cG}_{1,\d}$ has an envelope $G_{1,\d}$
given by
\[
G_{1,\d}:=\ind\left\{ \left|\frac{x^{'}}{\norm x}\t_{0}\right|<\d\right\} \norm x\d
\]
with
\begin{align*}
PG_{1,\d}^{2} & =\E\left[\ind\left\{ \left|\frac{X_{i}^{'}}{\norm{X_{i}}}\t_{0}\right|<\d\right\} \norm{X_{i}}^{2}\d^{2}\right]\leq\d^{2}\P\left(\left|\frac{X_{i}^{'}}{\norm{X_{i}}}\t_{0}\right|\leq\d\right)\leq C\d^{3}.
\end{align*}
Now, since ${\cal \cG}_{1,\d}\subseteq\cG$, we have $\mathscr{N}\left(\e,\cG_{1,\d},L_{2}\left(P\right)\right)\leq\mathscr{N}\left(\e,\cG,L_{2}\left(P\right)\right)$
\[
J_{1,\d}:=\int_{0}^{1}\sqrt{1+\log\mathscr{N}\left(\e,\cG_{1,},L_{2}\left(P\right)\right)}d\e\leq J<\infty.
\]
Then, by VW Theorem 2.14.1, we have
\[
P\sup_{g\in{\cal \cG}_{1,\d}}\left|\GG_{n}\left(g\right)\right|\leq J_{1,\d}\sqrt{PG_{1,\d}^{2}}\leq J_{1}C\d^{\frac{3}{2}}=C_{1}\d^{\frac{3}{2}}.
\]
\end{proof}

\subsection{\label{subsec:Pf_Term2}Proof of Lemma \ref{lem:Term2}}
\begin{proof}
Observe first that, by the construction of $g_{+}$, we have
\begin{align}
\left|g_{+,\t,h}-g_{+,\t_{0},h}-g_{+,\t,h_{0}}+g_{+,\t_{0},h_{0}}\right| & \leq2\left|x^{'}\left(\t-\t_{0}\right)\right|\leq2\norm x\norm{\t-\t_{0}}\label{eq:DiD_2xdt}
\end{align}
Define ${\cal \cG}_{2,\d}:=\left\{ g_{\t,h}-g_{\t_{0},h}-g_{\t,h_{0}}+g_{\t_{0},h_{0}}:\ \norm{\t-\t_{0}}\leq\d,h\in{\cal H}\right\} .$
By the arguments above, ${\cal \cG}_{2,\d}$ has an envelope $G_{2,\d}$
given by $G_{2,\d}:=M\d$ with 
\[
PG_{2,n,\d}^{2}=M^{2}\d^{2}.
\]
By VW Theorem 2.14.1, we have
\[
P\sup_{g\in{\cal \cG}_{2,\d}}\norm{\GG_{n}\left(g\right)}\leq J_{2,\d}\sqrt{PG_{2,\d}^{2}}\leq M\d.
\]
\end{proof}

\subsection{\label{subsec:Pf_Term3}Proof of Lemma \ref{lem:Term3}}
\begin{proof}
Noting that $g_{+},g_{-}$ are all Lipschitz continuous, 
\begin{align*}
\Dif_{\t}g_{+,\t,h}\left(x\right) & :=\Dif_{\t}\left[h\left(x\right)-\left[-x^{'}\t\right]_{+}\right]_{+}=\ind\left\{ h\left(x\right)>-x^{'}\t>0\right\} x\\
\Dif_{\t}g_{-,\t,h}\left(x\right) & :=-\ind\left\{ h\left(x\right)<-x^{'}\t<0\right\} x\\
\Dif_{\t}g_{\t,h}\left(x\right) & :=\Dif_{\t}g_{+,\t,h}\left(x\right)+\Dif_{\t}g_{-,\t,h}\left(x\right)
\end{align*}
are well-defined almost everywhere, and furthermore we have
\begin{align*}
\Dif_{\t}Pg_{+,\t,h_{0}} & =P\Dif_{\t}g_{+,\t,h_{0}}=\int\ind\left\{ h_{0}\left(x\right)>-x^{'}\t>0\right\} xdP\left(x\right)\\
\Dif_{\t}Pg_{-,\t,h_{0}} & =P\Dif_{\t}g_{-,\t,h_{0}}=\int-\ind\left\{ h_{0}\left(x\right)<-x^{'}\t<0\right\} xdP\left(x\right)
\end{align*}
Note that, at $\t=\t_{0}$, we have
\begin{align*}
\Dif_{\t}Pg_{+,\t_{0},h_{0}} & =\int\ind\left\{ h_{0}\left(x\right)>-x^{'}\t_{0}>0\right\} xdP\left(x\right)={\bf 0},\\
\Dif_{\t}Pg_{-,\t_{0},h_{0}} & =-\int\ind\left\{ h_{0}\left(x\right)<-x^{'}\t_{0}<0\right\} xdP\left(x\right)={\bf 0},\\
\Dif_{\t}g_{\t_{0},h_{0}}\left(x\right) & =\Dif_{\t}g_{+,\t_{0},h_{0}}\left(x\right)+\Dif_{\t}g_{-,\t_{0},h_{0}}\left(x\right)={\bf 0}.
\end{align*}
Recall that
\begin{align*}
\Dif_{\t}Pg_{+,\t,h}\left(x\right) & =\int\ind\left\{ h\left(x\right)>-x^{'}\t>0\right\} xdP\left(x\right)\\
 & =\left(\int_{x^{'}\t<0}-\int_{x^{'}\t<-h\left(x\right)}\right)\ind\left\{ h\left(x\right)>0\right\} xp\left(x\right)dx
\end{align*}
while 
\begin{align*}
\Dif_{\t}Pg_{-,\t,h}\left(x\right) & =-\int\ind\left\{ -h\left(x\right)>x^{'}\t>0\right\} xdP\left(x\right)\\
 & =-\left(\int_{x^{'}\t<-h\left(x\right)}-\int_{x^{'}\t<0}\right)\ind\left\{ h\left(x\right)<0\right\} xp\left(x\right)dx\\
 & =\left(\int_{x^{'}\t<0}-\int_{x^{'}\t<-h\left(x\right)}\right)\ind\left\{ h\left(x\right)<0\right\} xp\left(x\right)dx
\end{align*}
Hence, 
\begin{align*}
\Dif_{\t}Pg_{\t,h}\left(x\right) & =\Dif_{\t}Pg_{+,\t,h}\left(x\right)+\Dif_{\t}Pg_{-,\t,h}\left(x\right)=\left[\int_{x^{'}\t<0}-\int_{x^{'}\t<-h\left(x\right)}\right]xp\left(x\right)dx
\end{align*}
Since $\Dif_{x}\left(x^{'}\t\right)=\t$ and $\Dif_{x}\left(h\left(x\right)+x^{'}\t\right)=\Dif_{x}h\left(x\right)+\t$,
we have
\begin{align*}
\Dif_{\t\t}Pg_{\t,h}\left(x\right) & =\int_{x^{'}\t=0}\frac{1}{\norm{\t}}xx^{'}p\left(x\right)d{\cal H}^{d-1}\left(x\right)-\int_{x^{'}\t=-h\left(x\right)}\frac{1}{\norm{\Dif_{x}h\left(x\right)+\t}}xx^{'}p\left(x\right)d{\cal H}^{d-1}\left(x\right)\\
 & =\int_{x^{'}\t=0}xx^{'}p\left(x\right)d{\cal H}^{d-1}\left(x\right)-\int_{x^{'}\t=-h\left(x\right)}\frac{1}{\norm{\Dif_{x}h\left(x\right)+\t}}xx^{'}p\left(x\right)d{\cal H}^{d-1}\left(x\right)
\end{align*}

Recall that $h_{0}\left(x\right)=F\left(\rest{x^{'}\t_{0}}x\right)$
with $F\left(\rest 0x\right)\equiv0$. Hence, 
\[
\Dif_{x}h_{0}\left(x\right)=f\left(\rest{x^{'}\t_{0}}x\right)\t_{0}+F_{x}\left(\rest{x^{'}\t_{0}}x\right)
\]
with
\[
F_{x}\left(\rest 0x\right)={\bf 0}.
\]
Hence, evaluating $\Dif_{\t\t}Pg_{\t,h}\left(x\right)$ at $\left(\t_{0},h_{0}\right)$,
we have
\begin{align*}
 & \Dif_{\t\t}Pg_{\t_{0},h_{0}}\left(x\right)\\
= & \int_{x^{'}\t_{0}=0}xx^{'}p\left(x\right)d{\cal H}^{d-1}\left(x\right)-\int_{x^{'}\t_{0}=-h_{0}\left(x\right)}\frac{1}{\norm{\Dif_{x}h_{0}\left(x\right)+\t_{0}}}xx^{'}p\left(x\right)d{\cal H}^{d-1}\left(x\right)\\
= & \int_{x^{'}\t_{0}=0}xx^{'}p\left(x\right)d{\cal H}^{d-1}\left(x\right)-\int_{x^{'}\t_{0}=0}\frac{1}{\norm{f\left(\rest 0x\right)\t_{0}+F_{x}\left(\rest 0x\right)+\t_{0}}}xx^{'}p\left(x\right)d{\cal H}^{d-1}\left(x\right)\\
= & \int_{x^{'}\t_{0}=0}xx^{'}p\left(x\right)d{\cal H}^{d-1}\left(x\right)-\int_{x^{'}\t_{0}=0}\frac{1}{f\left(\rest 0x\right)+1}xx^{'}p\left(x\right)d{\cal H}^{d-1}\left(x\right)\\
= & \int_{x^{'}\t_{0}=0}\frac{f\left(\rest 0x\right)}{f\left(\rest 0x\right)+1}xx^{'}p\left(x\right)d{\cal H}^{d-1}\left(x\right)=V
\end{align*}
\end{proof}
Note that $\text{rank}\left(V\right)=d-1$ given that the integral
above is restricted to the $\left(d-1\right)$-dimensional hyperplane$\left\{ x:x^{'}\t_{0}=0\right\} .$

\subsection{Lebesgue Representation of Hausdorff Integrals via Change of Coordinates}

\noindent It will become subsequently convenient to work with an alternative
representation of the Lebesgue measure
\begin{defn}[Change of Coordinates]
\label{def:BasisT} Let $\left\{ \t,\tilde{e}_{\t,2},..,\tilde{e}_{\t,d}\right\} $
be an orthonormal basis in $\R^{d}$. Define $T_{\t}$ to be the $d\times d$
basis transformation matrix
\[
T_{\t}:=\left(\t,\tilde{e}_{\t,2},..,\tilde{e}_{\t,d}\right).
\]
We write $u:=T_{\t}^{'}x=\left(x^{'}\t,x^{'}\tilde{e}_{\t,2},..,x^{'}\tilde{e}_{\t,d}\right)$. 
\end{defn}
Clearly, since $T_{\t}^{'}=T_{\t}^{-1}$, we have $x=T_{\t}u$. Furthermore,
notice that $\left|\text{det}\left(T_{\t}\right)\right|=1$ due to
orthonormality.
\begin{lem}
Let $m\left(x\right)$ be a $P$-square-integrable function, and write
$m_{u}\left(u\right):=m\left(T_{\t}u\right)$ as the representation
of $m$ under the change of coordinates from $x$ to $u$ as in Definition
\ref{def:BasisT}. Then,
\[
\int_{x^{'}\t_{0}=t}m\left(x\right)d{\cal H}^{d-1}\left(x\right)\equiv\int_{u_{1}=t}m_{u}\left(t,u_{-1}\right)du_{-1}.
\]
\end{lem}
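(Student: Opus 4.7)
The plan is to reduce the Hausdorff-measure integral on the left-hand side to a Lebesgue integral by exploiting the fact that $T_{\t_{0}}$ is an orthogonal matrix. The induced map $\Phi\left(u\right):=T_{\t_{0}}u$ is a Euclidean isometry of $\R^{d}$, and isometries leave ${\cal H}^{k}$ invariant in every dimension $k$; moreover, $\Phi$ sends the tilted hyperplane $H_{x}:=\left\{ x\in\R^{d}:x^{'}\t_{0}=t\right\} $ onto the coordinate hyperplane $H_{u}:=\left\{ u\in\R^{d}:u_{1}=t\right\} $, on which ${\cal H}^{d-1}$ collapses to $\left(d-1\right)$-dimensional Lebesgue measure.

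First I would note that the orthonormality of the columns of $T_{\t_{0}}$ makes $\Phi$ a linear isometry with inverse $T_{\t_{0}}^{'}$, and that the first-column identity $T_{\t_{0}}e_{1}=\t_{0}$ yields $\Phi\left(u\right)^{'}\t_{0}=u_{1}$, so that $\Phi\left(H_{u}\right)=H_{x}$. Second, by the standard isometry-invariance of Hausdorff measure (see, e.g., Chapter 2 of \citet*{evans2015measure}), I would deduce the change-of-variables identity
\[
\int_{H_{x}}m\left(x\right)d{\cal H}^{d-1}\left(x\right)=\int_{H_{u}}m\left(T_{\t_{0}}u\right)d{\cal H}^{d-1}\left(u\right)=\int_{H_{u}}m_{u}\left(u\right)d{\cal H}^{d-1}\left(u\right).
\]
Third, since $H_{u}=\left\{ t\right\} \times\R^{d-1}$ is a translate of a coordinate subspace, the restriction of ${\cal H}^{d-1}$ to $H_{u}$ coincides with $\left(d-1\right)$-dimensional Lebesgue measure in the coordinates $u_{-1}:=\left(u_{2},...,u_{d}\right)$, which is the equivalence ${\cal H}^{k}\equiv{\cal L}^{k}$ on $k$-dimensional affine subspaces already flagged in the discussion preceding \eqref{eq:V_def}. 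Combining these two identifications yields $\int_{H_{u}}m_{u}\left(u\right)d{\cal H}^{d-1}\left(u\right)=\int_{\R^{d-1}}m_{u}\left(t,u_{-1}\right)du_{-1}$, which is the desired claim.

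The main obstacle is essentially notational rather than analytic: one simply combines two textbook facts (isometry invariance of ${\cal H}^{d-1}$ and its coincidence with Lebesgue measure on a coordinate hyperplane) and verifies that no Jacobian or curvature corrections arise, which is guaranteed by $\left|\det T_{\t_{0}}\right|=1$ and the flatness of $H_{u}$. Square integrability of $m$ with respect to $P$ is used only to ensure that both sides are finite and that the Fubini-type manipulation implicit in ``$d{\cal H}^{d-1}\left(u\right)=du_{-1}$'' is legitimate.
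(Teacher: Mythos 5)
Your proof is correct, but it takes a genuinely different route from the paper's. You establish the identity by composing two measure-theoretic facts: the invariance of ${\cal H}^{d-1}$ under the linear isometry $u\mapsto T_{\t_{0}}u$ (which maps $\left\{ u_{1}=t\right\} $ onto $\left\{ x^{'}\t_{0}=t\right\} $ because $T_{\t_{0}}^{'}\t_{0}=e_{1}$), and the coincidence of ${\cal H}^{d-1}$ with $\left(d-1\right)$-dimensional Lebesgue measure on the flat coordinate hyperplane. The paper instead writes the left-hand side as $\Dif_{t}\left[\int_{x^{'}\t_{0}\leq t}m\left(x\right)dx\right]$, changes variables in the full-dimensional Lebesgue integral using $\left|\det T_{\t_{0}}\right|=1$, and then applies Fubini and the fundamental theorem of calculus. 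The paper's first step is itself an instance of the coarea formula (valid here because $\norm{\Dif_{x}\left(x^{'}\t_{0}\right)}=\norm{\t_{0}}=1$) and its final interchange of $\Dif_{t}$ with the outer integral tacitly requires some regularity of $m_{u}$ in $u_{1}$, so your argument is arguably the more self-contained of the two: it yields the identity for every $t$ directly from the definition of Hausdorff measure, at the level of measures rather than via differentiation, with no Fubini or Leibniz-rule step to justify. One small refinement: square-integrability of $m$ under $P$ is not really what makes the manipulation legitimate — the measure identity holds for all nonnegative measurable $m$ by monotone-class arguments and extends to integrable $m$ by linearity; what matters is only that the restricted integrals are well defined.
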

\begin{proof}
By 
\begin{align*}
\int_{x^{'}\t_{0}=t}m\left(x\right)d{\cal H}^{d-1}\left(x\right) & =\Dif_{t}\left[\int_{x^{'}\t_{0}\leq t}m\left(x\right)dx\right]=\Dif_{t}\left[\int_{u_{1}\leq t}m_{u}\left(u\right)du\right]\\
 & =\int\left[\Dif_{t}\int_{-\infty}^{t}m_{u}\left(u_{1},u_{-1}\right)du_{1}\right]du_{-1}=\int m_{u}\left(t,u_{-1}\right)du_{-1}.
\end{align*}
\end{proof}

\subsection{\label{subsec:Pf_Term4}Proof of Lemma \ref{lem:Term4}}
\begin{proof}
Recall that $\Dif_{\t}Pg_{+,\t_{0},h}=P\Dif_{\t}g_{+,\t_{0},h}$ with
\begin{align*}
P\Dif_{\t}g_{+,\t_{0},h}= & \int\ind\left\{ h\left(x\right)>-x^{'}\t_{0}>0\right\} xp\left(x\right)dx\\
= & \int\ind\left\{ h\left(T_{\t_{0}}u\right)>-u_{1}>0\right\} T_{\t_{0}}up\left(T_{\t_{0}}u\right)du\\
= & \int\ind\left\{ h\left(T_{\t_{0}}u\right)>-u_{1}>0\right\} T_{\t_{0}}up\left(T_{\t_{0}}u\right)du\\
= & \int\left[\int\ind\left\{ h\left(T_{\t_{0}}u\right)>-u_{1}>0\right\} p\left(T_{\t_{0}}u\right)du_{1}\right]T_{\t_{0}}\ol u_{-1}du_{-1}
\end{align*}
Taking directional derivative of $\Dif_{\t}Pg_{+,\t_{0},h}$ w.r.t.
$h$ around $h_{0}$ in the direction of $h-h_{0}$, we have
\begin{align*}
 & \frac{1}{t}\left(\Dif_{\t}Pg_{+,\t_{0},h_{0}+t\left(h-h_{0}\right)}-\Dif_{\t}Pg_{+,\t_{0},h_{0}}\right)=\frac{1}{t}\Dif_{\t}Pg_{+,\t_{0},h_{0}+t\left(h-h_{0}\right)}\\
 & \frac{1}{t}\int\int\ind\left\{ h_{0}\left(T_{\t_{0}}u\right)+t\left(h\left(T_{\t_{0}}u\right)-h_{0}\left(T_{\t_{0}}u\right)\right)>-u_{1}>0\right\} p\left(T_{\t_{0}}u\right)du_{1}T_{\t}\ol u_{-1}du_{-1}\\
= & \frac{1}{t}\int\int\ind\left\{ h_{0u}\left(u_{1},u_{-1}\right)+t\left(h_{u}\left(u_{1},u_{-1}\right)-h_{0u}\left(u_{1},u_{-1}\right)\right)>-u_{1}>0\right\} p\left(T_{\t_{0}}u\right)du_{1}T_{\t}\ol u_{-1}du_{-1}\\
= & \int\left[\frac{1}{t}\int_{u_{1}^{*}\left(u_{-1},t\right)}^{0}p_{u}\left(u_{1},u_{-1}\right)du_{1}\right]T_{\t}\ol u_{-1}du_{-1}
\end{align*}
where 
\[
u_{1}^{*}\left(u_{-1},t\right):=\inf\left\{ u_{1}\leq0:h_{0u}\left(u_{1},u_{-1}\right)+t\left(h_{u}\left(u_{1},u_{-1}\right)-h_{0u}\left(u_{1},u_{-1}\right)\right)+u_{1}\geq0\right\} .
\]
Since $h_{u}\left(0,u_{-1}\right)>0$, then
\[
h_{0u}\left(0,u_{-1}\right)+t\left(h_{u}\left(0,u_{-1}\right)-h_{0u}\left(0,u_{-1}\right)\right)+0=th_{u}\left(0,u_{-1}\right)>0
\]
and thus 
\[
u_{1}^{*}\left(u_{-1},t\right)<0
\]
with
\[
h_{0u}\left(u_{1}^{*}\left(u_{-1},t\right),u_{-1}\right)+t\left(h_{u}\left(u_{1}^{*}\left(u_{-1},t\right),u_{-1}\right)-h_{0u}\left(u_{1}^{*}\left(u_{-1},t\right),u_{-1}\right)\right)+u_{1}^{*}\left(u_{-1},t\right)=0
\]
and thus
\[
\left[\Dif_{u_{1}}h_{0u}+t\left(\Dif_{u_{1}}h_{u}-\Dif_{u_{1}}h_{0u}\right)+1\right]\Dif_{t}u_{1}^{*}\left(u_{-1},t\right)+h_{u}\left(u_{1}^{*}\left(u_{-1},t\right),u_{-1}\right)-h_{0u}\left(u_{1}^{*}\left(u_{-1},t\right),u_{-1}\right)=0
\]
and thus
\[
\Dif_{t}u_{1}^{*}\left(u_{-1},t\right)=-\frac{1}{\Dif_{u_{1}}h_{0u}+t\left(\Dif_{u_{1}}h_{u}-\Dif_{u_{1}}h_{0u}\right)+1}\left(h_{u}-h_{0u}\right)
\]
with all functions of $u$ in the formulas above evaluated $\left(u_{1}^{*}\left(u_{-1},t\right),u_{-1}\right)$.
Hence, 
\begin{align*}
 & \lim_{t\to0}\frac{1}{t}\int_{u_{1}^{*}\left(u_{-1},t\right)}^{0}p\left(T_{\t_{0}}u\right)du_{1}\\
= & -p\left(T_{\t_{0}}\ol u_{-1}\right)\cd\rest{\Dif_{t}u_{1}^{*}\left(u_{-1},t\right)}_{t=0}\\
= & p\left(T_{\t_{0}}\ol u_{-1}\right)\cd\frac{1}{\Dif_{u_{1}}h_{0u}\left(0,u_{-1}\right)+1}\left[h_{u}\left(0,u_{-1}\right)-h_{0u}\left(0,u_{-1}\right)\right]
\end{align*}
and thus
\begin{align*}
D_{h}\left(P\Dif_{\t}g_{+,\t_{0},h_{0}},h-h_{0}\right) & =\int\left[h\left(T_{\t_{0}}\ol u_{-1}\right)-h_{0}\left(T_{\t_{0}}\ol u_{-1}\right)\right]\frac{1}{\Dif_{x}h_{0}\left(T_{\t_{0}}\ol u_{-1}\right)^{'}\t_{0}+1}T_{\t}\ol u_{-1}p_{u}\left(\ol u_{-1}\right)du_{-1}
\end{align*}
\end{proof}
Then, noticing that
\begin{align*}
D_{h}\left(P\Dif_{\t}g_{-,\t_{0},h_{0}},h-h_{0}\right) & =-\int\left[-\left(h\left(T_{\t_{0}}\ol u_{-1}\right)-h_{0}\left(T_{\t_{0}}\ol u_{-1}\right)\right)\right]\frac{1}{\Dif_{x}h_{0}\left(T_{\t_{0}}\ol u_{-1}\right)^{'}\t_{0}+1}T_{\t}\ol u_{-1}p_{u}\left(\ol u_{-1}\right)du_{-1}
\end{align*}
we have
\begin{align*}
D_{h}\left(P\Dif_{\t}g_{\t_{0},h_{0}},h-h_{0}\right) & =\int\left[h\left(T_{\t_{0}}\ol u_{-1}\right)-h_{0}\left(T_{\t_{0}}\ol u_{-1}\right)\right]\frac{1}{\Dif_{x}h_{0}\left(T_{\t_{0}}\ol u_{-1}\right)^{'}\t_{0}+1}T_{\t}\ol u_{-1}p_{u}\left(\ol u_{-1}\right)du_{-1}
\end{align*}
Reversing the change of coordinates, we have
\begin{align*}
D_{h}\left(P\Dif_{\t}g_{\t_{0},h_{0}},h-h_{0}\right) & =\int_{x^{'}\t_{0}=0}\left[h\left(x\right)-h_{0}\left(x\right)\right]\frac{1}{\Dif_{x}h_{0}\left(x\right)^{'}\t_{0}+1}xp\left(x\right)d{\cal H}^{d-1}\left(x\right)
\end{align*}
Recall that $\Dif_{x}h_{0}\left(x\right)=f\left(\rest{x^{'}\t_{0}}x\right)\t_{0}+F_{x}\left(\rest{x^{'}\t_{0}}x\right)$
and $F_{x}\left(\rest 0x\right)\equiv0$. Hence, for any $x$ s.t.
$x^{'}\t_{0}=0$, we have
\begin{align*}
\Dif_{x}h_{0}\left(x\right) & =f\left(\rest 0x\right)\t_{0}
\end{align*}
and thus
\[
\Dif_{x}h_{0}\left(x\right)^{'}\t_{0}=f\left(\rest 0x\right).
\]
Hence, 
\begin{align*}
D_{h}\left(P\Dif_{\t}g_{\t_{0},h_{0}},h-h_{0}\right) & =\int_{x^{'}\t_{0}=0}\left[h\left(x\right)-h_{0}\left(x\right)\right]\frac{1}{f\left(\rest 0x\right)+1}xp\left(x\right)d{\cal H}^{d-1}\left(x\right).
\end{align*}

Now, we control the size of the remainder term from the linearization
above. Notice that
\begin{align*}
\int_{u_{1}^{*}\left(u_{-1},t\right)}^{0}p_{u}\left(u_{1},u_{-1}\right)du_{1}= & \Dif_{t}\int_{u_{1}^{*}\left(u_{-1},t\right)}^{0}p_{u}\left(u_{1},u_{-1}\right)du_{1}\cd t\\
 & +\Dif_{t}^{2}\int_{u_{1}^{*}\left(u_{-1},\tilde{t}\right)}^{0}p_{u}\left(u_{1},u_{-1}\right)du_{1}\cd t^{2}
\end{align*}
for some $\tilde{t}\in\left[0,t\right]$, where
\begin{align*}
\Dif_{t}\int_{u_{1}^{*}\left(u_{-1},t\right)}^{0}p_{u}\left(u_{1},u_{-1}\right)du_{1} & =-p_{u}\left(u_{1}^{*}\left(u_{-1},t\right),u_{-1}\right)\cd\Dif_{t}u_{1}^{*}\left(u_{-1},t\right)
\end{align*}
and
\begin{align*}
 & \Dif_{t}^{2}\int_{u_{1}^{*}\left(u_{-1},t\right)}^{0}p_{u}\left(u_{1},u_{-1}\right)du_{1}\\
= & -\Dif_{t}\left[p_{u}\left(u_{1}^{*}\left(u_{-1},t\right),u_{-1}\right)\cd\Dif_{t}u_{1}^{*}\left(u_{-1},t\right)\right]\\
= & -\Dif_{u_{1}}p_{u}\left(u_{1}^{*}\left(u_{-1},t\right),u_{-1}\right)\cd\left[\Dif_{t}u_{1}^{*}\left(u_{-1},t\right)\right]^{2}-p_{u}\left(u_{1}^{*}\left(u_{-1},t\right),u_{-1}\right)\Dif_{t}^{2}u_{1}^{*}\left(u_{-1},t\right).
\end{align*}
Hence,
\begin{align}
 & \norm{\frac{1}{t}\Dif_{\t}Pg_{+,\t_{0},h_{0}+t\left(h-h_{0}\right)}-D_{h}\left(P\Dif_{\t}g_{+,\t_{0},h_{0}},t\left(h-h_{0}\right)\right)}\nonumber \\
\leq & t^{2}\norm{\int\Dif_{t}^{2}\int_{u_{1}^{*}\left(u_{-1},\tilde{t}\right)}^{0}p_{u}\left(u_{1}|u_{-1}\right)du_{1}T_{\t_{0}}\ol u_{-1}p_{u}\left(u_{-1}\right)du_{-1}}\nonumber \\
\leq & t^{2}\norm{\int\Dif_{u_{1}}p_{u}\left(u_{1}^{*}\left(u_{-1},\tilde{t}\right),u_{-1}\right)\cd\left[\Dif_{t}u_{1}^{*}\left(u_{-1},t\right)\right]^{2}T_{\t_{0}}\ol u_{-1}du_{-1}}\nonumber \\
 & +t^{2}\norm{\int p_{u}\left(u_{1}^{*}\left(u_{-1},\tilde{t}\right),u_{-1}\right)\Dif_{t}^{2}u_{1}^{*}\left(u_{-1},\tilde{t}\right)T_{\t_{0}}\ol u_{-1}du_{-1}}\label{eq:Dh_rem}
\end{align}
Recall that 
\begin{align*}
\Dif_{t}u_{1}^{*}\left(u_{-1},t\right) & =-\frac{h_{u}-h_{0u}}{\Dif_{u_{1}}\left[h_{0u}+t\left(h_{u}-h_{0u}\right)\right]+1}
\end{align*}
with all functions of $u$ in the above evaluated at $\left(u_{1}^{*}\left(u_{-1},t\right),u_{-1}\right)$.
Hence,
\[
\left|\Dif_{t}u_{1}^{*}\left(u_{-1},t\right)\right|^{2}\leq M\norm{h_{u}-h_{0u}}_{\infty}^{2}=M\norm{h-h_{0}}_{\infty}^{2}
\]
since $\frac{1}{\Dif_{u_{1}}\left[h_{0u}+t\left(h_{u}-h_{0u}\right)\right]+1}\leq M$.
Furthermore, since $\Dif_{u_{1}}p_{u}\left(\rest{u_{1}}u_{-1}\right)\leq M$,
we have
\begin{align}
 & \left|\int\Dif_{u_{1}}p_{u}\left(u_{1}^{*}\left(u_{-1},\tilde{t}\right),u_{-1}\right)\cd\left[\Dif_{t}u_{1}^{*}\left(u_{-1},t\right)\right]^{2}T_{\t_{0}}\ol u_{-1}du_{-1}\right|\nonumber \\
\leq & M\left|\int p_{u}\left(u_{-1}\right)\norm{h-h_{0}}_{\infty}^{2}T_{\t_{0}}\ol u_{-1}du_{-1}\right|\leq M\norm{h-h_{0}}_{\infty}^{2}.\label{eq:Dh_rem1}
\end{align}
Now, for the last term in \eqref{eq:Dh_rem}, notice that
\begin{align*}
\Dif_{t}^{2}u_{1}^{*}\left(u_{-1},t\right)= & -\frac{1}{\left[\Dif_{u_{1}}\left[h_{0u}+t\left(h_{u}-h_{0u}\right)\right]+1\right]^{2}}\cd\\
 & \cd\left\{ \begin{array}{c}
\Dif_{u_{1}}\left(h_{u}-h_{0}\right)\cd\left(\Dif_{u_{1}}\left[h_{0u}+t\left(h_{u}-h_{0u}\right)\right]+1\right)\cd\Dif_{t}u_{1}^{*}\left(u_{-1},t\right)\\
-\left(h_{u}-h_{0u}\right)\cd\left[\Dif_{u_{1}}^{2}\left(h_{0u}+t\left(h_{u}-h_{0u}\right)\right)\cd\Dif_{t}u_{1}^{*}\left(u_{-1},t\right)\right]\\
-\left(h_{u}-h_{0u}\right)\Dif_{u_{1}}\left(h_{u}-h_{0u}\right)
\end{array}\right\} 
\end{align*}
Since $\left|\Dif_{u_{1}}h_{u}\right|,\left|\Dif_{u_{1}}h_{0u}\right|$,
$\left|\Dif_{u_{1}}^{2}h_{u}\right|,\left|\Dif_{u_{1}}^{2}h_{0u}\right|$
and $\frac{1}{\left|\Dif_{u_{1}}\left[h_{0u}+t\left(h_{u}-h_{0u}\right)\right]+1\right|}$
are all uniformly bounded from above by some constant $M$, we have
\begin{align}
 & \norm{\Dif_{u_{1}}\left(h_{u}-h_{0}\right)\cd\left(\Dif_{u_{1}}\left[h_{0u}+t\left(h_{u}-h_{0u}\right)\right]+1\right)\cd\Dif_{t}u_{1}^{*}\left(u_{-1},t\right)}\nonumber \\
\leq & M\cd\norm{\Dif_{x}\left(h-h_{0}\right)}\cd\norm{h-h_{0}}_{\infty}\label{eq:Dh_rem2}
\end{align}
and 
\begin{align}
\norm{-\left(h_{u}-h_{0u}\right)\cd\left[\Dif_{u_{1}}^{2}\left(h_{0u}+t\left(h_{u}-h_{0u}\right)\right)\cd\frac{\p}{\p t}u_{1}^{*}\left(u_{-1},t\right)\right]}\leq M\norm{h-h_{0}}_{\infty}^{2}\label{eq:Dh_rem3}
\end{align}
and
\begin{equation}
\norm{-\left(h_{u}-h_{0u}\right)\Dif_{u_{1}}\left(h_{u}-h_{0u}\right)}\leq\norm{\Dif_{x}\left(h-h_{0}\right)}\cd\norm{h-h_{0}}_{\infty}.\label{eq:Dh_rem4}
\end{equation}
Combining \eqref{eq:Dh_rem1}- \eqref{eq:Dh_rem4}, we can bound \eqref{eq:Dh_rem}
by
\begin{align*}
 & \norm{\frac{1}{t}\Dif_{\t}Pg_{+,\t_{0},h_{0}+t\left(h-h_{0}\right)}-D_{h}\left(P\Dif_{\t}g_{+,\t_{0},h_{0}},t\left(h-h_{0}\right)\right)}\\
\leq & t^{2}M\norm{h-h_{0}}_{\infty}\left(\norm{h-h_{0}}_{\infty}+\norm{\Dif_{x}\left(h-h_{0}\right)}_{\infty}\right)
\end{align*}

Now, with $\hat{h}$ plugged in place of $h$, we write $\norm{\hat{h}-h_{0}}_{\infty}=O_{p}\left(a_{n}\right)$
and $\norm{\Dif_{x}\left(\hat{h}-h_{0}\right)}_{\infty}=O_{p}\left(c_{n}\right)$.
Since it is well-known that the convergence rate of $\Dif_{x}\left(\hat{h}-h_{0}\right)$
is slower than $\hat{h}-h_{0}$, we have
\[
\norm{\Dif_{\t}Pg_{+,\t_{0},\hat{h}}-D_{h}\left(P\Dif_{\t}g_{+,\t_{0},h_{0}},\hat{h}-h_{0}\right)}\leq Ma_{n}c_{n}.
\]

Lastly, recall that

\begin{align*}
\Dif_{\t\t}Pg_{\t,h} & =\int_{x^{'}\t=0}xx^{'}p\left(x\right)d{\cal H}^{d-1}\left(x\right)-\int_{x^{'}\t=-h\left(x\right)}\frac{1}{\norm{\Dif_{x}h\left(x\right)+\t}}xx^{'}p\left(x\right)d{\cal H}^{d-1}\left(x\right)
\end{align*}
and thus
\begin{align*}
\Dif_{\t\t}P\left(g_{\t_{0},h}-g_{\t_{0},h_{0}}\right) & =-\left[\int_{x^{'}\t_{0}=-h\left(x\right)}\frac{1}{\norm{\Dif_{x}h\left(x\right)+\t_{0}}}-\int_{x^{'}\t_{0}=0}\frac{1}{\norm{\Dif_{x}h_{0}\left(x\right)+\t_{0}}}\right]xx^{'}p\left(x\right)d{\cal H}^{d-1}\left(x\right)
\end{align*}
Given that $\norm{\hat{h}-h_{0}}_{\infty}=o_{p}\left(1\right)$ and
$\norm{\Dif_{x}\hat{h}-\Dif_{x}h_{0}}_{\infty}=o_{p}\left(1\right)$
in Assumption \ref{assu:FirstStageDeriv}, we have
\[
\Dif_{\t\t}P\left(g_{\t_{0},\hat{h}}-g_{\t_{0},h_{0}}\right)=o_{p}\left(1\right),
\]
and thus
\[
\left(\t-\t_{0}\right)^{'}\Dif_{\t\t}P\left(g_{\t_{0},\hat{h}}-g_{\t_{0},h_{0}}\right)\left(\t-\t_{0}\right)=o_{p}\left(\norm{\t-\t_{0}}^{2}\right).
\]

\subsection{\label{subsec:LowerDim}Proof of Lemma \ref{lem:T4_Kern}}

We first provide a lemma that we will used in the proof of Lemma \ref{lem:T4_Kern}.
\begin{lem}[Lower-Dimensional Integral of Kernels]
\label{lem:KernInteg} Define
\[
G\left(t\right):=\int_{x^{'}\t_{0}=t}K\left(x\right)d{\cal H}^{d-1}\left(x\right).
\]
Then:
\begin{itemize}
\item[(a)]  $G\left(t\right)=\int K_{u}\left(t,u_{-1}\right)du_{-1}$ under
the change of coordinates in Definition \ref{def:BasisT}.
\item[(b)]  $G\left(t\right)$ is a unidimensional kernel of smoothness order
$s$.
\end{itemize}
\end{lem}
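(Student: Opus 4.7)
My plan is to split the lemma into its two parts and handle them separately. Part (a) is essentially a change-of-variables statement that reduces directly to an earlier appendix result, while part (b) is a verification that the four defining properties of a univariate order-$s$ kernel pass through the hyperplane integration.

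For part (a), I would invoke the Lebesgue representation lemma stated just before Section~A.5 (which asserts that for any $P$-square-integrable $m$, $\int_{x'\theta_0=t} m(x)\, d\mathcal{H}^{d-1}(x) = \int m_u(t, u_{-1})\, du_{-1}$), specialized to $m = K$ and the orthonormal basis at $\theta_0$. The underlying fact is that $T_{\theta_0}$ is an orthogonal transformation (so $|\det T_{\theta_0}|=1$), which is an isometry carrying the hyperplane $\{x : x'\theta_0=t\}$ onto $\{u : u_1 = t\}$; the $(d-1)$-dimensional Hausdorff measure pushes forward to ordinary $(d-1)$-dimensional Lebesgue measure. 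Under Assumption~\ref{assu:KernelSeries}(a), $K$ is bounded with finite polynomial moments and hence certainly square-integrable, so the cited lemma applies directly and yields part (a) without further work.

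For part (b), I would verify in turn the four conditions that make $G$ a univariate kernel of smoothness order $s$. \emph{Symmetry:} from $K(-x)=K(x)$ in (a.i) and the linearity of $T_{\theta_0}$, we get $K_u(-u) = K(-T_{\theta_0}u) = K_u(u)$; then substituting $u_{-1}\mapsto -u_{-1}$ inside $G(-t)=\int K_u(-t, u_{-1})\, du_{-1}$ yields $G(-t) = G(t)$. \emph{Normalization:} by Fubini and the change of variables $x=T_{\theta_0} u$ with unit Jacobian,
\[
\int G(t)\, dt = \int K_u(u)\, du = \int K(x)\, dx = 1.
\]
\emph{Vanishing and leading moments:} the key identity is
\[
\int t^j G(t)\, dt = \int u_1^j K_u(u)\, du = \int (x'\theta_0)^j K(x)\, dx,
\]
which follows from Fubini and unwinding the change of coordinates. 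Expanding $(x'\theta_0)^j$ by the multinomial theorem and using that $K$ has vanishing multi-index moments of total degree $1,\dots,s-1$ (see below), every term vanishes for $j<s$, and at $j=s$ the expansion reduces, via the permutation symmetry (a.iv) and the single-coordinate moment condition in (a.iii), to a finite positive multiple of $\kappa_s=\int x_j^s K(x)\,dx$, which is finite and nonzero.

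The main obstacle I foresee is justifying that $K$ has vanishing \emph{multi-index} moments of all total degrees below $s$, not merely the coordinatewise moments $\int x_j^l K(x)\, dx = 0$ literally stated in (a.iii). For product kernels this is immediate by Fubini, and for permutation-symmetric kernels it follows by a parity plus symmetry argument (any multi-index $\alpha$ with $|\alpha|<s$ and some odd component contributes zero by $K(-x)=K(x)$ combined with permutation symmetry; the remaining even multi-indices reduce, by (a.iv), to products of single-coordinate moments, the smallest nontrivial of which has total degree $\ge s$ by (a.iii)). This strengthening of (a.iii) to full multi-index moments is the standard interpretation of an order-$s$ multivariate kernel and is the property implicitly used in the $b_n^s$ bias bound appearing throughout the preceding proofs, so the verification is really a bookkeeping check rather than a new technical step.
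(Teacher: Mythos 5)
Your proposal is correct and follows essentially the same route as the paper: part (a) via the preceding change-of-coordinates lemma (equivalently, the representation $G(t)=\nabla_t\int_{x'\theta_0\le t}K(x)\,dx$), and part (b) by reducing $\int t^jG(t)\,dt$ to $\int (x'\theta_0)^jK(x)\,dx$, expanding multinomially, and invoking vanishing multi-index moments of total degree below $s$. The one caveat is that your parity-plus-permutation justification of those multi-index moments does not actually cover mixed odd multi-indices such as $\alpha=(1,1)$ under (a.i) and (a.iv) alone (a non-product kernel like $\phi(x_1x_2)\psi(\|x\|^2)$ is a counterexample), but the paper's own proof simply asserts the identical multi-index vanishing without argument, so you are, if anything, more explicit than the paper about where this strengthened reading of Assumption \ref{assu:KernelSeries}(a.iii) is needed.
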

\begin{proof}
Note that
\[
\Dif_{t}\int_{x^{'}\t_{0}\leq t}K\left(x\right)dx=\int_{x^{'}\t_{0}=0}K\left(x\right)d{\cal H}^{d-1}\left(x\right)=G\left(t\right)
\]
and thus (iii) holds since
\begin{align*}
\int G\left(t\right)dt & =\int\left[\int_{x^{'}\t_{0}=0}K\left(x\right)d{\cal H}^{d-1}\left(x\right)\right]dv\\
 & =\int\left[\Dif_{t}\int_{x^{'}\t_{0}\leq t}K\left(x\right)dx\right]dt=\int K\left(x\right)dx=1,
\end{align*}
(ii) holds since
\begin{align*}
G\left(-t\right) & =\int_{x^{'}\t_{0}=0}K\left(x\right)d{\cal H}^{d-1}\left(x\right)\\
 & =\int_{u^{'}\t_{0}=0}K\left(-u\right)d{\cal H}^{d-1}\left(-u\right)\quad\text{with }u:=-x\\
 & =\int_{x^{'}\t_{0}=0}K\left(u\right)d{\cal H}^{d-1}\left(u\right)=G\left(v\right)
\end{align*}
For (v) note that, for any $l\leq s-1$,
\begin{align*}
\int t^{l}G\left(t\right)dt & =\int t^{l}\int_{x^{'}\t_{0}=0}K\left(x\right)d{\cal H}^{d-1}\left(x\right)dt\\
 & =\int\Dif_{t}\int_{x^{'}\t_{0}\leq t}t^{l}K\left(x\right)dxdt\\
 & =\int\Dif_{t}\left[\int_{x^{'}\t_{0}\leq t}\left(x^{'}\t_{0}\right)^{l}K\left(x\right)dx\right]dt\\
 & =\int\left(x^{'}\t_{0}\right)^{l}K\left(x\right)dx\\
 & =\int\left(\sum_{j}\t_{0j}x_{j}\right)^{l}K\left(x\right)dx=0
\end{align*}
since, for any $\left(\a_{1},...,\a_{d}\right)$ s.t. $\a_{j}\in\mathbb{N}$
and $0\leq\a_{j}\leq s-1$, we have
\begin{align*}
\int x_{1}^{\a_{1}}...x_{d}^{\a_{d}}K\left(x\right)dx & =0.
\end{align*}
Furthermore, since $\int K\left(x\right)dx=1$ we have (i)
\[
\left|G\left(t\right)\right|\leq M:=\int\left[K\left(x\right)\right]_{+}dx<\infty
\]
and (iv):
\begin{align*}
\int\left|t\right|^{l}G\left(t\right)dt & =\int\int\left|x^{'}\t_{0}\right|^{l}K\left(x\right)d{\cal H}^{d-1}\left(x\right)dt\\
 & =\int\int\left|x^{'}\t_{0}\right|^{l}K\left(x\right)dx<\infty.
\end{align*}
Lastly, (vi) is trivially true since $G$ is a univariate function.
\end{proof}

\subsubsection*{Proof of Lemma \ref{lem:T4_Kern}}
\begin{proof}
Write $w\left(x\right):=\frac{1}{f\left(\rest 0x\right)+1}x$ so that
\[
L\left(h\right)=\int_{x^{'}\t_{0}=0}h\left(x\right)w\left(x\right)p\left(x\right)d{\cal H}^{d-1}\left(x\right).
\]
Write $m\left(x\right):=h_{0}\left(x\right)p\left(x\right)$, $\hat{m}\left(x\right):=\frac{1}{nb_{n}^{d}}\sum_{i=1}^{n}K\left(\frac{X_{i}-x}{b_{n}}\right)\left(Y_{i}-\frac{1}{2}\right)$
and $\hat{p}\left(x\right):=\frac{1}{n}\sum_{i=1}^{n}K\left(\frac{X_{i}-x}{b_{n}}\right)$
so that $h_{0}\left(x\right)=m\left(x\right)/p\left(x\right)$ and
$\hat{h}\left(x\right)=\hat{m}\left(x\right)/\hat{p}\left(x\right)$,
we have

\begin{align}
L\left(\hat{h}\right)-L\left(h_{0}\right) & =\int_{x^{'}\t_{0}=0}\left[\hat{h}\left(x\right)-h_{0}\left(x\right)\right]w\left(x\right)p\left(x\right)d{\cal H}^{d-1}\left(x\right)\nonumber \\
 & =\int_{x^{'}\t_{0}=0}\left[\frac{\hat{m}\left(x\right)}{\hat{p}\left(x\right)}-\frac{m\left(x\right)}{p\left(x\right)}\right]w\left(x\right)p\left(x\right)d{\cal H}^{d-1}\left(x\right)\nonumber \\
 & =\int_{x^{'}\t_{0}=0}\left[\frac{\hat{m}\left(x\right)-m\left(x\right)}{p\left(x\right)}-\frac{m\left(x\right)}{p^{2}\left(x\right)}\left(\hat{p}\left(x\right)-p\left(x\right)\right)\right]w\left(x\right)p\left(x\right)d{\cal H}^{d-1}\left(x\right)+R_{1}\nonumber \\
 & =\int_{x^{'}\t_{0}=0}\left[\hat{m}\left(x\right)-m\left(x\right)-h_{0}\left(x\right)\left(\hat{p}\left(x\right)-p\left(x\right)\right)\right]w\left(x\right)d{\cal H}^{d-1}\left(x\right)+R_{1}\nonumber \\
 & =\int_{x^{'}\t_{0}=0}\left[\hat{m}\left(x\right)-h_{0}\left(x\right)\hat{p}\left(x\right)\right]w\left(x\right)d{\cal H}^{d-1}\left(x\right)+R_{1}\nonumber \\
 & =\int_{x^{'}\t_{0}=0}\frac{1}{nb_{n}^{d}}\sum_{i=1}^{n}K\left(\frac{x-X_{i}}{b_{n}}\right)\left(Y_{i}-\frac{1}{2}-h_{0}\left(x\right)\right)w\left(x\right)d{\cal H}^{d-1}\left(x\right)+R_{1}\nonumber \\
 & =\underset{T_{41}}{\underbrace{\frac{1}{nb_{n}^{d}}\sum_{i=1}^{n}\int_{x^{'}\t_{0}=0}K\left(\frac{x-X_{i}}{b_{n}}\right)\left(Y_{i}-\frac{1}{2}-h_{0}\left(x\right)\right)w\left(x\right)d{\cal H}^{d-1}\left(x\right)}}+R_{1},\label{eq:Kern_Ldh}
\end{align}
where the remainder term $R_{1}=O\left(\norm{\hat{m}-m}^{2}+\norm{\hat{p}-m}^{2}+2\norm{\hat{m}-m}\norm{\hat{p}-p}\right)$
is asymptotically negligible.

Next, we will work with the change of coordinate $u=T_{\t_{0}}x$
with $u_{1}=x^{'}\t_{0}$, and write $U_{i}:=T_{\t_{0}}X_{i}$. Then
we apply the usual ``kernel change of variable'' technique on $u_{-1}$,
setting
\[
v_{-1}:=\frac{u_{-1}-U_{i,-1}}{b_{n}},\quad\text{so that }u_{-1}=U_{i,-1}+b_{n}v_{-1},
\]
Correspondingly, the constraint $x^{'}\t_{0}=0$ becomes $u_{1}=0$,
and thus we can write
\begin{align}
T_{41} & =\frac{1}{nb_{n}^{d}}\sum_{i=1}^{n}\int_{u_{1}=0}K_{u}\left(\frac{u-U_{i}}{b_{n}}\right)\left(Y_{i}-\frac{1}{2}-h_{0u}\left(u\right)\right)w_{u}\left(u\right)d{\cal H}^{d-1}\left(u\right)\nonumber \\
 & =\frac{1}{nb_{n}^{d}}\sum_{i=1}^{n}\int K_{u}\left(-\frac{U_{i1}}{b_{n}},\frac{u_{-1}-U_{i,-1}}{b_{n}}\right)\left(Y_{i}-\frac{1}{2}-h_{0u}\left(0,u_{-1}\right)\right)w_{u}\left(0,u_{-1}\right)du_{-1}\nonumber \\
 & =\frac{1}{nb_{n}^{d}}\sum_{i=1}^{n}\int K_{u}\left(-\frac{U_{i1}}{b_{n}},v_{-1}\right)\left(Y_{i}-\frac{1}{2}-h_{0u}\left(0,U_{i,-1}+b_{n}v_{-1}\right)\right)w_{u}\left(0,U_{i,-1}+b_{n}v_{-1}\right)b_{n}^{d-1}dv_{-1}\nonumber \\
 & =\frac{1}{nb_{n}}\sum_{i=1}^{n}\int K_{u}\left(-\frac{U_{i1}}{b_{n}},v_{-1}\right)\left(Y_{i}-\frac{1}{2}-h_{0u}\left(0,U_{i,-1}\right)+O\left(b_{n}\right)\right)\left[w_{u}\left(0,U_{i,-1}\right)+O\left(b_{n}\right)\right]dv_{-1}\nonumber \\
 & =\frac{1}{nb_{n}}\sum_{i=1}^{n}\int K_{u}\left(-\frac{U_{i1}}{b_{n}},v_{-1}\right)\left(Y_{i}-\frac{1}{2}-h_{0u}\left(0,U_{i,-1}\right)\right)w_{u}\left(0,U_{i,-1}\right)dv_{-1}+R_{2}\nonumber \\
 & =\frac{1}{nb_{n}}\sum_{i=1}^{n}\left(Y_{i}-\frac{1}{2}-h_{0u}\left(0,U_{i,-1}\right)\right)w_{u}\left(0,U_{i,-1}\right)\int K_{u}\left(-\frac{U_{i1}}{b_{n}},v_{-1}\right)dv_{-1}+R_{2}\nonumber \\
 & \underset{T_{42}}{=\underbrace{\frac{1}{nb_{n}}\sum_{i=1}^{n}\left(Y_{i}-\frac{1}{2}-h_{0u}\left(0,U_{i,-1}\right)\right)w_{u}\left(0,U_{i,-1}\right)G\left(\frac{U_{i1}}{b_{n}}\right)}}+R_{2}\label{eq:Kern_d1_est}
\end{align}
where $R_{2}$ is asymptotically negligible. By Lemma \ref{lem:KernInteg},
$G\left(t\right)$ is a univariate kernel function of smoothness order
$s$, and hence the asymptotic behavior of the leading term $T_{42}$
in \eqref{eq:Kern_d1_est} can be established in the same way as for
a univariate kernel estimator. 

Formally, we analyze $\E\left[T_{41}\right]$ and $\text{Var}\left[T_{41}\right]$
separately. For $\E\left[T_{41}\right]$, we have 
\begin{align*}
 & \E\left[T_{42}\right]\\
=\  & \E\left[\frac{1}{nb_{n}}\sum_{i=1}^{n}\left(Y_{i}-\frac{1}{2}-h_{0u}\left(0,U_{i,-1}\right)\right)w_{u}\left(0,U_{i,-1}\right)G\left(\frac{U_{i,1}}{b_{n}}\right)\right]\\
=\  & \frac{1}{b_{n}}\E\left[\left(Y_{i}-\frac{1}{2}-h_{0u}\left(0,U_{i,-1}\right)\right)w_{u}\left(0,U_{i,-1}\right)G\left(\frac{U_{i,1}}{b_{n}}\right)\right]\\
=\  & \frac{1}{b_{n}}\E\left[\left(h_{0u}\left(U_{i}\right)-h_{0u}\left(0,U_{i,-1}\right)\right)w_{u}\left(0,U_{i,-1}\right)G\left(\frac{U_{i,1}}{b_{n}}\right)\right]\\
=\  & \frac{1}{b_{n}}\int\left(h_{0u}\left(u_{1},u_{-1}\right)-h_{0u}\left(0,u_{-1}\right)\right)w_{u}\left(0,u_{-1}\right)G\left(\frac{u_{1}}{b_{n}}\right)p_{u}\left(u_{1},u_{-1}\right)du_{1}du_{-1}\\
=\  & \frac{1}{b_{n}}\int\left(h_{0u}\left(b_{n}v_{1},u_{-1}\right)-h_{0u}\left(0,u_{-1}\right)\right)w_{u}\left(0,u_{-1}\right)G\left(v_{1}\right)p_{u}\left(b_{n}v_{1},u_{-1}\right)b_{n}dv_{1}du_{-1}\quad\text{with }\frac{u_{1}}{b_{n}}=v_{1}\\
=\  & \int\left(h_{0u}\left(b_{n}v_{1},u_{-1}\right)-h_{0u}\left(0,u_{-1}\right)\right)w_{u}\left(0,u_{-1}\right)G\left(v_{1}\right)p_{u}\left(b_{n}v_{1},u_{-1}\right)dv_{1}du_{-1}\\
=\  & \int\left[\int\phi\left(b_{n}v_{1},u_{-1}\right)G\left(v_{1}\right)dv_{1}\right]w_{u}\left(0,u_{-1}\right)du_{-1}\\
=\  & \int G\left(v_{1}\right)dv_{1}\cd\int h_{0u}\left(0,u_{-1}\right)w_{u}\left(0,u_{-1}\right)p_{u}\left(0,u_{-1}\right)+0+b^{s}\int v_{1}^{s}G\left(v_{1}\right)dv_{1}\Dif_{u_{1}}\left[hp\right]\left(0,u_{-1}\right)w_{u}\left(0,u_{-1}\right)du_{-1}\\
=\  & \int h_{0u}\left(0,u_{-1}\right)w_{u}\left(0,u_{-1}\right)p_{u}\left(0,u_{-1}\right)du_{-1}+b^{s}\kappa_{s}+o\left(b^{s}\right)\\
=\  & \int_{x^{'}\t_{0}=0}h_{0}\left(x\right)w\left(x\right)p\left(x\right)d{\cal H}^{d-1}\left(x\right)+b^{s}\kappa_{s}+o\left(b^{s}\right)
\end{align*}
By
\begin{align*}
\phi\left(b_{n}v_{1},u_{-1}\right):= & \left(h_{0u}\left(b_{n}v_{1},u_{-1}\right)-h_{0u}\left(0,u_{-1}\right)\right)p_{u}\left(b_{n}v_{1},u_{-1}\right)\\
= & 0+\sum_{j=1}^{s-1}\Dif_{u_{1}}^{\left(j\right)}\phi\left(0,u_{-1}\right)b_{n}^{j}v_{1}^{j}+\Dif_{u_{1}}^{\left(s\right)}\phi\left(0,u_{-1}\right)b_{n}^{s}v_{1}^{s}+o\left(b_{n}^{s}\right)
\end{align*}
which implies that 
\begin{align*}
 & \int\phi\left(b_{n}v_{1},u_{-1}\right)G\left(v_{1}\right)dv_{1}\\
= & \sum_{j=1}^{s-1}\Dif_{u_{1}}^{\left(j\right)}\phi\left(0,u_{-1}\right)b_{n}^{j}\underset{=0}{\underbrace{\int v_{1}^{j}G\left(v_{1}\right)dv_{1}}}+\Dif_{u_{1}}^{\left(s\right)}\phi\left(0,u_{-1}\right)b_{n}^{s}\underset{=:\kappa_{G,s}}{\underbrace{\int v_{1}^{s}G\left(v_{1}\right)dv_{1}}}+o\left(b_{n}^{s}\right)\\
= & b_{n}^{s}\kappa_{G,s}\cd\Dif_{u_{1}}^{\left(s\right)}\phi\left(0,u_{-1}\right)
\end{align*}
and thus, writing $B_{s}:=\int\Dif_{u_{1}}^{\left(s\right)}\phi\left(0,u_{-1}\right)w_{u}\left(0,u_{-1}\right)du_{-1}$,
we have
\begin{align}
\E\left[T_{42}\right] & =\int b_{n}^{s}\kappa_{G,s}\cd\Dif_{u_{1}}^{\left(s\right)}\phi\left(0,u_{-1}\right)w_{u}\left(0,u_{-1}\right)du_{-1}+o\left(b_{n}^{s}\right)=b_{n}^{s}B_{s}+o\left(b_{n}^{s}\right).\label{eq:Bias_T42}
\end{align}

`Next, for $\text{Var}\left(T_{42}\right)$, we have
\begin{align}
 & \text{Var}\left(T_{42}\right)\nonumber \\
=\  & \text{Var}\left(\frac{1}{nb_{n}}\sum_{i}\left(Y_{i}-\frac{1}{2}-h_{0u}\left(0,U_{i,-1}\right)\right)w_{u}\left(0,U_{i,-1}\right)G\left(\frac{U_{i,1}}{b_{n}}\right)\right)\nonumber \\
=\  & \frac{1}{n}\E\left[\left(\frac{1}{b_{n}^{2}}\left(Y_{i}-\frac{1}{2}-h_{0u}\left(0,U_{i,-1}\right)\right)^{2}w_{u}\left(0,U_{i,-1}\right)w_{u}\left(0,U_{i,-1}\right)^{'}G^{2}\left(\frac{U_{i,1}}{b_{n}}\right)\right)\right]-\frac{1}{n}\left(\E\left[T_{42}\right]\right)^{2}\nonumber \\
=\  & \frac{1}{nb_{n}^{2}}\E\left[\left(Y_{i}-\frac{1}{2}-h_{0u}\left(0,U_{i,-1}\right)\right)^{2}w_{u}w_{u}^{'}\left(0,U_{i,-1}\right)G^{2}\left(\frac{U_{i,1}}{b_{n}}\right)\right]+o\left(\frac{1}{n}\right)\nonumber \\
=\  & \frac{1}{nb_{n}^{2}}\E\left[\E\left[\rest{\left(Y_{i}-\frac{1}{2}-h_{0u}\left(0,U_{i,-1}\right)\right)^{2}}U_{i}\right]w_{u}w_{u}^{'}\left(0,U_{i,-1}\right)G^{2}\left(\frac{U_{i,1}}{b_{n}}\right)\right]+o\left(\frac{1}{n}\right)\nonumber \\
=\  & \frac{1}{nb_{n}^{2}}\E\left[\left(\s_{0u}^{2}\left(U\right)+\left(h_{0u}\left(U_{i}\right)-h_{0u}\left(0,U_{i,-1}\right)\right)^{2}\right)w_{u}w_{u}^{'}\left(0,U_{i,-1}\right)G^{2}\left(\frac{U_{i,1}}{b_{n}}\right)\right]+o\left(\frac{1}{n}\right)\nonumber \\
=\  & \frac{1}{nb_{n}^{2}}\int\left(\s_{0u}^{2}\left(u\right)+\left(h_{0u}\left(u\right)-h_{0u}\left(0,u_{-1}\right)\right)^{2}\right)w_{u}w_{u}^{'}\left(0,u_{-1}\right)G^{2}\left(\frac{u_{1}}{b_{n}}\right)p_{u}\left(u\right)du+o\left(\frac{1}{n}\right)\nonumber \\
=\  & \frac{1}{nb_{n}}\int\left(\s_{0u}^{2}\left(b_{n}v_{1},u_{-1}\right)+\left(h_{0u}\left(b_{n}v_{1},u\right)-h_{0u}\left(0,u_{-1}\right)\right)^{2}\right)w_{u}w_{u}^{'}\left(0,u_{-1}\right)G^{2}\left(v_{1}\right)p_{u}\left(b_{n}v_{1},u_{-1}\right)dv_{1}du_{-1}+o\left(\frac{1}{n}\right)\nonumber \\
=\  & \frac{1}{nb_{n}}\int G^{2}\left(v_{1}\right)dv_{1}\cd\int\s_{0u}^{2}\left(0,u_{-1}\right)w_{u}w_{u}^{'}\left(0,u_{-1}\right)p_{u}\left(0,u_{-1}\right)du_{-1}+o\left(\frac{1}{nb_{n}}\right)\nonumber \\
=: & \frac{1}{nb_{n}}\O+o\left(\frac{1}{nb}\right)\label{eq:Var_T42}
\end{align}
\end{proof}
where
\begin{align*}
\s_{0}^{2}\left(x\right) & :=\text{Var}\left(\rest{Y_{i}}X_{i}=x\right)=\frac{1}{4}-h_{0}^{2}\left(x\right)\\
\O & :=\int G^{2}\left(v_{1}\right)dv_{1}\cd\int\s_{0u}^{2}\left(0,u_{-1}\right)w_{u}\left(0,u_{-1}\right)w_{u}\left(0,u_{-1}\right)^{'}p_{u}\left(0,u_{-1}\right)du_{-1}\\
 & =R_{G,2}\cd\int_{x^{'}\t_{0}=0}\s_{0}^{2}\left(x\right)w\left(x\right)w\left(x\right)^{'}p\left(x\right)d{\cal H}^{d-1}\left(x\right)\\
 & =R_{G,2}\cd\int_{x^{'}\t_{0}=0}\frac{\s_{0}^{2}\left(x\right)}{\left(f\left(\rest 0x\right)+1\right)^{2}}xx^{'}p\left(x\right)d{\cal H}^{d-1}\left(x\right).
\end{align*}
and the second last line in \eqref{eq:Var_T42} follows from a first-order
Taylor expansion of 
\[
\left(\s_{0u}^{2}\left(b_{n}v_{1},u_{-1}\right)+\left(h_{0u}\left(b_{n}v_{1},u\right)-h_{0u}\left(0,u_{-1}\right)\right)^{2}\right)p_{u}\left(b_{n}v_{1},u_{-1}\right)
\]
with respect to $b_{n}v_{1}$ around $0$.

Combining \eqref{eq:Kern_Ldh}-\eqref{eq:Var_T42}, we have 
\[
L\left(\hat{h}\right)-L\left(h_{0}\right)=O_{p}\left(\frac{1}{\sqrt{nb_{n}}}+b_{n}^{s}\right),
\]
the rate of which is minimized by setting $b_{n}\sim n^{-\frac{1}{2s+1}}$so
that $\frac{1}{\sqrt{nb_{n}}}\sim b_{n}^{s}$ with 
\[
n^{\frac{1}{2s+1}}\left(L\left(\hat{h}\right)-L\left(h_{0}\right)\right)\dto\cN\left(B_{s},\O\right).
\]
With undersmoothing bandwidth $b_{n}=o\left(n^{-\frac{1}{2s+1}}\right)$,
the asymptotic bias $B_{s}$ becomes asymptotically negligible and
thus
\[
\sqrt{nb_{n}}\left(L\left(\hat{h}\right)-L\left(h_{0}\right)\right)\dto\cN\left({\bf 0},\O\right).
\]

\subsection{Proof Lemma \ref{lem:T4_Series}}

For $\hat{h}$ obtained through linear series regression, we apply
the results in \citet*{chen2015optimal} to the characterization of
the asymptotic behavior of $L\left(\hat{h}\right)-L\left(h\right)$.
Since the results in \citet*{chen2015optimal} are stated for scalar-valued
functionals while our $L\left(h\right)$ here is $d$-dimensional,
we consider arbitrary linear combinations of $L\left(h\right)$ by
working with 
\[
L_{c}\left(h\right)=c^{'}L\left(h\right)
\]
for any $c\in\S^{d-1}$. Clearly, $L_{c}\left(h\right)$ is a scalar-valued
linear functional. 

Write $w\left(x\right):=\frac{1}{f\left(\rest 0x\right)+1}x$. Since
$L_{c}$ is linear, 
\begin{align*}
D_{h}\left[L_{c}\left(h_{0}\right),v\right] & =\int_{x^{'}\t_{0}=0}v\left(x\right)c^{'}w\left(x\right)p\left(x\right)d{\cal H}^{d-1}\left(x\right).
\end{align*}
By \citet*{chen2015optimal}, the sieve representer of $D\left[L_{c}\left(h_{0}\right),v\right]$
on the sieve space ${\cal V}_{K_{n}}$ is given by $v_{cK}^{*}\left(\cd\right)=v_{K}^{*}\left(\cd\right)^{'}c$
with 
\begin{align*}
v_{K}^{*}\left(\cd\right) & =\ol b^{K}\left(\cd\right)^{'}\E\left[\ol b^{K}\left(X_{i}\right)\ol b^{K}\left(X_{i}\right)^{'}\right]^{-1}\int_{x^{'}\t_{0}=0}\ol b^{K}\left(x\right)w\left(x\right)^{'}p\left(x\right)d{\cal H}^{d-1}\left(x\right)
\end{align*}
which ensures that, for any $v=b^{K}\left(\cd\right)^{'}\a_{v_{j}}\in{\cal V}_{K}$,
\begin{align*}
 & \E\left[v_{c}\left(X_{i}\right)v_{jK}^{*}\left(X_{i}\right)\right]\\
=\  & \E\left[\a_{v_{j}}^{'}\ol b^{K}\left(X_{i}\right)\ol b^{K}\left(X_{i}\right)^{'}\E\left[\ol b^{K}\left(X_{i}\right)\ol b^{K}\left(X_{i}\right)^{'}\right]^{-1}\int_{x^{'}\t_{0}=0}\ol b^{K}\left(x\right)c^{'}w\left(x\right)p\left(x\right)d{\cal H}^{d-1}\left(x\right)\right]\\
=\  & \a_{v_{j}}^{'}\E\left[\ol b^{K}\left(X_{i}\right)\ol b^{K}\left(X_{i}\right)^{'}\right]\E\left[\ol b^{K}\left(X_{i}\right)\ol b^{K}\left(X_{i}\right)^{'}\right]^{-1}\int_{x^{'}\t_{0}=0}\ol b^{K}\left(x\right)c^{'}w\left(x\right)p\left(x\right)d{\cal H}^{d-1}\left(x\right)\\
=\  & \int_{x^{'}\t_{0}=0}\left[\a_{v_{j}}^{'}\ol b^{K}\left(x\right)\right]c^{'}w\left(x\right)p\left(x\right)d{\cal H}^{d-1}\left(x\right)\\
=\  & \int_{x^{'}\t_{0}=0}v_{j}\left(x\right)c^{'}w\left(x\right)p\left(x\right)d{\cal H}^{d-1}\left(x\right)\\
=\  & D_{h}\left[L_{c}\left(h_{0}\right),v_{j}\right].
\end{align*}
Furthermore, define $\O_{cK}:=c^{'}\O_{K}c$ where
\[
\O_{K}:=\E\left[\s_{0}^{2}\left(X_{i}\right)v_{K}^{*}\left(X_{i}\right)v_{K}^{*}\left(X_{i}\right)^{'}\right].
\]
Notice that $\O_{K}$ has rank $d-1$ and $c^{'}\O_{k}c=0$ if $c=\t_{0}$.
By \citet*{chen2015optimal}, $\O_{cK}$ and $\norm{v_{cK}^{*}}_{L_{2}\left(X\right)}$
share the same rate of growth, and the convergence rate of $L_{c}\left(\hat{h}\right)$
is driven by the rates of $\norm{v_{cK}^{*}}_{L_{2}\left(X\right)}$
and $\O_{cK}$. Note that Assumptions 1-4 in \citet*{chen2015optimal}
are automatically satisfied in our setting.

We first derive the bound on $\norm{v_{cK_{n}}^{*}}_{L_{2}\left(X\right)}^{2}$
in the following Lemma.
\begin{lem}
\label{lem:LinSeriesReisz} $\norm{v_{cK_{n}}^{*}}_{L_{2}\left(X\right)}^{2}\sim MJ_{n}$
for any $c\neq\t_{0}$.
\end{lem}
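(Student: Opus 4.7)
\medskip

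The plan is to bound $\|v_{cK_n}^*\|_{L_2(X)}^2$ both from above and from below by multiples of $J_n$, using the variational characterization of the sieve Riesz representer together with a trace/Markov interplay. First, because $G_{K_n}:=\E[\ol b^{K_n}(X_i)\ol b^{K_n}(X_i)']$ has eigenvalues bounded away from $0$ and $\infty$ by Assumption~\ref{assu:KernelSeries}(b.i) together with the orthonormal tensor-product construction, one has
\[
\|v_{cK_n}^*\|_{L_2(X)}^2
\;\asymp\;\|\mu_{cK_n}\|_{\ell^2}^2
\;=\;\sup_{f\in\mathcal B_{K_n},\,\|f\|_{L_2(P)}=1}\!\bigl(L_c(f)\bigr)^{2},
\]
where $\mu_{cK_n}:=\int_{x'\t_0=0}\ol b^{K_n}(x)\,c'w(x)p(x)\,d\mathcal H^{d-1}(x)$ and $L_c(f):=\int_{x'\t_0=0} f(x)\,c'w(x)p(x)\,d\mathcal H^{d-1}(x)$. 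So the lemma is the statement that the operator norm of this $(d-1)$-dimensional trace functional on $\mathcal B_{K_n}$ scales like $\sqrt{J_n}$.

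For the upper bound, combine two standard facts about tensor-product sieves. The Bernstein/Markov inequality for the class $\mathcal B_{K_n}$ gives $\|\nabla f\|_{L_2(P)}\le CJ_n\|f\|_{L_2(P)}$ for all $f\in\mathcal B_{K_n}$, so by interpolation $\|f\|_{H^{1/2}(P)}\le C\sqrt{J_n}\,\|f\|_{L_2(P)}$. The classical trace theorem applied to the smooth hyperplane $\Sigma:=\{x:x'\t_0=0\}$ and the bounded weight $c'w(x)p(x)$ yields $|L_c(f)|\le C\|f\|_{H^{1/2}(P)}$, and composition gives $|L_c(f)|\le M\sqrt{J_n}\,\|f\|_{L_2(P)}$. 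Squaring and taking the supremum over the unit ball of $\mathcal B_{K_n}$ produces $\|v_{cK_n}^*\|_{L_2(X)}^2\le MJ_n$.

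For the matching lower bound, I would exhibit a test function concentrated in a slab of width $\asymp J_n^{-1}$ around $\Sigma$. In the rotated coordinates $u=T_{\t_0}'x$ with $u_1=x'\t_0$, take $\tilde f(u)=\sqrt{J_n}\,\psi(J_n u_1)\,\chi(u_{-1})$, where $\psi$ is a fixed smooth nonnegative bump with $\psi(0)=1$ and $\chi$ is a smooth cutoff chosen so that $c'w(T_{\t_0}(0,u_{-1}))$ has a definite sign on $\operatorname{supp}(\chi)$ (such a $\chi$ exists since $c\ne\t_0$, so $c'w(x)=c'x/(f(0|x)+1)$ does not vanish identically on $\Sigma$). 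Then $\|\tilde f\|_{L_2(P)}^2\asymp 1$ while $L_c(\tilde f)\asymp\sqrt{J_n}$, giving the lower bound $\|v_{cK_n}^*\|_{L_2(X)}^2\gtrsim J_n$ provided $\tilde f$ lies in $\mathcal B_{K_n}$ up to an $o(1)$ $L_2$ error. This is exactly what Assumption~\ref{assu:KernelSeries}(b.ii) plus the Markov property of the sieve guarantees: a smooth function of scale $J_n^{-1}$ is approximated by $\mathcal B_{K_n}$ at rate $J_n^{-s}\cdot J_n^{s}=O(1)$ in $L_\infty$, and refining the argument with inverse inequalities gives $o(1)$ relative error.

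The main technical obstacle is the lower bound, and specifically the step of showing that the rotated slab function $\tilde f$ is adequately represented in a sieve whose tensor-product structure is aligned with the original $x$-coordinates rather than with $\t_0$. A cleaner alternative (which is what I would use in the write-up) is to bypass the explicit construction by invoking the sieve Riesz-representation theory of \citet{chen2015optimal}: they characterize the sieve variance of linear functionals of $h_0$ in terms of the $L_2$ norm of the population Riesz representer on $\mathcal B_{K_n}$, and for surface-integral functionals of codimension one this reduces to the rate $\sqrt{J_n}$, independent of $d$. Finally, the restriction $c\ne\t_0$ is not an artifact: when $c=\t_0$ one has $c'w(x)=x'\t_0/(f(0|x)+1)\equiv 0$ on $\Sigma$, so $\mu_{cK_n}\equiv 0$ and the representer is identically zero, which is consistent with the rank-$(d-1)$ structure of $\Omega$ noted in Lemma~\ref{lem:T4_Series}.
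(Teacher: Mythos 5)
Your variational starting point is the same as the paper's: both reduce the problem to bounding $\mu_{cK_n}'G_{K_n}^{-1}\mu_{cK_n}\asymp\norm{\mu_{cK_n}}_{\ell^2}^2$ with $\mu_{cK_n}:=\int_{x'\t_0=0}\ol b^{K_n}(x)c'w(x)p(x)\,d{\cal H}^{d-1}(x)$. From there the routes diverge. The paper's upper bound is purely algebraic: it parameterizes the hyperplane by $x_1=-x_{-1}'\t_{0,-1}/\t_{0,1}$, factors each tensor-product basis function as $b_{k_1}(x_1)\prod_{l\ge2}b_{k_l}(x_l)$, recognizes the surface integral as an inner product $\langle b_{k_1}\psi_c,\prod_{l\ge2}b_{k_l}\rangle_{P_{X_{-1}}}$, and applies Bessel's inequality over $(k_2,\dots,k_d)$ to collapse the $(d-1)$ transverse directions, leaving a sum of $J_n$ bounded terms. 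Your route (inverse estimate $\norm{\Dif f}_{L_2}\lesssim J_n\norm f_{L_2}$, interpolation to $H^{1/2}$, trace theorem on $\Sigma$) is more conceptual and explains \emph{why} the answer is $J_n$ — codimension one costs half a derivative — but it has a genuine endpoint problem: the trace operator is \emph{not} bounded from $H^{1/2}(\R^d)$ into $L_2(\Sigma)$ (the single-layer distribution $c'w\,p\,d{\cal H}^{d-1}|_\Sigma$ lies in $H^{-1/2-\epsilon}$ for every $\epsilon>0$ but not in $H^{-1/2}$), so as written your chain delivers $J_n^{1+\epsilon}$ or $J_n\log J_n$ rather than $J_n$. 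You can rescue the sharp constant by replacing interpolation-plus-trace with a direct Cauchy--Schwarz in the frequency (or mesh-cell) variable normal to $\Sigma$, exploiting that functions in ${\cal B}_{K_n}$ are band-limited at scale $J_n$ — which is, in disguise, exactly what the paper's Bessel step accomplishes.

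On the lower bound, your slab test function $\tilde f(u)=\sqrt{J_n}\,\psi(J_nu_1)\chi(u_{-1})$ is the right object and correctly isolates why $c\neq\t_0$ matters, and you are right that the hard part is showing it is well represented in a sieve whose tensor structure is aligned with the coordinate axes rather than with $\t_0$. However, your proposed escape hatch — ``invoke the sieve Riesz-representation theory of \citet{chen2015optimal}'' — is circular: in this paper, Lemma \ref{lem:LinSeriesReisz} is precisely the input used to verify the rate conditions of \citet{chen2015optimal} in the proof of Lemma \ref{lem:T4_Series}; their theory characterizes the asymptotics \emph{in terms of} $\norm{v_{cK_n}^*}_{L_2(X)}$ but does not compute its order for surface-integral functionals. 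To be fair, the paper's own treatment of the lower bound is also thin (it asserts that ``$\leq$'' becomes ``$\sim$'' when $c\neq\t_0$, essentially claiming near-tightness of Bessel/Parseval as the transverse basis becomes complete), so your explicit test-function construction, if carried through, would actually be a more honest argument than what the paper provides.
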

\begin{proof}
Note that
\end{proof}
\begin{align*}
\norm{v_{cK_{n}}^{*}}_{L_{2}\left(X\right)}^{2}= & \E\left[\left(v_{cK_{n}}^{*}\left(X_{i}\right)\right)^{2}\right]\\
= & \int_{x^{'}\t_{0}=0}\ol b^{'K_{n}}\left(x\right)c^{'}w\left(x\right)p\left(x\right)d{\cal H}^{d-1}\left(x\right)\\
 & \E\left[\ol b^{K_{n}}\left(X_{i}\right)\ol b^{K_{n}}\left(X_{i}\right)^{'}\right]^{-1}\int_{x^{'}\t_{0}=0}b^{K_{n}}\left(x\right)w\left(x\right)^{'}cp\left(x\right)d{\cal H}^{d-1}\left(x\right)\\
\leq & M\sum_{k=1}^{J_{n}}\left[\underset{=:T_{43}}{\underbrace{\int_{x^{'}\t_{0}=0}b_{k}\left(x\right)c^{'}w\left(x\right)p\left(x\right)d{\cal H}^{d-1}\left(x\right)}}\right]^{2}=:T_{44},
\end{align*}
Since $\t_{0}\neq0$, there exists some $j^{*}$ s.t. $\t_{0,j^{*}}\neq0$.
WLOG write $j^{*}=1$, and then 
\[
x^{'}\t_{0}=x_{1}\t_{0,1}+x_{-1}^{'}\t_{0,-1}=0\quad\iff\quad x_{1}=-\frac{x_{-1}^{'}\t_{0,-1}}{\t_{0,1}}.
\]
Hence, writing $\psi_{c}\left(x\right):=c^{'}w\left(x\right)p\left(\rest{x_{1}}x_{-1}\right)$,
we have
\begin{align*}
T_{43}=\  & \int_{x^{'}\t_{0}=0}\ol b_{k}\left(x\right)c^{'}w\left(x\right)p\left(x\right)d{\cal H}^{d-1}\left(x\right)\\
=\  & \int_{x^{'}\t_{0}=0}\ol b_{k}\left(x\right)\psi_{c}\left(x\right)p\left(x_{-1}\right)d{\cal H}^{d-1}\left(x\right)\\
=\  & \int\ol b_{k}\left(-\frac{x_{-1}^{'}\t_{0,-1}}{\t_{0,1}},x_{-1}\right)\psi_{c}\left(-\frac{x_{-1}^{'}\t_{0,-1}}{\t_{0,1}},x_{-1}\right)p\left(x_{-1}\right)dx_{-1}
\end{align*}
Since $\left(\ol b_{k}\right)$ is constructed as tensor products
of univariate $\left(b_{k}\right)$, for any $k\leq K_{n}$, there
exist some $k_{1},...,k_{d}\leq J_{n}$ such that
\[
\ol b_{k}\left(x\right)=b_{k_{1}}\left(x_{1}\right)b_{k_{2}}\left(x_{2}\right)...b_{k_{d}}\left(x_{d}\right).
\]
Hence, we can write 
\begin{align*}
T_{43}=\  & \int\left[b_{k_{1}}\left(-\frac{x_{-1}^{'}\t_{0,-1}}{\t_{0,1}}\right)\psi_{c}\left(-\frac{x_{-1}^{'}\t_{0,-1}}{\t_{0,1}},x_{-1}\right)\right]b_{k_{2}}\left(x_{2}\right)...b_{k_{d}}\left(x_{d}\right)p\left(x_{-1}\right)dx_{-1}\\
=\  & <b_{k_{1}}\psi_{c},b_{k,-1}>_{P_{X_{-1}}}
\end{align*}
where
\[
<m_{1},m_{2}>_{P_{X_{-1}}}:=\int m_{1}\left(x_{-1}\right)m_{2}\left(x_{-1}\right)p\left(x_{-1}\right)dx_{-1}
\]
denotes the natural inner product between functions $m_{1}\left(x_{-1}\right)$
and $m_{2}\left(x_{-1}\right)$ with respect to $P_{X_{-1}}$. 

Since $\left\{ \prod_{l=2}^{d}b_{k_{l}}:k_{l}=1,...,J_{n},l=2,...,d\right\} $
is a basis function for $L_{2}\left(X_{-1}\right)$, we have
\begin{align*}
 & \sum_{k_{2},...,k_{d}}\left(\int_{x^{'}\t_{0}=0}b_{k_{1}}\left(x\right)\prod_{l=2}^{d}b_{k_{l}}\left(x_{j}\right)c^{'}w\left(x\right)p\left(x\right)d{\cal H}^{d-1}\left(x\right)\right)^{2}\\
= & \sum_{k_{2},...,k_{d}}<b_{k_{1}}\psi_{c},\prod_{l=2}^{d}b_{k_{l}}>_{P_{X_{-1}}}^{2}\\
\leq & \norm{b_{k_{1}}\psi_{j}}_{L_{2}\left(X_{-1}\right)}^{2}\text{ by the Bessel's inequality}\\
= & \int b_{k_{1}}^{2}\left(-\frac{x_{-1}^{'}\t_{0,-1}}{\t_{0,1}}\right)\psi_{c}^{2}\left(-\frac{x_{-1}^{'}\t_{0,-1}}{\t_{0,1}},x_{-1}\right)p\left(x_{-1}\right)dx_{-1}\\
= & \int_{x^{'}\t_{0}=0}b_{k_{1}}^{2}\left(x_{1}\right)c^{'}w\left(x\right)w\left(x\right)'cp^{2}\left(\rest{x_{1}}x_{-1}\right)p\left(x_{-1}\right)d{\cal H}^{d-1}\left(x\right)\\
= & c^{'}\int_{x^{'}\t_{0}=0}b_{k_{1}}^{2}\left(x_{1}\right)w\left(x\right)w\left(x\right)'p^{2}\left(\rest{x_{1}}x_{-1}\right)p\left(x\right)d{\cal H}^{d-1}\left(x\right)c\\
\leq & \norm c^{2}M=M
\end{align*}
with ``$\leq$'' replaced by ``$\sim$'' whenever $c\neq\t_{0}$.
Hence, 
\begin{align*}
T_{44} & =M\sum_{k=1}^{J_{n}}\left[\int_{x^{'}\t_{0}=0}b_{k}\left(x\right)c^{'}w\left(x\right)p\left(x\right)d{\cal H}^{d-1}\left(x\right)\right]^{2}\\
 & =M\sum_{k_{1}}\left[\sum_{k_{2},...,k_{d}}\left(\int_{x^{'}\t_{0}=0}b_{k_{1}}\left(x\right)\prod_{l=2}^{d}b_{k_{l}}\left(x_{j}\right)c^{'}w\left(x\right)p\left(x\right)d{\cal H}^{d-1}\left(x\right)\right)^{2}\right]\\
 & \leq M\sum_{k_{1}=1}^{J_{n}}M=M^{2}J_{n}
\end{align*}
and thus, for some $M$,
\[
\norm{v_{cK_{n}}^{*}}_{L_{2}\left(X\right)}^{2}\leq MJ_{n}
\]
with ``$\leq$'' replaced by ``$\sim$'' whenever $c\neq\t_{0}$. 

\subsubsection*{Proof of Lemma \ref{lem:T4_Series}}
\begin{proof}
We first apply Theorem 3.1 of \citet*{chen2015optimal} for $L_{c}\left(\hat{h}\right)$
to $L_{c}\left(h\right)$. Clearly, Assumptions 1(i), 2(i)(ii)(iv)(v),
4(iii) in \citet*{chen2015optimal} are satisfied in the current paper
given Assumptions \ref{assu:Basic} and \ref{assu:KernelSeries}. Assumption
9 in \citet*{chen2015optimal} follows from the sufficient conditions
in Remark 3.1 in \citet*{chen2015optimal} given our Assumption \ref{assu:KernelSeries}(b.ii)(b.iii),
the undersmoothing condition $J_{n}^{-s}=o\left(\sqrt{\frac{J_{n}}{n}}\right)$
implied by \emph{$J_{n}^{-1}=o\left(n^{-\frac{1}{2s+1}}\right)$,
}and Lemma \ref{lem:LinSeriesReisz}. Hence, we have
\[
\frac{\sqrt{n}\left(L_{c}\left(\hat{h}\right)-L_{c}\left(h\right)\right)}{\sqrt{\O_{cK_{n}}}}\dto\cN\left(0,1\right).
\]
Defineing $\O:=\lim_{n\to\infty}\frac{1}{J_{n}}\O_{K_{n}}$ and $\O_{c}:=c^{'}\O c$,
we have
\[
\sqrt{nJ_{n}^{-1}}\left(L_{c}\left(\hat{h}\right)-L_{c}\left(h\right)\right)\dto\cN\left(0,\O_{c}=c^{'}\O c\right).
\]
Since the above holds for any $c\in\S^{d-1}$, we have
\[
\sqrt{nJ_{n}^{-1}}\left(L_{c}\left(\hat{h}\right)-L_{c}\left(h\right)\right)\dto\cN\left(0,\O\right).
\]
\end{proof}

\subsection{\label{subsec:Pf_Rate}Proof of Theorem \ref{thm:Thm_Bin_Rate}}
\begin{proof}
For consistency, we observe that
\[
\sup_{\t\in\T}\sup_{h\in{\cal H}}\left|\P_{n}g_{\t,h}-Pg_{\t,h}\right|=o_{p}\left(1\right).
\]
since $\cG$ is Gilvenko-Cantelli. Moreover, 
\begin{align*}
\sup_{\t\in\T}\sup_{\norm{h-h_{0}}_{\infty}\leq\e}\left|Pg_{\t,h}-Pg_{\t,h_{0}}\right| & \leq P\left(\left|h-h_{0}\right|\right)\leq\e\to0\quad\text{as }\d\to0.
\end{align*}
As $\norm{\hat{h}-h_{0}}_{\infty}=o_{p}\left(1\right)$ and $\hat{h}\in{\cal H}$
with probability approaching 1 by Assumption \ref{assu:FirstStageRate},
we conclude by Theorem 1 of \citet[DvK thereafter]{delsol2020semiparametric}
that $\norm{\hat{\t}-\t_{0}}=o_{p}\left(1\right).$ 

To derive the rate of convergence for $\hat{\t}$, we apply Theorem
2 of DvK by verifying their Conditions B1-B4. We present the results
below using the notation for kernel bandwidth ``$b_{n}$'' to represent
the tuning parameter in the first-stage nonparamatric estimation,
but note that the proof goes through for linear series estimators
as well with $b_{n}$ replaced by $1/J_{n}.$

Recall that $\norm{\hat{h}-h_{0}}_{\infty}=O_{p}\left(a_{n}=\sqrt{\frac{\log n}{nb_{n}^{d}}}+b_{n}^{s}\right)$
and $\norm{\Dif_{x}\hat{h}-\Dif_{x}h_{0}}_{\infty}=O_{p}\left(c_{n}=\sqrt{\frac{\log n}{nb_{n}^{d+2}}}+b_{n}^{s}\right)$.
See, for example, \citet*{hansen2008uniform} for such results on
the sup-norm convergence rate for kernel estimator $\hat{h}$ and
\citet*{chen2015optimal} for linear series $\hat{h}$. To guarantee
that the term $a_{n}c_{n}=o_{p}\left(\norm{\hat{\t}-\t_{0}}\right)$,
we need to ensure
\[
\norm{\hat{h}-h_{0}}_{\infty}\norm{\Dif_{x}\hat{h}-\Dif_{x}h_{0}}_{\infty}=\left(\sqrt{\frac{\log n}{nb_{n}^{d}}}+b_{n}^{s}\right)\left(\sqrt{\frac{\log n}{nb_{n}^{d+2}}}+b_{n}^{s}\right)=o_{p}\left(\frac{1}{\sqrt{nb_{n}}}+b_{n}^{s}\right),
\]
which is satisfied if 
\[
\sqrt{\frac{\log n}{nb_{n}^{d}}}\sqrt{\frac{\log n}{nb_{n}^{d+2}}}=o_{p}\left(\frac{1}{\sqrt{nb_{n}}}\right).
\] This can be ensured by  $\frac{\log n}{nb_{n}^{d}}\cd\sqrt{nb_{n}}\to0$,
or equivalently, $nb_{n}^{2d+1}/\left(\log n\right)^{2}\to\infty$,
as imposed in the statement of the theorem. In addition, to guarantee
that $\norm{\Dif_{x}\hat{h}-\Dif_{x}h_{0}}_{\infty}=o_{p}\left(1\right)$
as in Assumption \ref{assu:FirstStageDeriv}, we need $\sqrt{\frac{\log n}{nb_{n}^{d+2}}}=o_{p}\left(1\right)$,
i.e., $nb_{n}^{d+2}/\log n\to\infty$, which is also implied by $nb_{n}^{2d+1}/\left(\log n\right)^{2}\to\infty$.

B1 directly follows from the consistency of $\hat{\t}$ and the assumption
that$\norm{\hat{h}-h_{0}}_{\infty}=O_{p}\left(a_{n}\right)$.

For their Condition B2, observe that
\[
\GG_{n}\left(g_{\t,h}-g_{\t_{0},h}\right)=\GG_{n}\left(g_{\t,h_{0}}-g_{\t_{0},h_{0}}\right)+\GG_{n}\left(g_{\t,h}-g_{\t_{0},h}-g_{\t,h_{0}}+g_{\t_{0},h_{0}}\right)
\]
and thus, by Lemmas \ref{lem:Term1} and \ref{lem:Term2}, 
\[
P\sup_{\norm{\t-\t_{0}}\leq\d,\norm{h-h_{0}}_{\infty}\leq Ka_{n}}\left|\GG_{n}\left(g_{\t,h}-g_{\t_{0},h}\right)\right|\leq M\left(\sqrt{\d}+\sqrt{a_{n}}\right)\d.
\]
so that $\Phi_{n}\left(\d\right)=\left(\sqrt{\d}+1\right)\d$
in the notation of DvK.

Letting $\norm{\hat{\t}-\t_{0}}:=O_{p}\left(\d_{n}\right)$, we seek
to find the smallest $\d_{n}$ that verifies Condition B3 and B4 in
DvK\footnote{$\d_{n}=r_{n}^{-1}$ in DvK's notation.}. For Condition
B4 to hold, i.e., for
\begin{align*}
\frac{1}{\d_{n}^{2}}\Phi_{n}\left(\d_{n}\right) & =\frac{1}{\d_{n}^{2}}\left(\sqrt{\d_{n}}+1\right)\d_{n}=\d_{n}^{-\frac{1}{2}}+\d_{n}^{-1},
\end{align*}
to be $O\left(\sqrt{n}\right),$we need
\[
\d_{n}^{-\frac{1}{2}}\leq\sqrt{n},\quad \d_{n}^{-1}\leq\sqrt{n},
\]
which is satisfied as long as 
\[
\quad\frac{1}{\sqrt{n}}=o\left(\d_{n}\right).
\]
As a result, B4 is always satisfied provided that $\d_{n}$ is converging
no faster than the standard $n^{-\frac{1}{2}}$ rate.

Setting $\d_{n}\sim\frac{1}{\sqrt{nb_{n}}}+b_{n}^{s}$, we note that
B3 in DvK is satisfied with
\[
W_{n}:=\int_{x^{'}\t_{0}=0}\left[\hat{h}\left(x\right)-h_{0}\left(x\right)\right]\frac{1}{f\left(\rest 0x\right)+1}xp\left(x\right)d{\cal H}^{d-1}\left(x\right)
\]

To make $\d_{n}$ as small as possible, we set $b_{n}^{*}$ to solve
\[
\frac{1}{\sqrt{nb_{n}^{*}}}\sim b_{n}^{*s}\quad\iff b_{n}^{*}\sim n^{-\frac{1}{2s+1}},
\]
which delivers 
\[
\d_{n}^{*}=n^{-\frac{s}{2s+1}}.
\]
Note that we need to ensure that $nb_{n}^{2d+1}/\left(\log n\right)^{2}\to\infty$
holds with $b_{n}^{*}\sim n^{-\frac{1}{2s+1}}$, which is satisfied
if 
\[
1-\frac{2d+1}{2s+1}>0\quad\iff\quad2s+1>2d+1\quad\iff\quad s>d.
\]
\end{proof}

\subsection{Proof of Theorem \ref{thm:AsymNorm}}
\begin{proof}
We apply Theorem 3.2.16 of \citet*{van1996weak} with $\mathbb{M}_{n}\left(\t\right):=\P_{n}g_{\t,\hat{h}}$,
$\mathbb{M}\left(\t\right):=-\left(\t-\t_{0}\right)^{'}V\left(\t-\t_{0}\right)$
and $r_{n}:=\sqrt{nb_{n}}$ (for kernel first-stage estimators) or
$\sqrt{nJ_{n}^{-1}}$ (for linear series first-stage estimators) with
undersmoothing choice of $b_{n}$ or $J_{n}$ so that $a_{n}c_{n}=o_{p}\left(\norm{\hat{\t}-\t_{0}}\right)$.

Plugging Lemmas \ref{lem:Term1}, \ref{lem:Term2}, \ref{lem:Term3},
and \ref{lem:Term4} into the decomposition \eqref{eq:EP_decom},
we have
\begin{align*}
0\leq\  & \mathbb{M}_{n}\left(\tilde{\t}\right)-\mathbb{M}_{n}\left(\t\right)=\P_{n}\left(g_{\tilde{\t}\hat{h}}-g_{\t_{0},\hat{h}}\right)\\
=\  & -\left(\tilde{\t}-\t_{0}\right)^{'}V\left(\tilde{\t}-\t_{0}\right)+Z_{n}^{'}\left(\tilde{\t}-\t_{0}\right)+o_{p}\left(\norm{\tilde{\t}-\t_{0}}^{2}\right)
\end{align*}
with $Z_{n}:=D\left[P\Dif_{\t}g_{\t_{0},h_{0}},\hat{h}-h_{0}\right]=O_{p}\left(r_{n}^{-1}\right)$
and
\[
r_{n}Z_{n}\dto\cN\left({\bf 0},\O\right).
\]
Hence, the key condition in Theorem 3.2.16 of \citet*{van1996weak}
can be verified with
\begin{align*}
 & r_{n}\left(\mathbb{M}_{n}-\mathbb{M}\right)\left(\tilde{\t}_{n}\right)-r_{n}\left(\mathbb{M}_{n}-\mathbb{M}\right)\left(\t_{0}\right)\\
=\  & \left(r_{n}Z_{n}\right)^{'}\left(\tilde{\t}-\t_{0}\right)+o_{p}\left(r_{n}\norm{\tilde{\t}-\t_{0}}^{2}\right)
\end{align*}
for any $\tilde{\t}_{n}$ s.t. $\norm{\tilde{\t}_{n}-\t_{0}}=O_{p}\left(r_{n}^{-1}\right)$.
Hence, 
\[
r_{n}\left(\hat{\t}-\t_{0}\right)=V^{-}r_{n}Z_{n}+o_{p}\left(1\right),
\]
and
\[
r_{n}\left(\hat{\t}-\t_{0}\right)\dto\cN\left({\bf 0},V^{-}\O V^{-}\right).
\]
\end{proof}

\subsection{Proof of Lemma 5}
\begin{proof}
(i) Fix $c\in\mathbb{R}^d$ and write
\[
\Gamma_c(h)
= c' P\nabla_\theta g_{\theta_0,h}
= \int \psi_c(x,h(x),\theta_0)\, p(x)\,dx,
\]
where $\psi_c$ is obtained by differentiating $g_{\theta,h}(x)$ with respect
to $\theta$ at $\theta=\theta_0$ and contracting with $c$. By the definition
of $g_{\theta,h}$ in \eqref{eq:def_g_J}, each component
of $\psi_c$ is a finite linear combination of indicator functions of regions of
the form
\[
\Bigl\{
\min\bigl(h(x),\min_{k\neq j}(-x_k'\theta_0)\bigr)\ge -x_j'\theta_0\ge 0
\Bigr\}
\]
and their analogues for the negative part, multiplied by $x_j$ or $-x_j$. In
particular, as a function of $(h,\theta)$, $\psi_c(x,h(x),\theta)$ is Lipschitz
and piecewise affine in $h(x)$.

Consider a path $h_t := h_0 + t v$ with $t\in\mathbb{R}$ small and $v\in H$.
Then
\[
\frac{\Gamma_c(h_t) - \Gamma_c(h_0)}{t}
= \int \frac{\psi_c(x,h_0(x)+t v(x),\theta_0)
            -\psi_c(x,h_0(x),\theta_0)}{t}\, p(x)\,dx.
\]
For each fixed $x$ such that $x_j'\theta_0\neq0$ for all $j$, the integrand
is eventually constant in $t$ near zero, because the inequalities defining the
regions above do not change sign when $t$ is small. Thus the pointwise
derivative with respect to $h$ exists and
\[
\dot\psi_c(x)
:= \partial_h \psi_c(x,h_0(x),\theta_0)
\]
is nonzero only when some index $j$ is near binding, i.e. when $x_j'\theta_0$
is close to zero and the composite ReLU terms kink. By dominated convergence,
we may differentiate under the integral sign to obtain
\[
D_h\Gamma_c(h_0)[v]
= \int v(x)\, \dot\psi_c(x)\, p(x)\,dx.
\]

To rewrite this as an integral over the submanifolds $\{x:x_j'\theta_0=0\}$,
note that on each branch where a particular index $j$ is the minimum and
binds, the contribution of $\dot\psi_c(x)$ depends only on $x_j$ and the
sign pattern of the remaining indexes. Under the strict MISC condition, the
boundary of the region where index $j$ changes sign is exactly the hyperplane
\[
\{x:x_j'\theta_0=0\}.
\]
Applying the coarea formula (or submanifold integral formula) to the scalar
level-set map $x\mapsto x_j'\theta_0$ then yields
\[
\int v(x)\, \dot\psi_c(x)\, p(x)\,dx
= \sum_{j=1}^J \int_{\{x:x_j'\theta_0=0\}} v(x)\, w_{c,j}(x)\,
   d\mathcal{H}^{d-1}(x),
\]
for some weights $w_{c,j}(x)$ that are continuous and uniformly bounded on
$\{x:x_j'\theta_0=0\}$ by the continuity of $p$, $h_0$, and the MISC
structure. This gives \eqref{eq:MISC-submanifold-deriv}. The representation
for the vector functional $L(h)$ follows by taking $c$ equal to each canonical
basis vector and stacking the resulting derivatives.

(ii) For the quadratic expansion, write
\[
Q(\theta) = P g_{\theta,h_0}.
\]
Since $g_{\theta,h_0}$ is Lipschitz in $\theta$ and piecewise affine, $Q$ is
twice differentiable at $\theta_0$ and we may apply a second-order Taylor
expansion around $\theta_0$:
\[
Q(\theta) - Q(\theta_0)
= (\theta-\theta_0)' \partial_\theta Q(\theta_0)
 + \tfrac12 (\theta-\theta_0)' \partial^2_{\theta\theta} Q(\tilde\theta)
   (\theta-\theta_0),
\]
for some $\tilde\theta$ on the segment between $\theta$ and $\theta_0$. By
construction of the RMS criterion and the MISC sign-alignment, $\theta_0$ is
a maximizer of $Q$ on the unit sphere, so the gradient vanishes:
$\partial_\theta Q(\theta_0)=0$.

It remains to characterize the Hessian. Differentiating $Q(\theta)=Pg_{\theta,h_0}$
twice with respect to $\theta$ and using the same type of argument as in part
(i), one finds that the second derivative at $\theta_0$ can be written as
\[
\partial^2_{\theta\theta} Q(\theta_0)
= -2 V,
\]
where $V$ is given by \eqref{eq:MISC-V-rep}. The key step is that the
second derivative of the composite ReLU terms is supported only on the
hyperplanes where the inner arguments kink, namely $\{x:x_j'\theta_0=0\}$,
and that, on those sets, the curvature in the direction $\theta-\theta_0$ is
proportional to $x_j x_j'$ with a nonnegative weight $m_j(x,\theta_0)$
capturing the local density and slope of the model primitives. Integrating
these contributions over the hyperplanes yields \eqref{eq:MISC-V-rep}.

Substituting back into the Taylor expansion gives
\[
Q(\theta) - Q(\theta_0)
= -(\theta-\theta_0)' V (\theta-\theta_0)
  + r_Q(\theta),
\]
where the remainder satisfies $r_Q(\theta)=o(\|\theta-\theta_0\|^2)$ as
$\theta\to\theta_0$ by continuity of the second derivative in a neighborhood
of $\theta_0$. The fact that $V$ is positive semidefinite and has rank $d-1$
with $V\theta_0=0$ follows from the support properties of $x_j$ on the
hyperplanes and the scale normalization of $\theta_0$. This yields
\eqref{eq:MISC-quadratic}–\eqref{eq:MISC-V-rep}.
\end{proof}

\subsection{Proof of Theorem \ref{thm:Asymp-MISC}}

We follow the structure of the proofs for the results in Section \ref{sec:BinChoice}.
Recall that $Q:=\E\left[g_{\t,h}\left(X_{i}\right)\right]$ with $g_{\t,h}:=g_{+,\t,h}+g_{-,\t,h}$
and 
\begin{align*}
g_{+,\t,h}\left(x\right):=\left[h\left(x\right)-\left[\min_{j=1,...,J}\left(-x_{j}^{'}\t\right)\right]_{+}\right]_{+},\quad & g_{-,\t,h}\left(x\right):=\left[-h\left(x\right)-\left[\min_{j=1,...,J}\left(x_{j}^{'}\t\right)\right]_{+}\right]_{+}.
\end{align*}
Given a first-stage nonparametric estimator $\hat{h}$ of $h_{0}$,
the sample criterion is constructed as 
\[
\hat{Q}\left(\t\right)=\frac{1}{n}\sum_{i=1}^{n}g_{\t,\hat{h}}\left(X_{i}\right)\equiv\P_{n}g_{\t,\hat{h}}.
\]

Again, consider the following decomposition 

\begin{align}
\P_{n}\left(g_{\hat{\t},\hat{h}}-g_{\t_{0},\hat{h}}\right)= & \underset{T_{1}}{\underbrace{\frac{1}{\sqrt{n}}\GG_{n}\left(g_{\hat{\t},h_{0}}-g_{\t_{0},h_{0}}\right)}}+\underset{T_{2}}{\underbrace{\frac{1}{\sqrt{n}}\GG_{n}\left(g_{\hat{\t},\hat{h}}-g_{\t_{0},\hat{h}}-g_{\hat{\t},h_{0}}+g_{\t_{0},h_{0}}\right)}}\nonumber \\
 & +\underset{T_{3}}{\underbrace{P\left(g_{\hat{\t},h_{0}}-g_{\t_{0},h_{0}}\right)}}+\underset{T_{4}}{\underbrace{P\left(g_{\hat{\t},\hat{h}}-g_{\t_{0},\hat{h}}-g_{\hat{\t},h_{0}}+g_{\t_{0},h_{0}}\right)}}\label{eq:EP_decom-1}
\end{align}

\begin{lem}
\label{lem:Term1-MISC} For some constant $M>0$,
\begin{equation}
P\sup_{\norm{\t-\t_{0}}\leq\d}\left|\GG_{n}\left(g_{\t,h_{0}}-g_{\t_{0},h_{0}}\right)\right|\leq M\d^{\frac{3}{2}}.\label{eq:Max1-1}
\end{equation}
\end{lem}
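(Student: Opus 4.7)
The plan is to extend the single-index argument of Lemma~\ref{lem:Term1} to the $J$-index MISC setting by constructing a new envelope function $\ol g_{\d}$ that is a \emph{sum} of narrow-band indicators, one for each hyperplane $\{x:x_{j}^{'}\t_{0}=0\}$, and then applying the same maximal inequality (Theorem~2.14.1 of \citet*{van1996weak}) to control $\sup_{g\in\cG_{1,\d}}|\GG_{n}g|$. The two ingredients I need are (i) a pointwise Lipschitz bound on $g_{\t,h_{0}}-g_{\t_{0},h_{0}}$ of order $\norm x\d$, and (ii) a support characterization showing that this difference is nonzero only on a union of narrow bands of total probability $O(\d)$.

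First I would show the Lipschitz bound. Since $t\mapsto[t]_{+}$ and $t\mapsto\min_{j}t_{j}$ are both 1-Lipschitz, the chain rule for ReLU compositions gives
\[
\left|g_{+,\t,h_{0}}\left(x\right)-g_{+,\t_{0},h_{0}}\left(x\right)\right|\leq\left|\min_{j}\left(-x_{j}^{'}\t\right)_{+}-\min_{j}\left(-x_{j}^{'}\t_{0}\right)_{+}\right|\leq\max_{j}\left|x_{j}^{'}\left(\t-\t_{0}\right)\right|\leq M\d,
\]
using compactness of $\cX$; the same bound holds for $g_{-}$. Next I would identify when the difference is nonzero. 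By strict MISC, whenever $h_{0}\left(x\right)>0$ we cannot have $x_{j}^{'}\t_{0}<0$ for all $j$, so $[\min_{j}(-x_{j}^{'}\t_{0})]_{+}=0$ and $g_{+,\t_{0},h_{0}}\left(x\right)=h_{0}\left(x\right)$. For $\t$ close to $\t_{0}$, the positive-part piece differs from $h_{0}$ only if $[\min_{j}(-x_{j}^{'}\t)]_{+}>0$, i.e.\ $x_{j}^{'}\t<0$ for every $j$; combined with the existence of some $j^{*}$ with $x_{j^{*}}^{'}\t_{0}\geq0$, this forces $0\leq x_{j^{*}}^{'}\t_{0}\leq\norm{x_{j^{*}}}\d$. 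A symmetric argument applies to $g_{-}$ when $h_{0}\left(x\right)<0$. Together they give the envelope
\[
\left|g_{\t,h_{0}}\left(x\right)-g_{\t_{0},h_{0}}\left(x\right)\right|\leq\ol g_{\d}\left(x\right):=M\d\cd\sum_{j=1}^{J}\ind\left\{ \left|x_{j}^{'}\t_{0}\right|\leq M\norm{x_{j}}\d\right\} .
\]

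The second main step is to bound $P\ol g_{\d}^{2}$. By Cauchy--Schwarz on the $J$ summands and the marginal density condition on each $X_{ij}$ implied by Assumption~\ref{assu:Basic}(g) (each $X_{ij}$ inherits a bounded-above density on its support so that $\P(|X_{ij}^{'}\t_{0}|\leq M\d)=O(\d)$ uniformly in $j$), one obtains
\[
P\ol g_{\d}^{2}\leq M^{2}\d^{2}\cd J\sum_{j=1}^{J}\P\left(\left|X_{ij}^{'}\t_{0}\right|\leq M\d\right)=O\left(\d^{3}\right).
\]
Finally, since $\cG_{1,\d}\subseteq\cG$ (a fixed Donsker class whose uniform covering integral $J_{1}$ is finite by the Lipschitz/piecewise-linear structure of the ReLU-min compositions in $\t$), Theorem~2.14.1 of \citet*{van1996weak} yields
\[
P\sup_{\norm{\t-\t_{0}}\leq\d}\left|\GG_{n}\left(g_{\t,h_{0}}-g_{\t_{0},h_{0}}\right)\right|\leq J_{1}\sqrt{P\ol g_{\d}^{2}}\leq M\d^{\frac{3}{2}}.
\]

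The main obstacle is step (ii), the support characterization: with $J>1$, the inner $\min$ makes the active constraint change from one realization to another, so one must argue that \emph{any} sign flip of the composite object forces \emph{some} individual index $x_{j}^{'}\t_{0}$ to lie in an $O(\d)$-band. The strict MISC condition is exactly what rules out the ``mixed-sign'' interior region from contributing, and makes the envelope collapse to a union of $J$ narrow strips rather than something wider. A secondary technical point is that the marginal density of $X_{ij}^{'}\t_{0}$ must be bounded on each hyperplane; this requires only a mild strengthening of Assumption~\ref{assu:Basic}(g) to each marginal $X_{ij}$, which is natural in the MISC framework of Section~\ref{sec:MISC} and already implicit in Assumption~\ref{ass:MISC-CG}.
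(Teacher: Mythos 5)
Your proposal is correct and follows essentially the same route as the paper's proof: the MISC contraposition to extract an index $j^{*}$ with $x_{j^{*}}^{'}\t_{0}\geq0$, the envelope $M\d\sum_{j=1}^{J}\ind\{|x_{j}^{'}\t_{0}|\leq M\norm{x_{j}}\d\}$ built from the $J$ narrow bands, the resulting $O(\d^{3})$ bound on the squared envelope, and the maximal inequality of \citet*{van1996weak} (Theorem~2.14.1). The only cosmetic difference is that you obtain the $M\d$ magnitude bound via the global Lipschitz property of the ReLU--$\min$ composition, whereas the paper bounds the difference directly by $-x_{j^{*}}^{'}\t\leq M\norm{\t-\t_{0}}$; both give the same order.
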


\begin{lem}
\label{lem:Term2-MISC} Under Assumptions \ref{assu:Basic}-\ref{assu:FirstStageRate},
for some constant $M>0$, 
\begin{equation}
P\sup_{\t\in\T,h\in{\cal H}:\norm{\t-\t_{0}}\leq\d,\norm{h-h_{0}}_{\infty}\leq Ka_{n}}\left|\GG_{n}\left(g_{\t,h}-g_{\t_{0},h}-g_{\t,h_{0}}+g_{\t_{0},h_{0}}\right)\right|\leq M\d.\label{eq:Max2-1}
\end{equation}
\end{lem}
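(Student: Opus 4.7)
The strategy is to adapt the argument used in the binary-choice proof of Lemma \ref{lem:Term2} (Section~\ref{subsec:Pf_Term2}) to the MISC construction. The key observation is that, although the composite ReLU expression now involves an inner $\min_{j}$, the map $\t\mapsto \min_{j}(-x_j'\t)_+$ is still Lipschitz in $\t$, so the full argument carries over once the analog of the pointwise bound \eqref{eq:DiD_2xdt} is established.

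First, I would re-derive the Lipschitz-in-$\t$ bound. Since the ReLU is $1$-Lipschitz, each $\t\mapsto (-x_j'\t)_+$ is $\norm{x_j}$-Lipschitz, and the pointwise minimum of Lipschitz functions is Lipschitz with constant $\max_j\norm{x_j}$. Applying the outer ReLU in $g_{+,\t,h}$, which is again $1$-Lipschitz in its argument, yields
\[
\left| g_{+,\t,h}(x)-g_{+,\t_0,h}(x)\right|\leq \max_{1\leq j\leq J}\norm{x_j}\cd\norm{\t-\t_0},
\]
uniformly in $h$, and the same bound holds for $g_{-,\t,h}$. By the triangle inequality together with the compactness of ${\cal X}$ (Assumption~\ref{assu:Basic}(f)), this gives the MISC analog of \eqref{eq:DiD_2xdt}:
\[
\left| g_{\t,h}(x)-g_{\t_0,h}(x)-g_{\t,h_0}(x)+g_{\t_0,h_0}(x)\right| \leq 2M\d,
\]
for all $h\in{\cal H}$ with $\norm{h-h_0}_\infty\leq K a_n$ and all $\t$ with $\norm{\t-\t_0}\leq\d$.

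Second, I would introduce the class ${\cal G}_{2,\d}:=\{g_{\t,h}-g_{\t_0,h}-g_{\t,h_0}+g_{\t_0,h_0}:\norm{\t-\t_0}\leq\d,\ h\in{\cal H},\ \norm{h-h_0}_\infty\leq Ka_n\}$, which by the display above admits the constant envelope $G_{2,\d}(x)=2M\d$ and hence satisfies $P G_{2,\d}^2\leq 4M^2\d^2$. The uniform entropy integral $J_{2,\d}:=\int_0^1\sqrt{1+\log \mathscr{N}(\e,{\cal G}_{2,\d},L_2(P))}\,d\e$ is controlled by exploiting the joint Lipschitz structure of $(\t,h)\mapsto g_{\pm,\t,h}(\cd)$: an $L_\infty$-cover of radius $\e$ on $\{\t:\norm{\t-\t_0}\leq\d\}\times\{h\in{\cal H}:\norm{h-h_0}_\infty\leq Ka_n\}$ induces an $L_2(P)$-cover of ${\cal G}_{2,\d}$ of radius $O(\e)$. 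Since $\Theta$ is finite-dimensional and ${\cal H}$ is a Hölder ball of smoothness order $d+1$ on a compact support (Proposition~\ref{prop:ID_Sobolev}(ii)), this yields $J_{2,\d}\leq J<\infty$ uniformly in $\d$ and $n$ by standard bracketing bounds (e.g., Theorems~2.7.1 and~2.7.11 of van der Vaart and Wellner, 1996). Applying Theorem~2.14.1 of the same reference then delivers
\[
P\sup_{g\in{\cal G}_{2,\d}}\left|\GG_n(g)\right| \leq J_{2,\d}\sqrt{P G_{2,\d}^2}\leq M\d,
\]
which is the claimed inequality.

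The main obstacle is the uniform entropy bound on ${\cal G}_{2,\d}$ when ${\cal H}$ is infinite-dimensional, since the pointwise bound alone is not enough to invoke the maximal inequality. This is handled by exploiting the Lipschitz dependence of $g_{\pm,\t,h}$ on $(\t,h)$: the composite ReLU structure, and in particular the $1$-Lipschitzness of the outer ReLU in $(h,\min_j(\cd)_+)$, ensures that differences in $h$ and $\t$ translate multiplicatively into $L_2(P)$ distances on ${\cal G}_{2,\d}$, so that the entropy of ${\cal G}_{2,\d}$ inherits the polynomial behavior of the Hölder class ${\cal H}$. The introduction of the inner $\min_j$ in the MISC formulation does not disrupt this chain, because the min operator is $1$-Lipschitz and preserves both the uniform boundedness and the entropy bounds of its argument class.
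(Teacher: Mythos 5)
Your proof is correct and follows essentially the same route as the paper's: both arguments bound the double difference pointwise by a constant multiple of $\norm{\t-\t_{0}}$ using the $1$-Lipschitzness of the outer ReLU and the fact that $\left|\min_{j}c_{j}-\min_{j}c_{j}^{'}\right|\leq\max_{j}\left|c_{j}-c_{j}^{'}\right|$, then equip the class ${\cal G}_{2,\d}$ with the resulting constant envelope of order $\d$ and invoke Theorem~2.14.1 of van der Vaart and Wellner. The only difference is that you spell out the finiteness of the uniform entropy integral via the H\"older-ball structure of ${\cal H}$, a step the paper leaves implicit.
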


~
We now present the main proof based on the lemmas above.
~
\begin{proof}
(a) For any fixed $c\in\mathbb{S}^{d-1}$, Lemma~\ref{lem:MISC-curv-L}(ii) and
Assumption~\ref{ass:MISC-CG}(c) imply that the scalar functional
$\Gamma_c(h)=c'P\nabla_\theta g_{\theta_0,h}$ satisfies the linearization
assumptions of Chen and Gao (2025, Assumptions~9–11) with submanifold
dimension $m=d-1$. Together with the sieve and smoothness conditions in
Assumption~\ref{ass:MISC-CG}, Theorems~2 and~3 of Chen and Gao (2025)
then yield
\[
c_n\big(\Gamma_c(\hat h) - \Gamma_c(h_0)\big)
\;\xrightarrow{d}\;
\mathcal{N}(0,\sigma_c^2),
\]
for some finite variance $\sigma_c^2$, and
$\Gamma_c(\hat h) - \Gamma_c(h_0) = O_p(c_n^{-1})$. Since this holds for all
$c$ and $L(h)$ is obtained by stacking such scalar functionals, we obtain the
rate and multivariate CLT in \eqref{eq:Lhat-CLT} with some covariance matrix
$\Omega$.

(b) By Lemma \ref{lem:Term1-MISC},
\[
T_1 = \frac{1}{\sqrt{n}}\GG_n\bigl(g_{\hat\t,h_0}-g_{\t_0,h_0}\bigr)
= o_p\bigl(\|\hat\t-\t_0\|\bigr).
\]
By Lemma \ref{lem:Term2-MISC},
\[
T_2 = o_p\bigl(\|\hat\t-\t_0\|\bigr)
\qquad\text{whenever } c_n\|\hat\t-\t_0\|\to\infty.
\]
By Lemma \ref{lem:MISC-curv-L},  we have the local quadratic expansion
\[
T_3 = P(g_{\hat\t,h_0}-g_{\t_0,h_0})
= -(\hat\t-\t_0)'V(\hat\t-\t_0) + o_p(\|\hat\t-\t_0\|^2),
\]
where $V$ is symmetric positive semidefinite of rank $d-1$ and $V\t_0=0$.

Finally, by Lemma \ref{lem:MISC-curv-L}, Assumption \ref{ass:MISC-CG}, 
\[
T_4
= (\hat\t-\t_0)'L(\hat h-h_0) + o_p\bigl(\|\hat\t-\t_0\|c_n^{-1}\bigr),
\]
and, by Theorem 3 of \cite{chen2025semiparametric},
\[
c_n L(\hat h-h_0) \dto \mathcal{N}(A,\Omega).
\]

Insert the bounds for $T_1$--$T_4$ into \eqref{eq:EP_decom-1}, we have
\[
0
\le -(\hat\t-\t_0)'V(\hat\t-\t_0)
+ (\hat\t-\t_0)'L(\hat h-h_0)
+ o_p\!\bigl(\|\hat\t-\t_0\|^2 + \|\hat\t-\t_0\|c_n^{-1}\bigr).
\]
By part (a), we have $L(\hat h-h_0)=O_p(c_n^{-1})$, so the second term is $O_p(\|\hat\t-\t_0\|c_n^{-1})$.
Since $V$ is positive definite on $\t_0^\perp$, the display implies
\[
\|\hat\t-\t_0\|^2
\lesssim_p \|\hat\t-\t_0\|c_n^{-1} + o_p\!\bigl(\|\hat\t-\t_0\|^2 + \|\hat\t-\t_0\|c_n^{-1}\bigr),
\]
which implies $\|\hat\t-\t_0\| = O_p(c_n^{-1})$.

\medskip

Using $\|\hat\t-\t_0\| = O_p(c_n^{-1})$ and plugging this rate back into \eqref{eq:EP_decom-1} yields
\[
0
= -(\hat\t-\t_0)'V(\hat\t-\t_0)
+ (\hat\t-\t_0)'L(\hat h-h_0)
+ o_p(c_n^{-2}).
\]
Rearranging,
\[
(\hat\t-\t_0)'V(\hat\t-\t_0)
= (\hat\t-\t_0)'L(\hat h-h_0)
+ o_p(c_n^{-2}).
\]
Since $\hat\t-\t_0 = O_p(c_n^{-1})$ and $V$ is nonsingular on $\t_0^\perp$, the last display implies
\[
V(\hat\t-\t_0) = L(\hat h-h_0) + o_p(c_n^{-1}),
\]
and hence
\begin{equation}
\hat\t-\t_0 = V^{-}L(\hat h-h_0) + o_p(c_n^{-1}),
\label{eq:AsymL-linear}
\end{equation}
where $V^{-}$ denotes the Moore–Penrose inverse of $V$.

\medskip

Multiplying \eqref{eq:AsymL-linear} by $c_n$, we have
\[
c_n(\hat\t-\t_0)
= V^{-}\bigl(c_n L(\hat h-h_0)\bigr) + o_p(1)
\dto \mathcal{N}\bigl(0,V^{-}\Omega V^{-}\bigr).
\]

\end{proof}

\subsection{Proof of Lemma \ref{lem:Term1-MISC}}
\begin{proof}
Observe that $g_{+,\t_{0},h_{0}}\left(x\right)=\left[h_{0}\left(x\right)-\left[\min_{j}\left(-x_{j}^{'}\t\right)\right]_{+}\right]_{+}\equiv\left[h_{0}\left(x\right)\right]_{+}$
and
\[
g_{+,\t,h_{0}}\left(x\right)-g_{+,\t_{0},h_{0}}\left(x\right)=\left[h_{0}\left(x\right)-\left[\min_{j}\left(-x_{j}^{'}\t\right)\right]_{+}\right]_{+}-\left[h_{0}\left(x\right)\right]_{+}
\]
which is nonzero only if $h_{0}\left(x\right)>0$ while $x_{j}^{'}\t<0$
for all $j$.

Now, consider any $x$ s.t. $g_{+,\t,h_{0}}\left(x\right)\neq g_{+,\t_{0},h_{0}}\left(x\right)$.
Since $h_{0}\left(x\right)>0$, by the contraposition of the MISC
condition \eqref{eq:MISC} we know that there exists some $j^{*}$
such that $x_{j^{*}}^{'}\t_{0}>0$. Then, we have
\begin{align*}
x_{j^{*}}^{'}\t_{0}>0>x_{j^{*}}^{'}\t & =x_{j^{*}}^{'}\t_{0}+x_{j^{*}}^{'}\left(\t-\t_{0}\right)>x_{j^{*}}^{'}\t_{0}-\norm{x_{j^{*}}}\norm{\t-\t_{0}},
\end{align*}
and hence
\begin{align}
0 & <x_{j^{*}}^{'}\t_{0}<\norm{x_{j^{*}}}\norm{\t-\t_{0}}\label{eq:xjstar_t_sandwich}
\end{align}
and
\[
0<-x_{j^{*}}^{'}\t<\norm{x_{j^{*}}}\norm{\t-\t_{0}}\leq M\norm{\t-\t_{0}},
\]
which further implies that
\begin{equation}
\left|g_{+,\t,h_{0}}\left(x\right)-g_{+,\t_{0},h_{0}}\left(x\right)\right|\leq\left[\min_{j}\left(-x_{j}^{'}\t\right)\right]_{+}\leq-x_{j^{*}}^{'}\t\leq M\norm{\t-\t_{0}}.\label{eq:g_diff_small}
\end{equation}

Now, for any $x\in{\cal X},$by \eqref{eq:xjstar_t_sandwich} we have

\[
\ind\left\{ g_{+,\t,h_{0}}\left(x\right)-g_{+,\t_{0},h_{0}}\left(x\right)\neq0\right\} \leq\sum_{j=1}^{J}\ind\left\{ 0<x_{j}^{'}\t_{0}<\norm{x_{j}}\norm{\t-\t_{0}}\right\} .
\]
Combining the above with \eqref{eq:g_diff_small}, we have
\begin{align*}
\left|g_{+,\t,h_{0}}\left(x\right)-g_{+,\t_{0},h_{0}}\left(x\right)\right| & \leq\ind\left\{ g_{+,\t,h_{0}}\left(x\right)-g_{+,\t_{0},h_{0}}\left(x\right)\neq0\right\} \left|g_{+,\t,h_{0}}\left(x\right)-g_{+,\t_{0},h_{0}}\left(x\right)\right|\\
 & \leq\sum_{j=1}^{J}\ind\left\{ 0<x_{j}^{'}\t_{0}<\norm{x_{j}}\norm{\t-\t_{0}}\right\} M\norm{\t-\t_{0}}
\end{align*}

For $g_{-,\t,h_{0}}$, similar arguments as above give 
\begin{align*}
\left|g_{+,\t,h_{0}}\left(x\right)-g_{+,\t_{0},h_{0}}\left(x\right)\right| & \leq\sum_{j=1}^{J}\ind\left\{ -\norm{x_{j}}\norm{\t-\t_{0}}<0<x_{j}^{'}\t_{0}\right\} M\norm{\t-\t_{0}}
\end{align*}
and hence
\begin{align*}
\left|g_{\t,h_{0}}\left(x\right)-g_{\t_{0},h_{0}}\left(x\right)\right| & \leq M\sum_{j=1}^{J}\ind\left\{ \left|x_{j}^{'}\t_{0}\right|\leq\norm{x_{j}}\norm{\t-\t_{0}}\right\} \norm{\t-\t_{0}}
\end{align*}

Define ${\cal \cG}_{1,\d}:=\left\{ g_{\t,h_{0}}-g_{\t_{0},h_{0}}:\ \norm{\t-\t_{0}}\leq\d\right\} .$
By the arguments above, ${\cal \cG}_{1,\d}$ has an envelope $G_{1,\d}$
given by
\[
G_{1,\d}:=M\d\sum_{j=1}^{J}\ind\left\{ \left|x_{j}^{'}\t_{0}\right|\leq\norm{x_{j}}\norm{\t-\t_{0}}\right\} 
\]
with
\begin{align*}
PG_{1,\d}^{2} & =M^{2}\d^{2}\E\left[\left(\sum_{j=1}^{J}\ind\left\{ \left|x_{j}^{'}\t_{0}\right|\leq\norm{x_{j}}\norm{\t-\t_{0}}\right\} \right)^{2}\right].\\
 & \leq M\d^{2}J\sum_{j=1}^{J}\P\left(\left|\frac{X_{ij}^{'}}{\norm{X_{ij}}}\t_{0}\right|\leq\d\right)\leq M\d^{2}J\sum_{j=1}^{J}M\d\leq M\d^{3}
\end{align*}
Now, since ${\cal \cG}_{1,\d}\subseteq\cG$, we have $\mathscr{N}\left(\e,\cG_{1,\d},L_{2}\left(P\right)\right)\leq\mathscr{N}\left(\e,\cG,L_{2}\left(P\right)\right)$
\[
J_{1,\d}:=\int_{0}^{1}\sqrt{1+\log\mathscr{N}\left(\e,\cG_{1,},L_{2}\left(P\right)\right)}d\e\leq J<\infty.
\]
Then, by VW Theorem 2.14.1, we have
\[
P\sup_{g\in{\cal \cG}_{1,\d}}\left|\GG_{n}\left(g\right)\right|\leq J_{1,\d}\sqrt{PG_{1,\d}^{2}}\leq J_{1}C\d^{\frac{3}{2}}=C_{1}\d^{\frac{3}{2}}.
\]
\end{proof}

\subsection{Proof of Lemma \ref{lem:Term2-MISC}}

\begin{proof}
Observe first that
\[
\left|g_{+,\t,h}\left(x\right)-g_{+,\t_{0},h}\left(x\right)-g_{+,\t,h_{0}}\left(x\right)+g_{+,\t_{0},h_{0}}\left(x\right)\right|\leq2\left|\min_{j}\left(-x_{j}^{'}\t\right)-\min_{j}\left(-x_{j}^{'}\t_{0}\right)\right|
\]
Then observe that, for any $\left(c_{1},...,c_{J}\right)$ and $\left(c_{1}^{'},...,c_{J}^{'}\right)$,
we have
\[
\left|\min_{j}c_{j}-\min_{j}c_{j}^{'}\right|\leq\max_{j}\left|c_{j}-c_{j}^{'}\right|.
\]
Hence, 
\[
\left|g_{+,\t,h}\left(x\right)-g_{+,\t_{0},h}\left(x\right)-g_{+,\t,h_{0}}\left(x\right)+g_{+,\t_{0},h_{0}}\left(x\right)\right|\leq2\max_{j}\left|x_{j}^{'}\left(\t-\t_{0}\right)\right|\leq M\norm{\t-\t_{0}}.
\]
The similar also holds for $g_{-}$ and $g$.

Define ${\cal \cG}_{2,\d}:=\left\{ g_{\t,h}-g_{\t_{0},h}-g_{\t,h_{0}}+g_{\t_{0},h_{0}}:\ \norm{\t-\t_{0}}\leq\d,h\in{\cal H}\right\} .$
By the arguments above, ${\cal \cG}_{2,\d}$ has an envelope $G_{2,\d}$
given by $G_{2,\d}:=M\d$ with 
\[
PG_{2,n,\d}^{2}=M^{2}\d^{2}.
\]
By VW Theorem 2.14.1, we have
\[
P\sup_{g\in{\cal \cG}_{2,\d}}\norm{\GG_{n}\left(g\right)}\leq J_{2,\d}\sqrt{PG_{2,\d}^{2}}\leq M\d.
\]
\end{proof}

\end{document}